\def\makeheadbox{{%
\hbox to0pt{\vbox{\baselineskip=10dd\hrule\hbox
to\hsize{\vrule\kern3pt\vbox{\kern3pt
\hbox{\bfseries\@journalname\ 2015}
\hbox{DOI 10.1007/s00236-015-0221-6 }
\kern3pt}\hfil\kern3pt\vrule}\hrule}%
\hss}}}
\DeclareSymbolFont{frenchscript}{OMS}{ztmcm}{m}{n}
\DeclareMathSymbol{\A}{\mathord}{frenchscript}{65}   % set of CCS action names
\DeclareMathSymbol{\B}{\mathord}{frenchscript}{66}   % bisimulation
\DeclareMathSymbol{\FS}{\mathord}{frenchscript}{70}  % fairness specification
\DeclareMathSymbol{\HC}{\mathord}{frenchscript}{72} % set of CCS handshake communications
\DeclareMathSymbol{\K}{\mathord}{frenchscript}{75}   % set of agent identifiers
\DeclareMathSymbol{\OA}{\mathord}{frenchscript}{79}  % set of output actions
\DeclareMathSymbol{\pow}{\mathord}{frenchscript}{80} % powerset
\newcommand{\denote}[1]{[\hspace{-1.4pt}[#1]\hspace{-1.4pt}]}  % denotation
\newcommand{\NN}{% natural numbers
    \ensuremath{%
        \mathop{\rm I\mkern-2.5mu N}%
        \nolimits%
    }%
}
\newcommand{\T}{{\rm T}}                             % set of closed terms
\newcommand{\SC}{{\rm G}}                            % set of places
\newcommand{\E}{P}                                   % typical process expression
\newcommand{\F}{Q}                                   % typical process expression
\newcommand{\G}{\cal G}
\newcommand{\AI}{A}                                  % typical agent identifier
\def\moverlay{\mathpalette\mov@rlay}
\def\mov@rlay#1#2{\leavevmode\vtop{%
   \baselineskip\z@skip \lineskiplimit-\maxdimen
   \ialign{\hfil$\m@th#1##$\hfil\cr#2\crcr}}}
\newcommand{\charfusion}[3][\mathord]{
    #1{\ifx#1\mathop\vphantom{#2}\fi
        \mathpalette\mov@rlay{#2\cr#3}
      }
    \ifx#1\mathop\expandafter\displaylimits\fi}
\newcommand{\dcup}{\charfusion[\mathbin]{\cup}{\mbox{\Large$\cdot$}}}
\newcommand{\plat}[1]{\raisebox{0pt}[0pt][0pt]{#1}}  % no vertical space
\def\precond#1{{\vphantom{#1}}^\bullet #1}
\def\postcond#1{{#1}^\bullet}
\providecommand{\urlalt}[2]{\href{#1}{#2}}
\providecommand{\doi}[1]{doi:\urlalt{http://dx.doi.org/#1}{#1}}
\def\comesfrom{\@transition\leftarrowfill}
\def\goesto{\@transition\rightarrowfill}
\def\ngoesto{\@transition\nrightarrowfill}
\def\Goesto{\@transition\Rightarrowfill}
\def\nGoesto{\@transition\nRightarrowfill}
\def\xmapsto{\@transition\mapstofill}
\def\nxmapsto{\@transition\nmapstofill}
\def\@transition#1{\@@transition{#1}}
\newbox\@transbox
\newbox\@arrowbox
\newbox\@downbox
\def\@@transition#1#2%
\wd\@transbox{#1}
\@transbox\hbox{$\mathop{\box\@arrowbox}\limits^{\box\@transbox}$}
\def\nrightarrowfill{$\m@th\mathord-\mkern-6mu%
  \cleaders\hbox{$\mkern-2mu\mathord-\mkern-2mu$}\hfill
  \mkern-6mu\mathord\not\mkern-2mu\mathord\rightarrow$}
\def\Rightarrowfill{$\m@th\mathord=\mkern-6mu%
  \cleaders\hbox{$\mkern-2mu\mathord=\mkern-2mu$}\hfill
  \mkern-6mu\mathord\Rightarrow$}
\def\nRightarrowfill{$\m@th\mathord=\mkern-6mu%
  \cleaders\hbox{$\mkern-2mu\mathord=\mkern-2mu$}\hfill
  \mkern-6mu\mathord\not\mathord\Rightarrow$}
\def\mapstofill{$\m@th\mathord\mapstochar\mathord-\mkern-6mu%
  \cleaders\hbox{$\mkern-2mu\mathord-\mkern-2mu$}\hfill
  \mkern-6mu\mathord\rightarrow$}
\def\nmapstofill{$\m@th\mathord\mapstochar\mathord-\mkern-6mu%
  \cleaders\hbox{$\mkern-2mu\mathord-\mkern-2mu$}\hfill
  \mkern-6mu\mathord\not\mkern-2mu\mathord\rightarrow$}
\newcommand{\ar}[1]{\mathrel{\goesto{#1}}}            % arrow
\newcommand{\Thm}[1]{Theorem~\ref{thm:#1}}
\newcommand{\Cor}[1]{Corollary~\ref{cor:#1}}
\newcommand{\Lem}[1]{Lemma~\ref{lem:#1}}
\newcommand{\Def}[1]{Definition~\ref{df:#1}}
\newcommand{\Ex}[1]{Example~\ref{ex:#1}}
\newcommand{\Part}[1]{Part {\ref{part:#1}}}
\newcommand{\Sect}[1]{Section~\ref{sec:#1}}
\newcommand{\SSect}[1]{Section~\ref{ssec:#1}}
\newcommand{\Fig}[1]{Figure~\ref{fig:#1}}
\newcommand{\Tab}[1]{Table~\ref{tab:#1}}
\newcommand{\myref}[1]{\hyperlink{lab:#1}{\sc (#1)}\xspace}
\newcommand{\skipped}[1]{}
\renewcommand{\inf}{^{\omega}}
\newcommand{\mand}{\&\xspace}
\newcounter {part}
\renewcommand\thepart{\@Roman\c@part}
\def\part#1#2{%
\vspace{6ex}
\noindent {\Large\bf\boldmath Part \refstepcounter{part}\thepart\ \ \ #1\label{part:#2}}

\vspace{-3.5ex}
}
\newcounter{Hequation}
\g@addto@macro\equation{\stepcounter{Hequation}}
\journalname{Acta Informatica}
\begin{document}

\title{CCS: It's not Fair!%
\thanks{%
    NICTA is funded by the Australian Government through the
    Department of Communications and the Australian Research Council
    through the ICT Centre of Excellence Program.
}
}
\subtitle{Fair Schedulers cannot be implemented in CCS-like languages even under progress and certain fairness assumptions}

\author{
    Rob van Glabbeek \and 
    Peter H\"ofner
}

\institute{
    R.\,J.\ van Glabbeek \at
	    NICTA and UNSW.
        \email{rvg@cs.stanford.edu} \and
    P.\ H\"ofner \at
	    NICTA and UNSW.
	\email{Peter.Hoefner@nicta.com.au}
}

\date{~}
% The correct dates will be entered by the editor

\maketitle

\begin{quote}\it
It is our great pleasure to dedicate this paper to Walter Vogler on the occasion of his 60th birthday.
We have combined two of Walter's main interests: Petri nets and process algebra. 
In fact, we proved a result about Petri nets that had been proven before by Walter, but in a restricted form, 
as we discovered only after finishing our proof. 
We also transfer this result to the process algebra CCS.
Beyond foundational research in the theory of concurrent systems, Walter achieved excellent results in related subjects such 
as temporal logic and efficiency.
In addition to being a dedicated researcher, he is also 
meticulous in all of his endeavours, including his writing.
As a consequence his scientific papers tend to contain no flaws,
which is just one of the reasons that makes reading them so enjoyable.

It's fair to say: ``CCS Walter!''---Congratulations and Continuous Success!

\end{quote}

\begin{abstract}
In the process algebra community it is sometimes suggested that,
on some level of abstraction, any distributed system can be
modelled in standard process-algebraic specification formalisms like CCS\@.
This sentiment is strengthened by results testifying that CCS,
like many similar formalisms, is Turing powerful and provides a mechanism for 
interaction.
This paper counters that sentiment by presenting a simple fair
scheduler---one that in suitable variations occurs in many
distributed systems---of which no implementation can be
expressed in CCS, unless CCS is enriched with a fairness assumption.

Since  Dekker's and Peterson's mutual exclusion protocols implement fair schedulers,
it follows that these protocols cannot be rendered correctly in CCS
without imposing a fairness assumption. Peterson expressed this algorithm correctly in pseudocode
without resorting to a fairness assumption, so it furthermore follows that  CCS lacks the expressive
power to accurately capture such pseudocode.
\end{abstract}

%%%%%%%%%%%%%%%%%%%%%%%
\part{Motivation \& Discussion}{one}
%%%%%%%%%%%%%%%%%%%%%%%
\section{Background}

In the process algebra community it is
generally taken for granted
that, on some level of abstraction, any distributed system can be
modelled in standard process-algebraic specification formalisms like CCS~\cite{Mi89}.

Of course, if a distributed system has features related to time, probability, broadcast communication
or anything else that is not innately modelled in CCS, yet essential to adequately describe the
distributed system under consideration, appropriate extensions are needed,
such as timed process algebras (e.g., \cite{ReedRoscoe86,HennessyRegan95,BB96,LV01,CVJ02}),
probabilistic process algebras (e.g., \cite{HJ90})
or calculi for broadcast communication (e.g., \cite{CBS}).
This paper is not concerned with such features.

The relative expressiveness of process algebras is a well-studied subject \cite{Va93,Parrow08,Gorla10a},
 and in this area CCS-like process algebras are considered far from universally expressive.
In \cite{vG12} for instance it is pointed out that the parallel composition operator of CSP \cite{BHR84,Ho85} cannot
be expressed in CCS\@.
The priority operator of \cite{BBK87b}
is a good example of an
operator that cannot be expressed in any of the standard process algebras such as CCS, CSP, ACP~\cite{BK86} or LOTOS \cite{BB87}.
These results focus, however, on the possibility of expressing \emph{operators}---composing a process out of
one or more components---as CCS-contexts; they cast no doubt on the possibility of expressing actual
processes in CCS\@.

Beside operators, it has also be shown that there exist examples of process \emph{specifications} that cannot be faithfully rendered in
CCS-like formalisms (cf.~\cite{vG05d}). We will illustrate this in \SSect{spec vs implementation}. In
this paper we distinguish process specifications from actual processes that could in principle be implemented and executed.
Again, the evidence presented casts no doubt on the possibility of expressing actual processes in CCS\@.

Incorporating these clarifications of our meaning, we expect that many concurrency experts feel that,
up to an adequate level of abstraction, any reactive system can be rendered in CCS\@.
This sentiment is strengthened by results testifying that CCS,
like many similar formalisms, is Turing powerful~\cite{Mi89}.
As a manifestation of this, any computable partial function $f:\Sigma^* \rightarrow \Sigma^*$ over some
finite alphabet $\Sigma$ can be modelled by a CCS context $P[\,\_\!\_\hspace{2.3pt}]$, such that,
for any input word $w=a_1a_2\dots a_n\in\Sigma^*$, encoded as a CCS expression
$W:=a_1.a_2.\dots.a_n.z.0$ featuring an end-of-input marker $z$, the process $P[W]$ computes
forever without performing any visible actions if $f(w)$ is undefined, and otherwise performs the
sequence of visible actions $f(w)$, followed by an end-of-output marker $z$.

It is sometimes argued that Turing machines are an inadequate formalism to capture interactive
behaviour as displayed by today's computers~\cite{Wegner97,LW01}.
The main
argument is that Turing machines are function-based and 
calculate, for a given finite input, one output;
this paradigm does not do justice to the ongoing interactions between a reactive system and its
environment.
To add ongoing interactivity to Turing machines, \emph{interaction machines} are proposed in
\cite{Wegner97}, and formalised in \cite{GSAS04} as \emph{persistent Turing machines}. Likewise,
\cite{BLT11} proposes \emph{reactive Turing machines}.
Since standard process algebras like CCS are already equipped with interaction primitives,
they can surely also model computations on persistent or reactive Turing machines.
All this strengthens the feeling
that standard process algebras, such as CCS,
are powerful enough to specify any distributed system.

\section{Fairness Assumptions}{\label{sec:fairness}}
%%%%%%%%%%%%%%%%%%%%%%%

Before presenting evidence that CCS and related formalisms cannot correctly specify every distributed system,
some explanation is in order on our understanding of `correctly'. This is best illustrated by an example.
\advance\textheight 1pt

Consider the CCS agent identifier $E$ with defining equation \plat{$E \stackrel{{\it def}}{=} a.E + b.0$}.
The question is whether this is a good rendering of a process that is guaranteed to eventually
perform the action $b$. The answer depends on whether we incorporate a fairness assumption in the
semantics of CCS\@. A \emph{strong fairness} assumption requires that if a task (here $b$) is enabled
infinitely often, but allowing interruptions during which it is not enabled, it will eventually be
scheduled \cite{GPSS80,LPS81}.
Making such an assumption allows us to infer that indeed the process $E$ will eventually do a $b$.%
\footnote{In \cite{BBK87a} a form of reasoning using a particularly strong global fairness assumption
was integrated in the axiomatic framework of ACP, and shown to be compatible with the notion of
weak bisimulation commonly taken as the semantic basis for CCS\@.}

\advance\textheight -1pt
It depends on the context of the application of CCS whether it is appropriate to make such fairness assumptions.
For the verification of the alternating bit protocol, for instance, fairness assumptions are
indispensable \cite{BergstraKlop85}. But in some situations they allow us to reach conclusions that are
patently false. In the example above for instance, let $a$ be an unsuccessful attempt to dial a
number or an unreliable mobile phone, and $b$ a successful one. The system $E$ simply
retries after each unsuccessful attempt. Whether it ever succeeds in performing $b$ depends very
much on how unreliable the phone is. If there is a fixed positive probability on success,
the strong fairness assumption appears warranted. Yet, if the phone is completely dead, it is not,
and the conclusion that we eventually succeed in dialling is false.
In fact, when assuming strong fairness we loose the expressiveness to describe by a finite
recursive specification like $E$ a system such as the above interaction with the unreliable 
telephone that \emph{does} allow an infinite run with only $a$s.

As evidence that not every distributed system can be rendered correctly in CCS, we will describe a
\emph{fair scheduler} as a counterexample. Yet, our fair scheduler can be rendered in CCS very
easily, if only we are willing to postulate a fairness property to support its correctness.
However, considering the above example and the fact that we 
may reach wrong conclusions, this is a price we are not willing to pay.

Our fair scheduler is not merely an `artificial' CCS specification; it is implemented in many working distributed systems, and (unlike the alternating bit
protocol) its correctness should not be contingent on any fairness assumption whatsoever.
This is another reason why we do not want to invoke fairness to achieve a correct rendering in CCS\@.

Yet, we do find it reasonable to equip CCS with two assumptions that are weaker than strong fairness,
namely \emph{progress} and \emph{justness}. A progress assumption is what is needed to infer that
the CCS process $b.0$ will eventually do a $b$, and a justness assumption allows us to infer that
the parallel composition $A|b.0$ with \plat{$A \stackrel{{\it def}}{=} a.A$} will eventually do a $b$.
If our task is to specify in CCS a process $\cal{B}$ that will eventually do a $b$,
then, when assuming strong fairness,
the processes~$E$,  $A|b.0$ and $b.0$ are acceptable implementations of $\cal{B}$.
If we assume justness, but not fairness, this selection shrinks to $A|b.0$ and $b.0$, and if we only
assume progress, we have to give up on $A|b.0$ as well. When not even assuming progress, $\cal{B}$
cannot be rendered in CCS at all.
Assuming progress and justness only, $A|b.0$ models a process that will
eventually do a $b$, whereas $E$ can be used to characterise the above mentioned interaction with the unreliable telephone,
which allows an infinite sequence of $a$s only.

So, when we claim that a fair scheduler cannot be implemented in CCS, we mean that it
cannot be implemented in CCS+justness, CCS+progress or CCS without any progress assumption.
It can be implemented in CCS+strong fairness, however. 

To pinpoint the borderline, 
consider a \emph{weak fairness} or \emph{justice} assumption~\cite{GPSS80,LPS81}. This assumption
requires that if a task, from some point onwards, is
perpetually enabled, it will eventually be scheduled. What this means depends on our interpretation
of `perpetual'. If `perpetual' simply means `in each state', then a weak fairness assumption
is all that is needed to assure that the process $E$ will eventually do a $b$.\footnote{The process
  $E'$ with {\plat{$E' \stackrel{{\it def}}{=} a.a.E' + b.0$}} on the other hand really needs
  strong fairness.} A weak fairness assumption in this
sense is enough to correctly render a fair scheduler in CCS\@. If, on the other hand, the execution
of the $a$-transition of $E$ counts as a (short) interruption of the enabledness of $b$, then
justice can be shown to coincide with justness \cite{GH14}; as we will show,
this is not enough to render a fair scheduler in CCS\@.

\section{Specifications versus Actual Processes}\label{ssec:spec vs implementation}
%%%%%%%%%%%%%%%%%%%%%%%

Consider the system specification $\G$ expressed in\vspace{2pt}
CCS as $A|B$ with \plat{$A \stackrel{{\it def}}{=} a.A$} and \plat{$B \stackrel{{\it def}}{=} b.B$},
but with the added requirement that all infinite executions should have infinitely many occurrences
of $a$ as well as $b$. Here $a$ and $b$ could be seen as two tasks that need to be scheduled again
and again. The left-hand component $A$ of the parallel composition tries to perform task $a$
infinitely often, and the right-hand component tries to perform task $b$ infinitely often.
The process $A|B$ by itself, as specified in CCS,\vspace{2pt} is normally deemed equivalent to the process $C$,
defined by \plat{$C \stackrel{{\it def}}{=} a.C + b.C$}, and---in the absence of a justness or
fairness assumption---offers no guarantee that a single $b$
will ever happen. It could be that, due to unfortunate scheduling, at each time a choice is made, task $a$
is chosen. The challenge in specifying the fair version $\G$ of this process in CCS is 
how to ensure that sooner or later a $b$ will happen, without simply invoking a fairness or justness
assumption, and without setting any fixed limit on the number of $a$s that can happen beforehand.

Accordingly, solutions have appeared in the literature that change the operational semantics of CCS
in such a way that $A|B$ will surely do a $b$ eventually. In~\cite{Plotkin82} for instance,
parallel operators $\|_m$ are used that, each time a $b$ occurs, nondeterministically select a
number $m$ and guarantee that from that point onwards at most $m$ occurrences of $a$ happen
before the next $b$. Another solution along these lines is proposed in \cite{CS84}, whereas
\cite{CBV06} solves the problem by harvesting the power added by the treatment of time in the timed
process algebra PAFAS \cite{CVJ02}.

In relation to the above challenge it would be trivial to specify \emph{some} process that makes
sure that tasks $a$ and $b$ are each scheduled infinitely often; a particularly simple way to
achieve this is through the CCS specification $D$,\vspace{2pt} given by
\plat{$D \stackrel{{\it def}}{=} a.b.D$}; that is, to alternate each of the two tasks.
This is a \emph{round-robin} solution.
It could be seen as a particular implementation of $\G$.
The reason that such a solution is not chosen is that it fails to capture the full generality of
the original specification, in which arbitrary many $a$s may come between any two $b$s.

Any real-life implementation of $\G$ on a physical computer is unlikely to capture the full
generality of its specification, but rather goes a few steps towards the round-robin solution.
For this reason, one could argue that $\G$ does not constitute an example of a distributed system
that cannot be rendered in CCS, but rather one of a \emph{specification} that
cannot be rendered in CCS\@. As such, it falls out of the scope of this paper.

\section{Our Contributions}\label{ssec:contribution}
%%%%%%%%%%%%%%%%%%%%%%%

This paper counters the sentiment that CCS-like process algebras are powerful
enough to represent arbitrary distributed systems by presenting a particularly simple system of which no
implementation can be expressed in CCS. 
The reason we use CCS is that it is among the most well-known standard process algebras, while having a fairly easy to explain
syntax and semantics. However, we believe the same result, with
essentially the same proof, could be obtained for COSY \cite{LTS79}, CSP \cite{BHR84,Ho85},
ACP \cite{BK86}, LOTOS \cite{BB87}, $\mu$CRL \cite{GP95}, the $\pi$-calculus \cite{MPW92}, etc.

Our system is a \emph{fair scheduler}. It can receive two
kinds of requests $r_1$ and $r_2$ from its environment on separate channels, and is guaranteed to perform a
task---\emph{granting} the request---in response. Our fairness requirement rules out a scheduler that
may fail to ever grant a request of type $r_1$ because it is consistently busy granting requests $r_2$.

Such schedulers occur
(in suitable variations)
in many distributed systems.
Examples are \emph{First in First out}%
\footnote{Also known as First Come First Served (FCFS)},
\emph{Round Robin}, and
\emph{Fair Queueing} 
scheduling algorithms\footnote{\url{http://en.wikipedia.org/wiki/Scheduling_(computing)}}
as used in network routers~\cite{rfc970,Nagle87} and operating systems~\cite{Kleinrock64},
or the \emph{Completely Fair Scheduler},\footnote{\url{http://en.wikipedia.org/wiki/Completely_Fair_Scheduler}}
which is the default scheduler of the Linux kernel since version 2.6.23.

If $\cal F$ stands for the most general specification of our scheduler, our claim entails that 
$\cal F$ cannot be rendered in CCS\@. However, accurately expressing $\cal F$ in CCS can
be seen as a luxury problem. Here we would accept \emph{any} implementation of $\cal F$, under the
broadest definition of implementation that makes sense for this problem---a round-robin solution for
instance would be totally acceptable---and what we show is that even that is
impossible.

As is common, we employ a version of CCS that allows the use of arbitrary sets of recursive equations to define processes.
As is trivial to show, \emph{any} labelled transition system, computable or 
not, can be modelled up to strong bisimulation equivalence as an expression in this
language. Hence, our result implies that no implementation of the fair scheduler $\cal F$ can be modelled as
a labelled transition system modulo strong bisimulation equivalence.

In this paper we will use a semantics of CCS incorporating a justness assumption. It distinguishes the strongly bisimilar systems $A|B$
and $C$ mentioned above, on grounds that $A|B$ can be understood to always perform infinitely many
$a$s as well as $b$s, whereas $C$ might perform an infinite sequence of $b$s while discarding the
$a$-option all the time. This semantics increases the power of CCS in specifying fair schedulers,
and thereby strengthens our result that no implementation of the fair scheduler $\cal F$ can be
expressed. It thereby becomes stronger than the result that no implementation of $\cal F$ can be
rendered as a labelled transition system modulo strong bisimulation equivalence.

To prove our result, we show that our fair scheduler cannot be expressed in terms of  safe
Petri nets. The result for CCS then follows by reduction: an adequate Petri net semantics of CCS shows that
if the scheduler could be expressed in CCS, it could also be expressed as a Petri net.

The reason we resort to Petri nets to prove our main theorem is that Petri nets offer a structural
characterisation of what it means for a transition to be continuously enabled in a run of the
represented system from some state onwards. This is exploited in the proofs of
Lemmas~\ref{lem:one step embellishment}--\ref{lem:extension to complete path}.
It would be much harder to prove their counterparts directly in terms of the
labelled transition system of CCS\@.

In different formulations, our impossibility result for Petri nets was established earlier by
Vogler in \cite{Vogler02} and by Kindler \mand Walter in \cite{KW97}, but in both cases side
conditions were imposed that inhibit lifting these results to CCS\@.
The proof of \cite[Lemma 6.1]{Vogler02} considers only finite Petri nets.
The argument would extend to finitely branching nets, but not to all Petri nets that arise as the
semantics of CCS expressions. The proof of \cite{KW97} is restricted to Petri nets that interact
with their environment through an interface of a particular shape, and it is not a priori clear that
this does not cause a loss of generality. However, in \Sect{characterisation} we study a similar interface in
the context of CCS and show that it does not limit generality.

Although our fair scheduler cannot be expressed in a standard process algebra like CCS,
we believe there are many extensions in the literature in which this can be done quite easily.
In \Sect{fair spec} for instance, we specify it in a formalism that could be called CCS+LTL.
The use of a priority operator appears to be sufficient as well.\vspace{-6pt}

\section{Peterson's and Dekker's Mutual Exclusion Protocols}
%%%%%%%%%%%%%%%%%%%%%%%

Since Peterson's and Dekker's mutual exclusion protocols yield instances of our fair scheduler,
it follows that these protocol cannot be rendered correctly in CCS
without imposing a fairness assumption. Nevertheless, implementations of these algorithms in CCS
or similar formalisms occur frequently in the literature, and almost never a fairness assumption is invoked.
Moreover, for each of these two protocols, its various renderings differ only in insignificant details.
Our result implies that these common renderings cannot be correct.
Usually, only safety properties of these protocols are shown: never are two processes simultaneously in
the critical section. The problem is with the liveness property: any process that is ready to enter
the critical section will eventually do so. We found four papers that claim to establish essentially this
property, of which only one invokes a fairness assumption.
We will indicate in which way the other three do not establish the right liveness property.

Peterson expressed his protocol correctly in pseudocode
without resorting to a fairness assumption, although progress and justness are assumed implicitly.
It follows that Peterson's pseudocode does not admit an accurate translation into CCS\@.
We pinpoint the problem in this paper.

\section{Overview}
%%%%%%%%%%%%%%%%%%%%%%%

In \Part{one} we discussed (informally) the goal we set out to achieve, and 
why we believe it is important and surprising at the same time.

\Part{two} formalises our results, while providing explanations of the choices made in this formalisation.
In particular, \Sect{fair scheduler} presents an informal description of our fair
scheduler $\cal F$.  \Sect{CCS} presents CCS\@.
\Sect{progress} makes a 
progress assumption on the semantics of CCS and argues that it is
useful to set apart a set of non-blocking actions.
\Sect{justness} formalises the justness assumption discussed above and presents a semantics of CCS in which a process
$P$ is modelled as a state in a labelled transition system together with a set of (possibly
infinite) paths in that transition system starting from $P$ that model its valid runs.
 \Sect{fair spec} gives a formal specification of $\cal F$.
Since we aim to show that no implementation of $\cal F$ can be
specified in CCS, the specification of $\cal F$ cannot be given in CCS
either.  Instead we specify $\cal F$ as a CCS expression augmented
with a \emph{fairness specification}.  This follows the traditional
approach of TLA~\cite{TLA} and other formalisms \cite{Fr86}, ``in which
first the legal computations are specified, and then a fairness notion
is used to exclude some computations which otherwise would be legal'' \cite{AFK88}.
In \Sect{no fair scheduler} we state our main result, saying that no fair scheduler---that is: no
implementation of $\cal F$---can be expressed in CCS\@.
\Sect{characterisation} reformulates this result, so that it is
independent of the concept of an action being perpetually enabled in a run of the represented system.
In \Sect{discussion} we conclude that mutual exclusion
protocols, like the algorithms from Dekker or Peterson, cannot be
rendered correctly in CCS without imposing a fairness condition.
We also investigate the apparent contradiction with the fact that
several research papers claim to achieve exactly this.
We end this section with a result by Corradini, Di Berardini \mand Vogler,
showing where a fairness assumption is needed for a rendering
of Dekker's protocol in a process algebra to be correct.

\Part{three} deals with proving our main result.
In \Sect{nets} we formulate our claim that no fair scheduler can be modelled as a 
safe Petri net. This claim is proven in \Sect{proof PN}.
In \Sect{operational PN} an operational Petri net semantics of CCS is
presented, following the work of Degano, De Nicola \mand Montanari.
From this, the proof of our main result is obtained in \Sect{proof conclusion}.
A few concluding remarks are made in \Sect{conclusion}.

%%%%%%%%%%%%%%%%%%%%%%%
\part{Formalisation}{two}
%%%%%%%%%%%%%%%%%%%%%%%

\section{A Fair Scheduler}\label{sec:fair scheduler}
%%%%%%%%%%%%%%%%%%%%%%%

Our fair scheduler is a reactive system with two input channels: one
on which it can receive requests $r_1$ from its environment and
one on which it can receive requests $r_2$. We allow the scheduler to
be too busy shortly after receiving a request $r_i$ to accept another
request $r_i$ on the same channel. However, the system will always
return to a state where it remains ready to accept the next request $r_i$
until $r_i$ arrives. In case no request arrives it 
remains ready forever. The environment is under no
obligation to issue requests, or to ever stop issuing requests.
Hence for any numbers $n_1$ and $n_2\in\NN\cup\{\infty\}$ there is at least one run of the system in
which exactly that many requests of type $r_1$ and $r_2$ are received.

Every request $r_i$ asks for a task $t_i$ to be executed.
The crucial property of the fair scheduler is that it will eventually
grant any such request. Thus, we require that in any run of the system
each occurrence of $r_i$ will be followed by an occurrence of $t_i$.
In Linear-time Temporal Logic (LTL)~\cite{Pnueli77} this can be stated as
\[
\mathbf{G}(r_i \Rightarrow \mathbf{F}(t_i))\,,\quad i\in\{1,2\}\,.
\]
\noindent
Note that it may happen that the environment issues request $r_1$
three times in a row before the scheduler got a change to schedule
task $t_1$. In that case the scheduler may fulfil its obligation by
scheduling task $t_1$ just once. Hence it need not keep a counter of
outstanding requests.\footnote{This relaxed requirement only
  serves to increase the range of acceptable schedulers, thereby
  strengthening our impossibility result. It by no means rules out a
  scheduler that schedules task $t_1$ exactly once for each request
  $r_1$ received.}

We are not interested in implementations of the scheduler that just
schedule both tasks infinitely often without even listening to the
requests. Hence we require that in any partial run of the scheduler
there may not be more occurrences of $t_i$ than of $r_i$, for $i=1,2$.

The last requirement is that between each two occurrences of $t_i$ and
$t_j$ for $i,j\in\{1,2\}$ an intermittent activity $e$ is
scheduled.\footnote{Our specification places no restrictions on the
  presence or absence of any additional occurrences of $e$. This again
  increases the range of acceptable implementations.}
This requirement will rule out fully concurrent implementations, in
which there are parallel components for task $t_{1}$ and task $t_{2}$ that do not
interact in any way.

\section{The Calculus of Communicating Systems}\label{sec:CCS}
%%%%%%%%%%%%%%%%%%%%%%%

\begin{table}[t]
\normalsize
\begin{center}
\framebox{$\begin{array}{ccc}
\alpha.\E \ar{\alpha} \E &
&\displaystyle\frac{\E_j\ar{\alpha} \E'}{\sum_{i\in I}\E_i \ar{\alpha} \E'}~~(j\mathbin\in I)\\[4ex]
\displaystyle\frac{\E\ar{\alpha} \E'}{\E|\F \ar{\alpha} \E'|\F} &
\displaystyle\frac{\E\ar{a} \E' ,~ \F \ar{\bar{a}} \F'}{\E|\F \ar{\tau} \E'| \F'} &
\displaystyle\frac{\F \ar{\alpha} \F'}{\E|\F \ar{\alpha} \E|\F'}\\[4ex]
\displaystyle\frac{\E \ar{\alpha} \E'}{\E\backslash a \ar{\alpha}\E'\backslash a}~~
                     (a\mathbin{\neq}\alpha\mathbin{\neq}\bar{a}) &
\displaystyle\frac{\E \ar{\alpha} \E'}{\E[f] \ar{f(\alpha)} \E'[f]} &
\displaystyle\frac{\E \ar{\alpha} \E'}{A\ar{\alpha}\E'}~~(A \stackrel{{\it def}}{=} P)
\end{array}$}
\end{center}
\caption{Structural operational semantics of CCS}
\label{tab:CCS}
\end{table}

The Calculus of communicating systems (CCS) \cite{Mi89} is parametrised with sets ${\A}$ of {\em names} and $\K$ of {\em agent identifiers};
\vspace{2pt}%
each $\AI\in\K$ comes with a defining equation \plat{$\AI \stackrel{{\it def}}{=} P$}
with $P$ being a CCS expression as defined below.
The set $\bar{\A}$ of {\em co-names} is $\bar\A:=\{\bar{a}
\mid a\in {\A}\}$, and 
the set $\HC$ of \emph{handshake actions} is 
\plat{$\HC:=\A \dcup \bar\A$}, the disjoint union of the names and co-names.
The function $\bar{\rule{0pt}{7pt}.}$ is extended to $\HC$ by declaring
$\bar{\bar{\mbox{$a$}}}:=a$. Finally, \plat{$Act := \HC\dcup \{\tau\}$} is the set of
{\em actions}. Below, $a$, $b$, $c$ range over $\HC$,
$\alpha$, $\beta$ over $Act$ and $\AI$ over $\K$.
A \emph{relabelling} is a function $f\!:\HC\mathbin\rightarrow \HC$ satisfying
$f(\bar{a})=\overline{f(a)}$; it extends to $Act$ by $f(\tau):=\tau$.
The set $\T_{\rm CCS}$ of CCS expressions is the smallest set including:
\begin{center}
\begin{tabular}{@{}l@{~~}l@{\qquad\quad}l@{~~}l@{\qquad\quad}l@{~~}l@{}}
$\AI$ &  \emph{agent identifier} &
$\alpha.\E$  & \emph{prefixing}&
$\sum_{i\in I}\E_i$  & \emph{choice} \\
$\E|\F$ & \emph{parallel composition}&
$\E\backslash a$  & \emph{restriction} &
$\E[f]$ &  \emph{relabelling} \\
\end{tabular}
\end{center}
\noindent for $\E,\E_i,\F \in\T_{\rm CCS}$, index sets $I$, and relabellings $f$.
We write $\E_1+\E_2$ for $\sum_{i\in I}\E_i$ if $I=\{1,2\}$, and $0$ if $I=\emptyset$.
The semantics of CCS is given by the labelled transition relation
$\mathord\rightarrow \subseteq \T_{\rm CCS}\times Act \times\T_{\rm CCS}$, where the transitions 
\plat{$\E\ar{\alpha}\F$}
are derived from the rules of \Tab{CCS}.
The pair $\langle \T_{\rm CCS}, \rightarrow\rangle$ is called the \emph{labelled transition system} (LTS)
of CCS\@.

\section{The Necessity of Output Actions}\label{sec:progress}
%%%%%%%%%%%%%%%%%%%%%%%

\newcommand{\D}{\ensuremath{F}}
\newcommand{\Do}{\ensuremath{F_{0}}}
Before attempting to specify our scheduler in CCS, let us have a look at a
simpler problem: the same scheduler, but with only one type of request $r$,
and one type of task $t$ to be scheduled. A candidate CCS
specification of such a scheduler is the process $\Do$, defined by\vspace{-1ex}
\[ \Do \stackrel{{\it def}}{=} r.e.t.\Do\,.\vspace{-2pt}\]
As stated in Section~\ref{sec:fair scheduler}, the scheduler is called fair if 
every request $r$ is eventually followed by the requested task $t$; so we 
want to ensure the property
$\mathbf{G}(r \Rightarrow \mathbf{F}(t))$.\footnote{When assuming that this formula holds,
  $\Do$ trivially satisfies the other properties required in \Sect{fair scheduler}:
the system will always return to a state where it remains ready to
accept the next request $r$ until it arrives; in any partial run there are no more
occurrences of $t$ than of $r$, and between each two occurrences of
$t$ the action $e$ is scheduled.}
However, we cannot guarantee that this property actually holds for process $\Do$.
The reason is that the process might remain in the state $s$ 
reached by taking transition $e$ without ever performing the 
action $t$. In any formalism that allows to remain in a state
even when there are enabled actions, no useful liveness property about
processes can ever be guaranteed. One therefore often makes a \emph{progress assumption},
saying that the system will not idle as long as it can make progress.
Armed with this assumption, it appears fair to say that the process $\Do$ satisfies the required
property $\mathbf{G}(r \Rightarrow \mathbf{F}(t))$.

However, by symmetry, the same line of reasoning would allow us to derive that $\Do$ satisfies
$\mathbf{G}(t \Rightarrow \mathbf{F}(r))$, i.e.\ each execution of $t$ will be followed by a new
request. Yet, this is something we specifically do not want to assume: the action $r$ is meant to be
fully under the control of the environment, and it may very well happen that at some point the
environment stops making further requests. A particular instance of this is when the environment is modelled by
a CCS context such as $(\bar r.0 | \,\_\!\_\,)\backslash r$; in this context the process $\Do$ will
receive the request $r$ only once.

Hence, we reject the validity of $\mathbf{G}(t \Rightarrow \mathbf{F}(r))$ based on environments
such as $(\bar r.0 | \_\!\_\hspace{1pt})\backslash r$.
However, the same reasoning allows environments such as $(\bar r.\bar t.\bar r.0 | \_\!\_\hspace{1pt})\backslash r\backslash t$
that do not allow the task $t$ to be executed more than once. The existence of such environments
totally defeats our scheduler, or any other one.

Thus, for a fair scheduler to make sense, we need to consider environments that have full control
over the action $r$ but cannot sabotage the mission of our scheduler by disallowing
tasks $t$ and $e$. We formalise this by calling $t$ and $e$ \emph{output actions}.
An \emph{output action}~\cite[Section 9.1]{TR13} is
an activity of our system that cannot be stopped by its environment; or, equivalently, considering
an action $t$ to be \emph{output} means that we choose not to consider environments that can block $t$.
In our schedulers, $t$ and $e$ are output actions, whereas $r$ is not.

Let CCS$^{!}$ be the variant of CCS that is parametrised not only by sets ${\A}$ of names and
$\K$ of agent identifiers, but also by a set $\OA$ of output actions. The only further
difference with CCS of \Sect{CCS} is that \plat{$Act := \HC\dcup \OA\dcup \{\tau\}$}, and a relabelling $f$
extends to $Act$ by $f(\alpha) := \alpha$ for all $\alpha \in \OA \dcup\{\tau\}$.
CCS$^!$ can be seen as an extension of CCS with output actions, but it can just as well be seen as a
restriction of CCS in which for some of the names there are no co-names and no restriction operators.

It should be noted that CCS already features the concept of an \emph{internal} action, namely
$\tau$, of which it is normally assumed that it cannot be blocked by the environment.
Yet, for the purpose of specifying our scheduler, the r\^ole of the output action $t$ cannot be played
by $\tau$, for the internal action is supposed to be unobservable and is easily
abstracted away. Output actions share the feature of internal actions that whether they
occur or not is determined by the internal work of the specified system only; yet at the same
time they are observable by the environment in which the system is running.\footnote{The output
    and internal actions of CCS$^!$ are similar to the output and internal action of I/O automata \cite{LT89}.
    However, the remaining actions of I/O automata are \emph{input actions} that are totally under
    the control of the environment of the modelled system. In CCS, on the other hand, the default type
    of action is a \emph{synchronisation} that can happen only in cooperation between a system and
    its environment.}
An action in $\OA\dcup \{\tau\}$---so an output or internal action---is also called a \emph{non-blocking} action.

Now we formulate our \emph{progress assumption}\cite{TR13,GH14}:
\begin{equation*}
\parbox{0.87\textwidth}{\textit{%
Any process in a state that admits an non-blocking action will eventually perform an action.
}}
\end{equation*}

LTL formulas are deemed to hold for a process $P$ if they hold for all \emph{complete
paths} of $P$ in the labelled transition system of CCS$^!\!$. Here a \emph{path} of
$P$ is an alternating sequence of states and transitions, starting from the state $P$ and
either being infinite or ending in a state, such that each transition in the sequence goes from the
state before to the state after it, and a finite path is complete iff it does not end in a state that
enables a non-blocking action; completeness of infinite paths is discussed in the next
sections. For further details, see \cite[Section 9.1]{TR13} or~\cite{GH14}.

Assuming progress,
the scheduler $\Do \stackrel{{\it def}}{=} r.e.t.\Do$ 
satisfies $\mathbf{G}(r \Rightarrow \mathbf{F}(t))$ because on each complete
  path every $r$ is followed by a $t$. Hence $\Do$ is fair w.r.t.\ the simpler problem.

\section{A Just Semantics of Parallelism}\label{sec:justness}
%%%%%%%%%%%%%%%%%%%%%%%

In the previous section we considered a scheduler\vspace{2pt} that was significantly simpler than the one of
\Sect{fair scheduler}, and were able to specify it in CCS$^!\!$ by \plat{$\Do \stackrel{{\it def}}{=} r.e.t.\Do$}, with output actions $t$ and $e$. 
However, in order to ensure that our
specification was formally correct, we needed to introduce the concept of an output action, and made
a progress assumption on the semantics of the language.

In this section, we consider again a simplification of the scheduler of \Sect{fair scheduler}, and
once more succeed in specifying it in CCS$^!\!$. Again we need to make an assumption about the semantics
of CCS$^!$ in order to ensure that our specification is formally correct.

Both assumptions increase the range of correct CCS specifications and thereby make the promised
result on the absence of any CCS specification of a scheduler as described in \Sect{fair scheduler}
more challenging.

Consider a scheduler as described in \Sect{fair scheduler}, but without the last requirement about
the intermittent activity $e$.  A candidate CCS$^!$ specification is the process $\D_1 | \D_2$, defined by\vspace{-2ex}
\[ \D_i \stackrel{{\it def}}{=} r_i.t_i.\D_i\,,\quad i\in\{1,2\}.\]
Here, and throughout this paper, $t_i$ (like $e$) is an output action and $r_i$ is not.
For this scheduler to be fair, it has to satisfy
$\mathbf{G}(r_i \Rightarrow \mathbf{F}(t_i))$ for $i\mathbin=1,2$.\footnote{When assuming that these formulas hold,
  $\D_1|\D_2$ trivially satisfies the other properties required of it:
  the system will always return to a state where it remains ready to
  accept the next request $r_i$ until it arrives---hence for any numbers $n_1$ and
  $n_2\in\NN\cup\{\infty\}$ there is at least one run of the system in which exactly that many
  requests of type $r_1$ and $r_2$ are received---and in any partial run there are no more
  occurrences of $t_i$ than of $r_i$.}
By the reasoning of the previous section the process
$\D_i$ satisfies the temporal formula $\mathbf{G}(r_i \Rightarrow \mathbf{F}(t_i))$ for $i\mathbin=1,2$.
It is tempting to conclude that obviously their parallel composition $\D_{1} | \D_{2}$ satisfies both of
these requirements. Yet, the system run $r_1(r_2 t_2)\inf$\linebreak[1]---that after performing one action
from $\D_1$ performs infinitely many actions from $\D_2$ without interleaving any further actions from
$\D_1$---could be considered a counterexample.

Here we take the point of view that no amount of activity of $\D_2$ can prevent $\D_1$ from making
progress. The system $\D_1|\D_2$ simply does not have a run $r_1(r_2 t_2)\inf\!$.
The corresponding path from the state $\D_1|\D_2$ in the LTS of CCS$^!$ is no
more than an artifact of the use of interleaving semantics. In general, we make the following
\emph{justness assumption} \cite{GH14}: 
\[
\parbox{0.87\textwidth}{\textit{%
    If a combination of components in a parallel composition is in a state
    that admits a non-blocking action, then one (or more) of them  will eventually partake in an action.
    }}
\]
Thus justness guarantees progress of all components in a parallel composition, and of all
combinations of such components.\vspace{2pt} In the CCS$^!$ expression $((P|Q)\backslash a) | R$ 
with  
\plat{$P \stackrel{{\it def}}{=} a.P + c.P$}, 
\plat{$Q \stackrel{{\it def}}{=} \bar a.Q$} and 
\plat{$R \stackrel{{\it def}}{=} b.R + \bar c.R$}
 for
instance
there is a state where $P$ admits an action $c\mathbin\in\HC$ with $c\neq a$ and $R$ admits an action
$\bar c$. Thereby, the combination of these components admits an action $\tau$. Our justness
assumption now requires that either $P$ or $R$ will eventually
partake in an action.
This could be the $\tau$-action obtained from synchronising $c$ and $\bar c$, but it also could be
any other action involving $P$ or $R$. In each case the system will (at least for an instant) cease to be in a state where that
synchronisation between $P$ and $R$ is enabled.
Note that progress is a special case of justness, obtained by considering any process as the
combination of all its parallel components.

In \cite{GH14} we formalised the justness assumption as follows.\newline
Any transition \plat{$P|Q \ar{\alpha} R$} derives, through the
rules of \Tab{CCS}, from
\vspace{-1ex}
\begin{itemize}
\item a transition \plat{$P \ar{\alpha} P'$} and a state $Q$, where $R=P'|Q$\,,
\item two transitions \plat{$P \ar{\alpha_1} P'$ and $Q \ar{\alpha_2} Q'$}, where $R=P'|Q'$\,,
\item or from a state $P$ and a transition \plat{$Q \ar{\alpha} Q'$}, where $R=P|Q'$.
\vspace{-1ex}
\end{itemize}
This transition/state, transition/transition or state/transition pair is called a \emph{decomposition}
of \plat{$P|Q \ar{\alpha} R$}; it need not be unique, as we will show in \Ex{not unique} below.
Now a \emph{decomposition} of a path $\eta$ of $P|Q$ into paths $\eta_1$ and $\eta_2$ of
$P$ and $Q$, respectively, is obtained by \hypertarget{hr:decomp}{decomposing}\label{pg:decomp} each transition in the path, and
concatenating all left-projections into a path of $P$ and all right-projections into a
path of $Q$.
Here it could be that $\eta$ is infinite, yet either $\eta_1$ or $\eta_2$ (but not both) are finite.
Again, decomposition of paths need not be unique.

Similarly, any transition \plat{$P[f] \ar{\alpha} R$} stems from a transition \plat{$P \ar{\beta} P'$},
where $R=P'[f]$ and $\alpha=f(\beta)$.
This transition is called a decomposition of \plat{$P[f] \ar{\alpha} R$}. A \emph{decomposition}
of a path $\eta$ of $P[f]$ is obtained by decomposing each transition in the path, and
concatenating all transitions so obtained into a path of $P$.
A decomposition of a path of $P\backslash c$ is defined likewise.

\begin{definition}\label{df:just path}\rm
\emph{$Y\!$-justness}, for $Y\mathbin\subseteq\HC$,\footnote{By definition $Y$ does not contain non-blocking action.}
 is the largest family of predicates on the paths in the LTS of CCS$^!$ such that
\begin{itemize}
\vspace{-1ex}
\item a finite $Y\!$-just path ends in a state that admits actions from $Y$ only;
\item a $Y\!$-just path of a process $P|Q$ can be decomposed into an $X$-just path of $P$ and a $Z$-just
  path of $Q$ such that $Y\mathbin\supseteq X\mathord\cup Z$ and $X\mathord\cap \bar{Z}\mathbin=\emptyset$---here
  $\bar Z\mathbin{:=}\{\bar{c} \mid c\mathbin\in Z\}$;
\item a $Y\!$-just path of
  $P\backslash c$ can be decomposed into a $Y\mathord\cup\{c,\bar c\}$-just path of $P$;
\item a $Y\!$-just path of $P[f]$ can be decomposed into an
  $f^{-1}(Y)$-just path of $P$; 
\item and each suffix of a $Y\!$-just path is $Y\!$-just.
\vspace{-1ex}
\end{itemize}
A path $\eta$ is \emph{just} if it is $Y\!$-just for some $Y\subseteq\HC$.
It is \emph{$a$-enabled} for an action $a\in\HC$ if $a\in Y$ for all $Y$ such that $\eta$ is $Y\!$-just.
\end{definition}
Intuitively, a $Y\!$-just path models a run in which $Y$ is an upper~bound of the set of labels of
abstract transitions\footnote{The CCS process $a.0 | b.0$ has two transitions labelled $a$, namely
\plat{$a.0 | b.0 \ar{a} 0|b.0$} and \plat{$a.0 | 0 \ar{a} 0|0$}. The only difference between these two transitions
  is that one occurs before the action $b$ is performed by the parallel component and the other
  afterwards. In \cite{GH14} we formalise a notion of an \emph{abstract transition} that
  identifies these two concrete transitions.} that from some point onwards are continuously enabled but never taken.
Here an {abstract transition} with a label from $\HC$ is deemed to be continuously enabled but never
taken iff it is enabled in a parallel component that performs no further actions.
Such a run can actually occur if the environment from some point
onwards blocks the actions in $Y$.

The last clause in the second requirement prevents an $X$-just path of $P$ and a $Z$-just path of
$Q$ to compose into an $X\mathord\cup Z$-just path of $P|Q$ when $X$ contains an action $a$ and $Z$
the complementary action~$\bar a$. The reason is that no environment can block both actions for
their respective components, as nothing can prevent them from synchronising with each other.
The fifth requirement helps characterising processes of the form $b.0+(A|b.0)$ and $a.(A|b.0)$, with  \plat{$A \stackrel{{\it def}}{=} a.A$}.
Here, the first transition `gets rid of' the choice and of the leading action $a$, respectively, 
and reduces the justness of paths of such processes to their suffixes.

A complication in understanding \Def{just path} is that a single path could be seen as
modelling different system runs of which one could be considered just, respectively $a$-enabled, and the other not.
\begin{example}\label{ex:not unique}
Consider the process $B|B$ defined by \plat{$B \stackrel{{\it def}}{=} b.B$}.
The only transition of this process is \plat{$B|B \ar{b} B|B$}, so $B|B$ has exactly one infinite path $\eta$,
obtained by repeating this transition infinitely often. Assuming that $b$ is an output action, one
may wonder if $\eta$ should count as being just. In case all transitions in $\eta$ originate
from the left component, the $b$-transition of the right component is continuously enabled but
never taken. This does not correspond to a (just) run of the represented system.
However, in case $\eta$ alternates transitions from each component, it does model a (just) run.
The mere fact that a $b$-transition is enabled on every state of $\eta$
has no bearing on the matter. Now \Def{just path} considers $\eta$ just, on grounds of the fact that
it models some (just) run.

If in this example $b$ is a handshake action, the path $\eta$ models a (just) run in which a
$b$-labelled abstract transition is continuously enabled but never taken; but it also models a
(just) run in which no transition is continuously enabled but never taken.
According to \Def{just path}, $\eta$ counts as $\emptyset$-just, and thus is not deemed $b$-enabled.
Intuitively, a path is $b$-enabled iff on all runs modelled by that path a transition labelled $b$
is continuously enabled but never taken.
\qed
\end{example}

Now a just path, as defined above, is our default definition of a complete path, as contemplated at
the end of \Sect{progress}. Indeed, a finite path is just iff it does not
end in a state from which a non-blocking action is possible \cite{GH14}.

Thus, our semantics of a CCS$^!$ process $P$ consists of the state $P$ in the LTS of
CCS$^!$ together with the set of complete paths in that LTS starting from $P$ \cite{TR13,GH14}.
LTL formulas \emph{hold} for $P$ iff they are valid on all complete paths of $P$.

Here we employ a just semantics of CCS$^!$ by taking the just paths to be the complete ones.
This way $\Do$ is a correct specification of the scheduler required in \Sect{progress} and $\D_1|\D_2$
is a correct specification of the scheduler required above. 

\section{Formal Specification of the Fair Scheduler}\label{sec:fair spec}
 
\renewcommand{\d}[1]{c_{#1}}
\newcommand{\db}[1]{c_{{#1}}!}
\newcommand{\dr}[1]{c_{#1}?}
\newcommand{\tb}[1]{t_{#1}}
\newcommand{\tr}[1]{t_{#1}?}
\newcommand{\e}{e}
\newcommand{\BB}[1]{B_{#1}}

We now provide a formal specification of the scheduler described in \Sect{fair scheduler}.
Since the aim of this paper is to show that this cannot be done in CCS$^!$ (and thus certainly not
in CCS) we need a different formalism for this task. Here we follow the
traditional approach of TLA~\cite{TLA} and several other frameworks \cite{Fr86}, ``in which
first the legal computations are specified, and then a fairness notion
is used to exclude some computations which otherwise would be legal'' \cite{AFK88}.
Following \cite{GH14}, we use CCS$^!$ for the first step and LTL for the second.

Thus, in this section we specify a process as a pair of a CCS$^!$ expression $P$ and a set $\FS$ of LTL
formulas, called a \emph{fairness specification}. The semantics of $P$ consists of the state $P$ in
the LTS of CCS$^!$ together with the set of just paths in that LTS starting from $P$. 
The formulas of $\FS$ are evaluated on the paths of $P$ and any path that satisfies all
formulas of $\FS$ is called \emph{fair}. Now the semantics of the entire specification $(P,\FS)$ is the
state $P$ in the LTS of CCS$^!$ together with the set of \emph{complete paths} of $P$,
defined as those paths that are both just and fair.
In \cite{TR13,GH14} a consistency requirement is formulated that should hold between $P$ and $\FS$.

Now a fair scheduler as described in \Sect{fair scheduler} can be specified by the CCS$^!$ process
$
(I_{1}\,|\,G\,|\,I_{2})\backslash \d1\backslash\d2
$, 
where\\[.5mm]
\centerline{$
I_{i}\stackrel{\it def}{=} r_i.\bar{c_i}.I_{i}\,\quad (i\in\{1,2\})
\quad \mbox{and}\quad
G\stackrel{\it def}{=}c_1.\tb1.\e.G  \,+\, c_2.\tb2.\e.G \,,
$}\\[1mm]
augmented with the fairness specification
$
\bigwedge_{i=1,2}\mathbf{G}(r_i \Rightarrow \mathbf{F}(t_i))
$.

Here the first requirement of \Sect{fair scheduler} is satisfied by locating the two
channels receiving the requests $r_1$ and $r_2$ on different parallel components $I_{1}$ and $I_{2}$.
This way, after performing, say, $r_1$, the system---component $I_{1}$ to be precise---will always return
to a state where it remains ready to accept the next request $r_1$ until it arrives, independent of
occurrences of $r_2$.
The (non-output) actions $c_{i}$ are used to communicate the request from the processes $I_{i}$ to 
a central component $G$, which then performs the requested action $t_{i}$.

The second requirement of \Sect{fair scheduler} is enforced by the fairness specification,
and the last two requirements of \Sect{fair scheduler} are met by construction:
in any partial run there are no more occurrences of $t_i$ than of $r_i$,
and between each two occurrences of $t_i$ and $t_j$ for $i,j\in\{1,2\}$ the intermittent action $e$ is scheduled.

\section[Fair Schedulers Cannot be Rendered in CCS---Formalisation]
        {Fair Schedulers Cannot be Rendered in CCS$^!$---Formalisation}\label{sec:no fair scheduler}
%%%%%%%%%%%%%%%%%%%%%%%

In this section we formulate the main result of the paper, namely that no scheduler as described in
Sections~\ref{sec:fair scheduler} and~\ref{sec:fair spec} can be specified in CCS$^!$. Since we already showed that it \emph{can} be
specified in CCS$^!$ augmented with a fairness specification, here, and in the rest of the paper, we
confine ourselves to CCS$^!$ without fairness specifications. Thus, our notion of a
complete path is (re)set to that of a just path, as specified in \Def{just path}.

\begin{theorem}\rm\label{thm:no fair scheduler}
There does not exist a CCS$^!$ expression $F$ such that:
\vspace{-1ex}
\begin{enumerate}
\item any complete path of $F$ that has finitely many occurrences of $r_i$ is $r_i$-enabled;\label{enabled}
\item on each complete (= just) path of $F$, each $r_i$ is followed by a $t_i$;\label{fair}
\item on each finite path of $F$ there are no more occurrences of $t_i$ than of $r_i$;\label{patient} and
\item between each two occurrences of $t_i$ and $t_j$ ($i,j\in\{1,2\}$) an action $\e$ occurs.\label{coordinated}
\end{enumerate}
\end{theorem}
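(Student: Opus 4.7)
The plan is to reduce this theorem to the corresponding impossibility result for safe Petri nets. Following Degano, De Nicola and Montanari, I would equip CCS$^!$ with a Petri net semantics assigning to each expression $F$ a safe Petri net $N_F$, in such a way that the reachable markings, transitions, and decomposed just paths of $N_F$ are in faithful correspondence with the LTS of $F$ and its just paths as given by \Def{just path}, and that $r_i$-enabledness is preserved. Any CCS$^!$ witness $F$ to conditions (\ref{enabled})--(\ref{coordinated}) would then yield a safe Petri net witness $N_F$ to the analogous net-level conditions, so it suffices to rule out a net-level implementation.

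At the Petri net level the key structural fact to exploit is that a transition is continuously enabled from some point on in a firing sequence precisely when no later fired transition consumes a token from its preset $\precond t$. This is a genuinely local, combinatorial property---unavailable in plain LTS form---and it is what powers the technical lemmas alluded to in the introduction (from $\Lem{one step embellishment}$ through $\Lem{extension to complete path}$). Armed with this handle I would argue by contradiction: assume the safe net $N$ realises the scheduler, and consider a just run in which the environment issues only $r_1$ requests, alternating them with their mandatory $t_1$ grants. Condition (\ref{coordinated}) forces an $e$ between every two $t_1$'s, so the $t_1$-granting transitions share a coordinating subnet of places. I would then insert an $r_2$ at a late stage and try to paste the ensuing mandatory $t_2$-response into the existing run. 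Using the structural enabling criterion above, the aim is to construct an infinite just extension in which the $r_2$-token is never consumed, contradicting condition (\ref{fair}).

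The main obstacle is precisely this pasting step: one must verify that the coordinating subnet forced by (\ref{coordinated}) does not prevent $t_2$ from being indefinitely postponed. This is where the earlier versions of the result had to impose restrictions---Vogler's proof in \cite{Vogler02} relies on the net being finite, and Kindler \mand Walter in \cite{KW97} restrict to a particular interface shape---and neither assumption is available for nets $N_F$ arising from CCS$^!$ expressions, which can be infinite and indeed infinitely branching. I would expect to need a K\"onig-style compactness argument to extract the desired infinite just path as a limit of finite "bad" fragments, each produced by a single-step embellishment of an approximation, while carefully tracking which places remain marked so as to witness the structural enabling criterion throughout.

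Finally, the transfer back from Petri nets to CCS$^!$ hinges on the precise correspondence between just paths in $N_F$ and the just paths of $F$ per \Def{just path}; this would be handled separately in the operational Petri net semantics section. Granting that correspondence, the contrapositive of the net-level impossibility gives \Thm{no fair scheduler}.
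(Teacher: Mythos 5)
Your overall architecture matches the paper's: reduce to an impossibility result for safe Petri nets via the Degano--De Nicola--Montanari operational net semantics, exploit the structural characterisation of a transition being continuously enabled (its preset $\precond{v}$ disjoint from the presets of all later-fired transitions), and obtain complete paths as limits of successively ``less incomplete'' embellishments, with the correspondence of completeness and $r_i$-enabledness between net and LTS handled separately --- exactly where the paper places it.

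The gap is in the core net-level argument, and it is precisely the step you flag as ``the main obstacle''. You aim to insert a single request of the other type into a run serving only one type of request, and then to build a just extension on which the corresponding grant never occurs, contradicting condition (2). But nothing guarantees such an extension exists: the completion procedure may be forced to schedule the grant, since a complete path may omit a non-blocking transition only if that transition never becomes continuously enabled. The paper turns this into a dichotomy rather than trying to win that horn outright. Starting from a path with exactly one $r_1$, no $t_1$ and infinitely many $t_2$ (obtained by your insertion step applied to a complete path that is $r_1$-enabled by condition (1)), it runs the embellishment process while refusing to insert $t_1$-labelled transitions. If this never encounters a $t_1$-enabled path, the limit is a complete path containing an $r_1$ but no $t_1$, contradicting condition (2). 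Otherwise some approximant is $t_1$-enabled while still featuring infinitely many occurrences of $t_2$; a $t_1$-transition is then enabled, with its preset untouched, immediately after some later occurrence of $t_2$, and firing it yields a finite path with consecutive $t_2$ and $t_1$ and no intervening $e$, contradicting condition (4). So condition (4) is not merely the source of a ``coordinating subnet''; it is the requirement that delivers the contradiction exactly in the case where your pasting step fails. Without this case split the argument does not close.
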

Requirements \ref{enabled}--\ref{coordinated} exactly formalise the
four requirements described in \Sect{fair scheduler}. We proceed to show that none of them can be skipped.

The CCS$^!$ process $\D_1|\D_2$ of \Sect{justness}
satisfies Requirements \ref{enabled}, \ref{fair} and \ref{patient}.
It does not satisfy Requirement \ref{coordinated}, due to the partial run $r_1r_2t_1t_2$.
\vspace{2pt}

The CCS$^!$ process $E_1|G_1|E_2$ with \plat{$E_i \stackrel{\it def}{=} r_i.E_i$} for $i\mathbin=1,2$ and
\plat{$G_1 \stackrel{\it def}{=} t_1.e.t_2.e.G_1$}
satisfies Requirements \ref{enabled}, \ref{fair} and \ref{coordinated}.
It does not satisfy Requirement \ref{patient}, due to the partial run consisting of the single action $t_1$.

The CCS$^!$ process $E_1|E_2$ satisfies
Requirements \ref{enabled}, \ref{patient} and \ref{coordinated}, but not \ref{fair}.
\vspace{2pt}

Finally, the process $G_2$ given by
\plat{$G_2\stackrel{\it def}{=} r_1.\tb1.\e.G_2  \,+\, r_2.\tb2.\e.G_2$}
satisfies Requirements \ref{fair}, \ref{patient} and \ref{coordinated}. However, it does not
satisfy Requirement \ref{enabled}, because it allows the $\emptyset$-just path $(r_2 t_2 e)\inf$
with no occurrences of $r_1$. This path models a run in which the system never reaches a state where
it \emph{remains} ready to accept the next request $r_1$.

The proof of \Thm{no fair scheduler} will be the subject of
\Part{three}.

\section[A Characterisation of Fair Schedulers without enabling]{A Characterisation of Fair Schedulers without $a$-enabling}\label{sec:characterisation}
%%%%%%%%%%%%%%%%%%%%%%%

Below we will show that without loss of generality we may assume
any fair scheduler to have a specific form. If it has that form,
Requirement~\ref{enabled} is redundant. Hence Requirement~\ref{enabled} can be replaced by
requiring that the scheduler is of that form.
This variant of our result appeared as a conjecture in \cite{GH14}.

For any CCS$^!$ expression $F$, let $\widehat F:= (I_1\,|\,F[f]\,|\,I_2)\backslash \d1\backslash\d2$
with \plat{$I_{i}\stackrel{\it def}{=} r_i.\bar{c_i}.I_{i}$} for $i\in\{1,2\}$,
where $f$ is an injective relabelling with $f(r_i)=c_i$ for $i=1,2$, and
${r_1},{r_2},\bar{r_1},\bar{r_2}\notin f(\HC)$.
By the definition of relabelling (cf. Section~\ref{sec:progress}), $f(t_{i})=t_{i}$ and $f(e)=e$.

\begin{theorem}\rm\label{thm:no fair scheduler2}
A process $F$ meets Requirements \ref{enabled}--\ref{coordinated} of \Thm{no fair scheduler} iff
$\widehat F$ meets these requirements, which is the case iff
$\widehat F$ meets Requirements \ref{fair}--\ref{coordinated}.
\end{theorem}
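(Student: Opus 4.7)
My plan is to prove the statement as three implications: $\widehat F$ meets 1--4 $\Rightarrow$ $\widehat F$ meets 2--4 (trivial); $\widehat F$ meets 2--4 $\Rightarrow$ $\widehat F$ meets 1; and the equivalence $F$ meets 1--4 $\iff$ $\widehat F$ meets 1--4 via a correspondence between just paths. Both non-trivial steps rest on a case analysis of what $I_i$ does after its last $r_i$-transition, and the hypothesis $\{r_1, r_2, \bar{r_1}, \bar{r_2}\} \cap f(\HC) = \emptyset$ is indispensable throughout.

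For $\widehat F$ meets 2--4 $\Rightarrow$ $\widehat F$ meets 1, let $\pi$ be a just path of $\widehat F$ with exactly $n$ occurrences of $r_i$. Since every $r_i$-transition of $\widehat F$ originates in $I_i$, after the $n$-th $r_i$ that component sits in state $\bar{c_i}.I_i$. Either (A) it later takes part in a $\tau$-sync with $F[f]$ on $c_i$ and thereafter stays in state $I_i$ forever, or (B) no such synchronisation ever occurs. In case (A), decomposing $\pi$ as in \Def{just path} gives a finite projection onto $I_i$ ending in a state admitting only $r_i$; using $\bar{r_i} \notin f(\HC)$ and $r_i \notin \{c_j, \bar{c_j}\}$, this $r_i$ propagates through the parallel composition, relabelling and restriction clauses to give $r_i \in Y$ for every $Y$ making $\pi$ $Y$-just, so $\pi$ is $r_i$-enabled. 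In case (B), I would construct $\pi^*$ by deleting the $n$-th $r_i$ from $\pi$: since all subsequent transitions are contributed by $F[f]$, by $I_j$ (with $j \neq i$), or by a synchronisation not involving $I_i$, they remain derivable with $I_i$ left in state $I_i$; justness transfers because the $X$-set for $I_i$ merely changes from $\{\bar{c_i}\}$ to $\{r_i\}$, and the side-condition $X \cap \bar Z = \emptyset$ still holds thanks to $\bar{r_i} \notin f(\HC)$. Then $\pi^*$ has $n-1$ occurrences of $r_i$ but the same $t_i$-count as $\pi$, which by Req.~\ref{fair} is at least $n$, so some finite prefix violates Req.~\ref{patient}, ruling out case (B).

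For $F$ meets 1--4 $\iff$ $\widehat F$ meets 1--4, each just path $\pi$ of $\widehat F$ decomposes into just projections $\pi_{I_1}, \pi_F, \pi_{I_2}$, with $\pi_F' := f^{-1}(\pi_F)$ a just path of $F$; conversely, each just path $\eta$ of $F$ admits a Case (A) lift $\hat\eta$ in which every $r_i$ of $\eta$ is realised as $r_i$ of $I_i$ immediately followed by the matching $\tau$-sync. Counts and orderings of $t_i$ and $e$ are preserved by both mappings, so Req.~\ref{fair}--\ref{coordinated} transfer by inspection. For Req.~\ref{enabled} in the direction $F \Rightarrow \widehat F$, a hypothetical Case (B) path $\pi$ of $\widehat F$ would project to a $\pi_F'$ with $n-1$ $r_i$'s that by Req.~\ref{enabled} on $F$ is $r_i$-enabled, forcing $c_i \in Z$ for every $Z$ making $\pi_F$ $Z$-just and contradicting the Case (B) condition $c_i \notin Z$. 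For the direction $\widehat F \Rightarrow F$, if some $\eta$ were a counter-example to Req.~\ref{enabled} on $F$, then appending an extra unmatched $r_i$ of $I_i$ to the Case (A) lift of $\eta$ would yield a just Case (B) path of $\widehat F$ with finitely many $r_i$'s that is not $r_i$-enabled, contradicting Req.~\ref{enabled} for $\widehat F$; crucially, the assumed non-$r_i$-enabledness of $\eta$ is precisely what makes this construction satisfy $X \cap \bar Z = \emptyset$. The principal obstacle throughout is the careful bookkeeping of $Y$-justness through the parallel composition, relabelling and restriction clauses, and the verification that each of the modified paths ($\pi^*$, the lift $\hat\eta$, and the Case (B) witness) is both derivable and just.
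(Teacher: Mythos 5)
Your overall architecture (three implications closing a cycle) is sound, your Case~(A) analysis is essentially the paper's, and your treatment of the equivalence between $F$ and $\widehat F$ follows the paper's reasoning closely. The genuine gap is in your direct proof of ``$\widehat F$ meets Requirements \ref{fair}--\ref{coordinated} $\Rightarrow$ $\widehat F$ meets Requirement~\ref{enabled}'', Case~(B). You claim that the $t_i$-count of $\pi$ ``by Requirement~\ref{fair} is at least $n$''. This is false: Requirement~\ref{fair} only demands that each $r_i$ be followed by \emph{some} $t_i$, and a single $t_i$ may discharge several preceding $r_i$'s---the paper stresses exactly this in \Sect{fair scheduler}, where three $r_1$'s in a row may be answered by one $t_1$. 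So $\pi$ may have far fewer than $n$ occurrences of $t_i$, and deleting the $n$-th $r_i$ yields a path $\pi^*$ with $n-1$ occurrences of $r_i$ whose prefixes need not violate Requirement~\ref{patient}. Worse, after the deletion the component $I_i$ ends in state $I_i$, which admits $r_i$, so $\pi^*$ is even $r_i$-enabled: it is simply a harmless just path consistent with all four requirements, and no contradiction is forthcoming from it. A Case~(B) path of the shape $r_i\,\tau\,t_i\,e\,r_i\cdots$ with the second $r_i$ never synchronised survives your deletion unscathed.

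The repair cannot be a deletion; it must be a \emph{relocation}. From a Case~(B) path one extracts the projection onto $F$, observes that it is a just path of $F$ with finitely many $r_i$'s that is not $r_i$-enabled (the clause $X\cap\bar Z=\emptyset$ with $\bar{c_i}\in X$ gives $c_i\notin Z$), and then \emph{recomposes} it with a copy of $I_i$ that fires its surplus $r_i$ only \emph{after the last} $t_i$ of that projection. The result is a just path of $\widehat F$ on which an occurrence of $r_i$ is never followed by a $t_i$, contradicting Requirement~\ref{fair}. This is precisely why the paper does not attack ``$\widehat F$(\ref{fair}--\ref{coordinated}) $\Rightarrow$ $\widehat F$(\ref{enabled})'' head-on, but instead closes the cycle $F$(\ref{enabled}--\ref{coordinated}) $\Rightarrow$ $\widehat F$(\ref{enabled}--\ref{coordinated}) $\Rightarrow$ $\widehat F$(\ref{fair}--\ref{coordinated}) $\Rightarrow$ $F$(\ref{enabled}--\ref{coordinated}): the step from $\widehat F$(\ref{fair}) back to $F$(\ref{enabled}) is proven contrapositively by exactly this relocation construction. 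You already have all of its ingredients in your $F\Leftrightarrow\widehat F$ argument, so rerouting your proof through $F$ in this way would close the gap.
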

\begin{proof}
Suppose $F$ satisfies Requirements \ref{enabled}--\ref{coordinated}.
\begin{enumerate}
\item To show that $\widehat F$ satisfies Requirement \ref{enabled} (with $i\mathbin{:=}1$; the other case
  follows by symmetry), it suffices to show that each occurrence of $r_1$ in a just path of
  \plat{$\widehat F$}, which corresponds to an occurrence of $r_1$ in the subprocess $I_1$, is followed by
  an occurrence of $\bar{c_1}$ in $I_1$.

  So assume, towards a contradiction, that on a just path $\eta$ of \plat{$\widehat F$} an
  occurrence of $r_1$ is not followed by an occurrence of $\bar{c_1}$ in the subprocess $I_1$.
  By \Def{just path} $\eta$ must be $Y\!$-just for some $Y\subseteq\HC$.
  So $\eta$ can be decomposed into an $X$-just path $\eta_1$ of $I_1$, a $Z$-just path $\eta_0$ of
  $F[f]$ and a $W$-just path $\eta_2$ of $I_2$ for certain $X,Z,W\subseteq\HC$. By assumption, $\bar{c_1}\in X$.
  Moreover, $\eta_0$ can be decomposed into an $f^{-1}(Z)$-just path $\eta_F$ of $F$.
  Since in \plat{$\widehat F$} the $c_1$ of $F[f]$ requires synchronisation with the
  $\bar{c_1}$ of $I_1$, and $\eta_1$ has only finitely many occurrences of $\bar{c_1}$,
  it follows that $\eta_0$ has only finitely many occurrences of $c_1$, and thus that
  $\eta_F$ has only finitely many occurrences of $r_1$.
  Since $F$ satisfies Requirement~\ref{enabled}, saying that the system will always
  return to a state where it remains ready to accept the next request $r_1$ until it arrives,
  $r_1\in f^{-1}(Z)$. Hence $c_1\in Z$.
  By \Def{just path}, this contradicts the justness of $\eta$.
\item Above we have shown that each occurrence of $r_1$ in a just path of
  \plat{$\widehat F$}, which corresponds to an occurrence of $r_1$ in the subprocess $I_1$, is followed by
  an occurrence of $\bar{c_1}$ in $I_1$. This occurrence of $\bar{c_1}$ in $I_1$ must be a
  synchronisation with an occurrence of $c_1$ in $F[f]$, and by Requirement~\ref{fair} for $F$ each
  occurrence of $c_1$ in $F[f]$ is followed by an occurrence of $t_1$ in $F[f]$, and hence in $\widehat{F}$.
\item By Requirement~\ref{patient} for $F$, on each finite path from $F[f]$ there are no more
  occurrences of $t_1$ than of $c_1$. Moreover, on each finite path from $I_1$ there are no more
  occurrences of $\bar{c_1}$ than of $r_1$. Since in $\widehat F$ each occurrence of $c_1$ in $F[f]$ needs to
  synchronise with an occurrence of  $\bar{c_1}$ in $I_1$, it follows that on each finite path from
  $\widehat F$  there are no more occurrences of $\bar{c_1}$ than of $r_1$.
\item Requirement~\ref{coordinated} holds for $\widehat F$ because it holds for $F$.
\end{enumerate}
Now assume that $\widehat F$ satisfies Requirements \ref{fair}--\ref{coordinated}.
\begin{enumerate}
\item Suppose that $F$ would fail Requirement~\ref{enabled}, say for $i=1$. Then it has a $Z$-just
  path with $r_1\notin Z$. Therefore $F[f]$ has an $f(Z)$-just path with $c_1\notin f(Z)$.
  This path can be synchronised with a $\bar{c_1}$-just path of $I_1$ into a just path of $\widehat{F}$
  in which an occurrence of $r_1$ follows the last occurrence of $t_1$, thereby violating
  Requirement~\ref{fair} for $\widehat{F}$.
\item Suppose that $F$ would fail Requirement~\ref{fair}, say for $i=1$. Then it has a just
  path with an occurrence of $r_1$ past the last occurrence of $t_1$. Therefore $F[f]$ has a $Z$-just
  path with $\bar{r_1}\notin Z$ and an occurrence of $c_1$ past the last occurrence of $t_1$.
  This path can be synchronised with a $r_1$-just path of $I_1$ into a just path of $\widehat{F}$
  in which an occurrence of $r_1$ follows the last occurrence of $t_1$, thereby violating
  Requirement~\ref{fair} for $\widehat{F}$.
\item Suppose $F$ had a finite path with more occurrences of $t_1$ than of $r_1$, then through
  synchronisation a finite path of $\widehat{F}$ could be constructed with more occurrences of $t_1$ than of $r_1$.
\item Requirement~\ref{coordinated} holds for $F$ because it holds for $\widehat F$.
\qed
\end{enumerate}
\end{proof}
Recall that $G_2$, given by $G_2\stackrel{\it def}{=} r_1.\tb1.\e.G_2  \,+\, r_2.\tb2.\e.G_2$,
satisfies Requirements \mbox{\ref{fair}--\ref{coordinated}}. 
Converting this $G_2$ to the process $\widehat {G_2}$ of the form  $(I_1\,|\,G\,|\,I_2)\backslash
\d1\backslash\d2$, as defined above, results in the specification of \Sect{fair spec}
without the additional fairness specification, and hence in the loss of Requirement \ref{fair}.

\section{\hspace{-0.4pt}(\hspace{-0.1pt}In\hspace{-0.1pt})\hspace{-0.4pt}Correct\hspace{-0.4pt} Correctness\hspace{-0.4pt} Proofs\hspace{-0.4pt} of\hspace{-0.4pt} Peterson's\hspace{-0.4pt} and\hspace{-0.4pt} Dekker's\hspace{-0.4pt} Protocols}\label{sec:discussion}
%%%%%%%%%%%%%%%%%%%%%%%

It is widely accepted that Peterson's mutual exclusion protocol~\cite{Peterson81} implements a fair
scheduler, and that implementing Peterson's algorithm in a CCS-like language should be easy.
In fact Peterson's algorithm has been specified in CCS-like languages several times, e.g.\
\cite{Walker89,Bouali91,Valmari96,AcetoEtAl07}.   All these papers present essentially the same rendering of Peterson's algorithm
in CCS or some other progress algebra, differing only in insignificant implementation details.
This seems to contradict our main result (Theorem~\ref{thm:no fair scheduler}).
\newcommand{\procA}{{\rm A}\xspace}
\newcommand{\procB}{{\rm B}\xspace}

Peterson's Mutual Exclusion Protocol deals with two concurrent processes \procA and \procB that want to
alternate critical and noncritical sections. Each of these processes
will stay only a finite amount of time in its critical section,
although it is allowed to stay forever in its noncritical section.
The purpose of the algorithm is to ensure that they are never
simultaneously in the critical section, and to guarantee that both
processes keep making progress. 
Pseudocode is depicted in \Fig{peterson}.

\begin{figure}
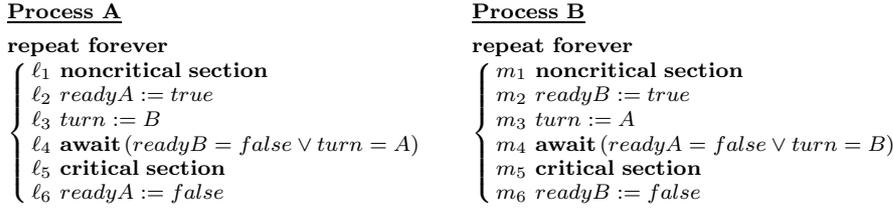

\[
\begin{array}{@{}l@{}}
\underline{\bf Process~A}\\[1ex]
{\bf repeat~forever}\\
\left\{\begin{array}{ll}
\ell_1 & {\bf noncritical~section}\\
\ell_2 & readyA := true	\\
\ell_3 & turn := B \\
\ell_4 & {\bf await}\,(readyB = false \vee turn = A) \\
\ell_5 & {\bf critical~section}\\
\ell_6 & readyA := false \\
\end{array}\right.
\end{array}
~~~~~~
\begin{array}{@{}l@{}}
\underline{\bf Process~B}\\[1ex]
{\bf repeat~forever}\\
\left\{\begin{array}{ll}
m_1 & {\bf noncritical~section}\\
m_2 & readyB := true	\\
m_3 & turn := A \\
m_4 & {\bf await}\,(readyA = false \vee turn = B) \\
m_5 & {\bf critical~section}\\
m_6 & readyB := false \\
\end{array}\right.
\end{array}
\]
\caption{Peterson's algorithm (pseudocode)}
\label{fig:peterson}
\end{figure}

The processes use three variables. The Boolean variable $readyA$ can be
written by process \procA and read by process \procB, whereas $readyB$ can be
written by \procB and read by \procA. By setting $readyA$ to $true$, process
\procA signals to process \procB that it wants to enter the critical
section. The variable $turn$ is a shared variable: it can be written
and read by both processes. Its use is the brilliant part of the algorithm.
Initially $readyA$ and $readyB$ are both $false$ and $turn = A$.

Peterson's algorithm implements a mutual exclusion protocol and hence 
should satisfy the safety property that at any time only one process accesses the critical system, i.e.\
\[{\bf G} (\neg((\ell_{4}\vee\ell_{5})\wedge (m_{4}\vee m_{5})))\,.\]
Here, $\ell_{i}$ and $m_{j}$ refer to line numbers of the pseudocode (\Fig{peterson}). 
As convention we assume that line numbers refer to a state in the execution of the code where the 
command of the line has already been executed.
Most papers, including \cite{Bouali91,AcetoEtAl07},  
concentrate on the issue of mutual
exclusion only, and that is done correctly in the CCS rendering.
When safety properties are considered only, no fairness or progress assumption is needed: 
in the worst case some (or all) processes do not progress and hence never enter the critical section---% 
the safety property still holds. 

As usual, a safety property should therefore be accompanied with a liveness property.
In case of Peterson's protocol such a property is that any process
that wants to enter the critical section will at
some point reach the critical section. We consider two possibilities to characterise this property:
\begin{eqnarray*}
\bf{G}(\ell_{1}\Rightarrow{\bf F}(\ell_{4}))\,,\quad\mbox{and}\qquad
\bf{G}(\ell_{2}\Rightarrow{\bf F}(\ell_{4}))\,.
\footnotemark
\end{eqnarray*}
\footnotetext{We only give the liveness property for process \procA; the one for process \procB is similar.}%
Both properties have the form ${\bf G} (r_{1}\Rightarrow {\bf F}(t_1))$---the property discussed in this paper---%
$\ell_{4}$ indicates that the process enters the critical section. Both $\ell_{1}$ and $\ell_{2}$ could 
play the r\^ole of the grant request $r_{1}$.
Although it seems surprising, we will show that there is a fundamental difference between these two formulas.

To show how Peterson's algorithm yields an instance of our fair scheduler, we consider the
action $t_1$ to be taken when an execution passes through state $\ell_4$, thereby interpreting $t_1$ as
granting access to the critical section. The action $e$ is taken when the execution passes through state
$\ell_5$, marking the \emph{exit} of the critical section. Peterson's code, in combination
with the mutual exclusion property, ensures that Requirement~\ref{coordinated} of our fair scheduler
is satisfied. We consider $r_1$ to be taken when an execution passes through state $\ell_1$, so that the
liveness property ${\bf{G}}(\ell_{1}\Rightarrow{\bf F}(\ell_{4}))$ ensures Requirement~\ref{fair}.
Requirement \ref{enabled} is satisfied, because as soon as the environment of the protocol leaves
the noncritical section, thereby getting ready to enter the critical one, the protocol is considered
to take the action $r_1$. Finally, Requirement~\ref{patient} is obviously ensured by Peterson's code.

In combination with this insight, our main result (\Thm{no fair scheduler}) entails that
the rendering of Peterson's algorithm in CCS found in the literature cannot be correct, as long as the semantics of CCS
is fortified with at most justness.
To prove liveness of Peterson's protocol,
 at least weak fairness  is required.\footnote{Whether weak fairness suffices depends on the interpretation of enabledness (cf.\ \Sect{fairness})}
In the literature we found only two papers that investigate
liveness properties of this protocol:
\cite{Walker89} and \cite{Valmari96}. Neither of these papers employs fairness or justness properties.

Walker~\cite{Walker89} tries to prove the correctness of Peterson's algorithm by automatic methods.
He succeeds for the safety property, but could not establish the liveness property
in full generality; however Walker succeeded in
proving it when restricting attention to runs in which infinitely many
visible actions occur. This appears to be Walker's method of imposing a progress assumption.
Although this is strictly speaking not in
contradiction with our results, our proofs trivially extend to the
case of considering only runs in which infinitely many visible actions
occur. Hence Walker's result seems to be in contradiction to 
\Thm{no fair scheduler}.
A detailed analysis reveals that Walker uses line $\ell_{2}$ as request action to indicate 
interest to enter the critical section.
So he shows that $\bf{G}(\ell_{2}\Rightarrow{\bf F}(\ell_{4}))$.
That means that the \emph{shared variable} $readyA$ must 
be set---only then $\ell_{2}$ evaluates to true.
His request action is set right after
setting this variable. 
However, following our proof, the reason that
the CCS rendering of Peterson does not work, is that it is possible that
process \procA never gets a change to set the shared variable $readyA$ to true, because the
other process is too busy reading it all the time (even when it
enters the critical section between any two reads). 
So, it is a possible scenario that process \procA will never execute line $\ell_{2}$, although 
it wants to enter the critical section.

In Peterson's
original thinking, process \procB could not prevent process \procA
from writing by reading a shared variable; but in the CCS model this is
quite possible: the read action can only be represented as a transition that is in conflict with the write action;
only after this transition is taken does the process return to a state where the write is enabled.
So when Walker~\cite{Walker89} establishes $\bf{G}(\ell_{2}\Rightarrow{\bf F}(\ell_{4}))$
he merely shows that when the first hurdle is taken successfully
the process will surely enter the critical section. What he cannot establish
is that a process that is ready to enter the
critical section will succeed in setting $readyA$. 
The correct modelling of Peterson's liveness property
thus places action $r_1$ \emph{before} setting the variable $readyA$ to true, i.e.\
\[
\bf{G}(\ell_{1}\Rightarrow{\bf F}(\ell_{4}))\,.
\]
In terms of our description of a fair scheduler, the action $r_1$ of
Walker (at position $\ell_2$) does not meet Requirement~\ref{enabled}.

The analysis of the work of Walker shows that there is a fine line between 
correct and incorrect modelling. In fact it looks reasonable to prove 
$\bf{G}(\ell_{2}\Rightarrow{\bf F}(\ell_{4}))$ instead of 
$\bf{G}(\ell_{1}\Rightarrow{\bf F}(\ell_{4}))$. There is
no formal way to avoid such mistakes; only careful (informal) reasoning.

Roughly the same modelling, but in which the
request $r_1$ is \emph{identified}
with setting the shared variable $readyA$ to true,
occurs in Valmari \mand Set\"{a}l\"{a}~\cite{Valmari96}.
The consequences are the same.

Dekker's mutual exclusion protocol~\cite{EWD35,EWD123} is another well-known algorithm that implements a fair
scheduler. We found two papers in the literature that analyse liveness of this protocol.

Esparza \mand Burns~\cite{EsparzaBruns96}
follow in the footsteps of Walker and prove the correctness of
Dekker's mutual exclusion algorithm in the Box Calculus without
  postulating a fairness assumption. According to our results, this
is impossible as well. Indeed, as in \cite{Valmari96}, Esparza \mand Burns model the request $r_1$
to be the action of setting a shared variable, which again violates the property
that a process that wants to enter the critical section can always
succeed at least in making a request to that effect.

Corradini, Di Berardini \mand Vogler~\cite{CorradiniEtAl09}
specify Dekker's algorithm in the CCS-like process algebra PAFAS. 
They also prove the correctness 
of the algorithm. This paper models the relevant liveness properties correctly, as far
as we can see, but explicitly makes different assumptions on the driving
force that keeps the system running. First they consider a notion of
`fairness of actions' that appears to be similar to our
justness assumption,\footnote{It differs in a crucial way,
    however, namely by treating each action as output. As a
    consequence, under fairness of actions the process $F_1|F_2$ of
    \Sect{justness} is guaranteed to perform each of the actions $r_1$
    and $r_2$ infinitely often.  To model a protocol where the action
    $r_i$ is not forced to occur, a $\tau$-loop is inserted 
    at each location where $r_{i}$ is enabled.}
    and they show that their model of Dekker's protocol fails
to have the required liveness property. This result is entirely
consistent with ours. In fact we generalise their negative result
about the correctness of a particular rendering in PAFAS of a
particular protocol for mutual exclusion to a general statement
quantifying of all renderings of all such protocols.

Next they consider a stronger notion of fairness called `fairness
of components', stemming from \cite{CS87}, and, under this assumption, establish the correctness of
the algorithm.%
\footnote{Fairness of components is a form of weak fairness,
requiring that if a \emph{component} from some point onwards is enabled in each state, an action
from that component will eventually be scheduled.
Here a component is enabled if an action from that component is enabled, possibly in synchronisation
with an action from outside that component.\vspace{2pt} Under this notion of fairness, the system $E$
from \Sect{fairness}, defined by \plat{$E \stackrel{{\it def}}{=} a.E + b.0$}, is not ensured to do
a $b$ eventually. However, the composition $(E|\bar b.c.0)\backslash b$ is ensured to do a $c$
eventually, because the component $\bar b.c.0$ is enabled in every state.}
The present paper augments this result by saying that a fairness notion as strong as `fairness
of components' is actually needed. 

In \Part{one} we pointed out that our result holds for CCS+justness, CCS+progress and CCS without any progress assumption. 
However a fair scheduler can be implemented when a fairness assumption is assumed; fairness of
components appears to be sufficient.

%%%%%%%%%%%%%%%%%%%%%%%
\part{Proofs}{three}
%%%%%%%%%%%%%%%%%%%%%%%
\section{Fair Schedulers Cannot be Rendered in Petri Nets---Formalisation}\label{sec:nets}
%%%%%%%%%%%%%%%%%%%%%%%

\newcommand{\Act}{Act}
This section introduces Petri nets and rephrases \Thm{no fair scheduler} in terms of Petri nets.
We inherit the sets $Act$ of actions and $\HC\subseteq\Act$ of handshaking communications from \Sect{CCS},
and the set $\OA$ of output actions from \Sect{progress}.

A {\em multiset} over a set $X$ is a function $A\!:X \rightarrow \NN$,
i.e.\ $A\in \NN^{X}\!$.
The function $\emptyset\!:X\rightarrow\NN$, given by
  $\emptyset(x):=0$ for all $x \in X$, is the \emph{empty} multiset over $X$.\\
$x \in X$ is an \emph{element of} $A$, notation $x \in A$, iff $A(x) > 0$.\\
For multisets $A$ and $B$ over $X$ we write $A \leq B$ iff
 \mbox{$A(x) \leq B(x)$} for all $x \in X$;
\\ $A\cap B$ denotes the multiset over $X$ with $(A\cap B)(x):=\text{min}(A(x), B(x))$,
\\ $A + B$ denotes the multiset over $X$ with $(A + B)(x):=A(x)+B(x)$, and
\\ $A - B$ is only defined if $B\leq A$ and then denotes the multiset over $X$ with 
$(A - B)(x):=A(x)-B(x)$.
\\ A multiset $A$ with $A(x)\leq 1$ for all $x$ is identified with the (plain) set $\{x\mid A(x)\mathbin=1\}$.%
\begin{definition}
  A (\emph{labelled}) \emph{Petri net} (\emph{over $\Act$}) is a tuple
  $N = (S, T, F, M_0, \ell)$ with
  \begin{list}{{\bf --}}{\leftmargin 18pt
                        \labelwidth\leftmargini\advance\labelwidth-\labelsep
                        \topsep 0pt \itemsep 0pt \parsep 0pt}
    \item $S$ and $T$ disjoint sets (of \emph{places} and \emph{transitions}),
    \item $F: ((S \times T) \cup (T \times S)) \rightarrow \NN$
      (the \emph{flow relation} including \emph{arc weights}),
    \item $M_0 : S \rightarrow \NN$ (the \emph{initial marking}), and
    \item \plat{$\ell: T \rightarrow \Act$} (the \emph{labelling function}).
  \end{list}
\end{definition}

\noindent
When a Petri net represents a concurrent
system, a global state of this system is given as a \emph{marking},
a multiset $M$ of places. 
The initial state is $M_0$.

The behaviour of a Petri net is defined by the possible moves between
markings $M$ and $M'$, which take place when a transition $u$ \emph{fires}.  In that case,
$u$ consumes $F(s,u)$ tokens from each 
place $s$.  Naturally, this can happen only if $M$ makes all these
tokens available in the first place.  Moreover, $u$ produces $F(u,s)$ tokens
in each $s$.  \Def{firing} formalises this notion of behaviour.

\begin{definition}\label{df:firing}
Let $N = (S, T, F, M_0, \ell)$ be a Petri net and $u\mathbin\in T$.
The multisets $\precond{u},~\postcond{u}: S \rightarrow
\NN$ are given by $\precond{u}(s)=F(s,u)$ and
$\postcond{u}(s)=F(u,s)$ for all $s \mathbin\in S$;%
\footnote{Here, we slightly deviate from
  standard notation \cite{Re85}, where $\precond{u}$ and $\postcond{u}$ are usually plain sets,
obtained from our multisets by abstracting from the multiplicities of their elements. 
We prefer to retain this information, so as to shorten various formulas.}
the elements of $\precond{u}$ and $\postcond{u}$ are
called \emph{pre-} and \emph{postplaces} of $u$, respectively.
Transition $u\mathbin\in T$ is \emph{enabled} from the marking $M\mathbin\in\NN^S$---notation
$M[u\rangle$---if $\precond{u} \leq M$.
In that case firing $u$ yields the marking 
$M':=M-\precond{u}+\postcond{u}$\linebreak[3]---notation $M[u\rangle M'$.
\end{definition}
A \emph{path} $\pi$ of a Petri net $N$ is an alternating sequence $M_0 u_1 M_1 u_2 M_2 u_3 \dots$ of markings and
transitions, starting from the initial marking $M_0$ and either being infinite or ending in a
marking $M_n$, such that $M_k [u_k\rangle M_{k+1}$ for all $k \,(\mathord< n)$.
An action $\alpha\in\Act$ \emph{occurs} on a path $\pi$ if there is a transition $u_i$ with $\ell(u_i)=\alpha$.
A marking is \emph{reachable} if it occurs in such a path.
The Petri net $N$ is
\emph{safe} if all reachable markings $M$ are plain sets, meaning that $M(s)\leq 1$ for all places $s$.
It is a \emph{structural conflict net} \cite{GGS11} if $\precond{u}+\precond{v}\leq M \Rightarrow
\precond{u}\cap\precond{v}=\emptyset$ for all reachable markings $M$ and all transitions $u$ and $v$.
Note that any safe Petri net is a structural conflict net.
In this paper we restrict attention to structural conflict nets with the additional assumptions that $\precond{u}\mathbin=\emptyset$ for no transition $u$, 
and that all reachable markings are finite.
In the remainder we refer
to these structures as \emph{nets}. For the purpose of establishing \Thm{no fair scheduler} we could
just as well have further restricted attention to
safe Petri nets whose reachable markings are finite.

On a path $\pi = M_0 u_1 M_1 u_2 M_2 u_3 \dots$ a transition $v$ is \emph{continuously enabled from position $k$ onwards} if
$M_k [v\rangle$ and $\precond{v}\cap\precond{u_i}=\emptyset$ for all $i \mathbin > k$.
This implies that $\precond{v}\leq M_i$ for all $i \geq k$.
If such a transition $v$ exists we say that $\pi$ is $\ell(v)$-enabled.
A path is \emph{just} or \emph{complete} if it is $o$-enabled for no non-blocking action $o\in\OA\cup\{\tau\}$.

Now we have all the necessary definitions to state that our fair scheduler cannot be realised as a net.

\begin{theorem}\rm\label{thm:no fair PN scheduler}
There does not exist a net $N$ such that:
\vspace{-1ex}
\begin{enumerate}
\item any complete path of $N$ that has finitely many occurrences of $r_i$ is $r_i$-enabled;\label{enabled PN}
\item on each complete (= just) path of $N$, each $r_i$ is followed by a $t_i$;\label{fair PN}
\item on each finite path of $N$ there are no more occurrences of $t_i$ than of $r_i$;\label{patient PN} and
\item between each two occurrences of $t_i$ and $t_j$ ($i,j\in\{1,2\}$) an action $\e$ occurs.\label{coordinated PN}
\end{enumerate}
\end{theorem}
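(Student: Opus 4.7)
The plan is to argue by contradiction: assume a net $N$ meets Requirements~\ref{enabled PN}--\ref{coordinated PN}, and build a complete (= just) path $\pi$ of $N$ on which an occurrence of $r_1$ is never followed by any $t_1$, directly contradicting Requirement~\ref{fair PN}.

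The construction proceeds in two stages. First, reach a state where $r_1$ can be fired, and fire it. This is possible because, were the set of reachable markings enabling an $r_1$-labelled transition empty, any complete path from $M_0$ would contain no $r_1$; Requirement~\ref{enabled PN} would then force such a path to be $r_1$-enabled, yielding a contradiction. Hence there is a finite path $\pi_0$ ending in a marking that enables some $r_1$-transition $v$; append $v$ to obtain a finite path with exactly one occurrence of $r_1$ and no $t_1$. Second, extend this finite path into an infinite just path $\pi$ that never fires any $t_1$-labelled transition. The idea is to drive the net forward through non-$t_1$ activity: repeatedly scheduling the remaining actions ($r_2$, $t_2$, $e$, and any $\tau$-transitions), while exploiting the structural conflict property together with Requirement~\ref{coordinated PN} (which keeps $t_1$ and $t_2$ firings sequential, separated by $e$) to ensure that $t_1$ need not be continuously enabled on $\pi$. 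The resulting $\pi$ contains a single occurrence of $r_1$ and no $t_1$, so Requirement~\ref{fair PN} fails---the desired contradiction. Requirement~\ref{enabled PN} is satisfied automatically, since some $r_1$-transition can be left continuously enabled past the single firing.

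The main obstacle is the second stage: showing that at every finite prefix, an extension avoiding all $t_1$-transitions and preserving justness exists. Justness forbids $\pi$ from being $t_1$-enabled, so each step of the extension must either keep no $t_1$-transition continuously enabled, or cause an enabled $t_1$-transition to lose one of its preplace tokens through structural conflict with an unrelated transition that we schedule. Composing such one-step extensions into a limit path, and then verifying that the limit is actually just, is precisely what I expect the preceding lemmas of \Sect{proof PN}---in particular the ``one step embellishment'' lemma and the ``extension to complete path'' lemma---to supply. Once these technical ingredients are in hand, the theorem follows by the outline above; the lemmas do all the real work of showing that the post-$r_1$ portion of $N$ cannot simultaneously keep $t_2$-activity alive and force $t_1$ to fire.
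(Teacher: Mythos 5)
Your overall strategy---derive a contradiction by exhibiting a complete path containing an $r_1$ that is never followed by a $t_1$---hinges on exactly the step that cannot be carried out, and it aims the final contradiction at the wrong requirement. The crux is your second stage: the claim that the finite path ending in the single $r_1$ can always be extended to an infinite \emph{just} path that never fires a $t_1$-labelled transition, by scheduling other activity so that no $t_1$-transition remains continuously enabled. This is unsubstantiated, and it is false for nets satisfying only Requirements~\ref{enabled PN}--\ref{patient PN}: the net of $F_1|F_2$ from \Sect{justness} satisfies those three requirements, yet after its $r_1$ fires, the $t_1$-transition sits in its own parallel component, shares no preplace with any other transition, and hence is continuously enabled on every extension that does not fire it---no just extension avoiding $t_1$ exists. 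Since your purported contradiction is with Requirement~\ref{fair PN}, and Requirement~\ref{coordinated PN} enters only as a vague structural aid (``keeps $t_1$ and $t_2$ firings sequential''), your argument, if it worked, would show that no net satisfies Requirements~\ref{enabled PN}--\ref{patient PN}, which is wrong. A correct proof must make Requirement~\ref{coordinated PN} carry real weight in the contradiction itself. Note also that the embellishment lemmas you invoke only let you \emph{insert} a continuously enabled non-blocking transition; they provide no means to \emph{disable} one, which is what your plan needs.

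The paper's proof turns your second stage on its head. It first builds (Lemmas~\ref{lem:PN1} and~\ref{lem:PN2}) a path with exactly one $r_1$, no $t_1$, and \emph{infinitely many} occurrences of $t_2$---the infinitely many $t_2$'s, which your construction does not secure, are essential at the end. It then runs precisely the completion procedure you envisage while refusing to insert $t_1$-transitions, and shows in \Lem{PN3} that this must \emph{fail}: if it never hit a $t_1$-enabled path, the limit would be a complete path with an $r_1$ and no $t_1$, violating Requirement~\ref{fair PN}; hence at some point a $t_1$-transition becomes continuously enabled and cannot be dislodged. Because infinitely many $t_2$'s occur after that point, the enabled $t_1$ can be fired immediately after an occurrence of $t_2$ with no intervening $e$, and this finite path violates Requirement~\ref{coordinated PN}. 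In short, the impossibility you are trying to exploit (avoiding $t_1$-enabledness forever) is itself the theorem's hard content, and the paper derives the contradiction from the failure of that avoidance, not from its success.
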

In the proof of this theorem we do not use the restriction that $N$ is a structural conflict net.
However, for general Petri nets our definition of a transition that from some points onwards is
continuously enabled is not convincing. A better definition would replace the requirement 
$\precond{v}\cap\precond{u_i}=\emptyset$ for all $i>k$ by
$\precond{v}+\precond{u_{i+1}} \leq M_i$ for all $i\geq k$.
On structural conflict nets the two definitions are equivalent.
On general
Petri nets with finite reachable markings and $\forall u. \precond{u}\neq\emptyset$ \Thm{no fair PN scheduler}
still holds when employing our earlier definition of being continuously enabled, but that definition
arguably leads to Requirement~\ref{enabled PN} being an overly restrictive formalisation of the first requirement
of \Sect{fair scheduler}.

In \cite{Vogler02} and in \cite{KW97} \emph{mutex problems} are presented that cannot be solved
in terms of Petri nets. These results are almost equivalent to \Thm{no fair PN scheduler}, but,
as discussed in the \SSect{contribution}, lack the generality needed to infer \Thm{no fair scheduler} from \Thm{no fair PN scheduler}.

\section{Fair Schedulers Cannot be Rendered in Petri Nets---Proof}
\label{sec:proof PN}
%%%%%%%%%%%%%%%%%%%%%%%

In this section we suppose that there exists a net $N$
meeting the requirements of \Thm{no fair PN scheduler}.
We establish various results about this hypothetical net $N$, ultimately leading to a
contradiction. This will constitute the proof of \Thm{no fair PN scheduler}.

\subsection{Embellishing Paths into Complete Paths}

\newcommand{\fs}{\mbox{\sc fs}}
\newcommand{\ph}{\mbox{\sc ph}}
A \emph{firing sequence} of $N$ is a sequence $\sigma = u_1u_2u_3\dots$ of transitions such that
there exists a path $\pi = M_0 u_1 M_1 u_2 M_2 u_3 \dots$ of $N$. Note that $\pi$ is uniquely
determined by~$\sigma$ (and $M_{0}$); we call it $\ph(\sigma)$. Likewise, $\sigma$ is determined by $\pi$, and
we call it $\fs(\pi)$.

A firing sequence $\sigma'=v_1 v_2 v_3 \dots$ \emph{embellishes} a firing sequence $\sigma = u_1 u_2 u_3 \dots$
iff $\sigma'$ can be obtained out of $\sigma$ through insertion of non-blocking transitions;
that is, if there exists a monotone increasing function $f:\NN\rightarrow\NN$---thus satisfying $i\mathbin<j
\Rightarrow f(i)\mathbin<f(j)$---with $v_{f(i)}\mathbin=u_i$ for all $i\mathbin>0$ and $\ell(v_j)\mathbin\in\OA\cup\{\tau\}$ for any
index $j$ not of the form $f(i)$. A path $\pi'$ \emph{embellishes} a path $\pi$ iff $\fs(\pi')$
embellishes $\fs(\pi)$.

Given a firing sequence $\sigma\mathop=u_1u_2\dots$ of length $\mathord\geq k$ and a transition~$w$, let
$\sigma\oplus_kw$ denote the sequence $u_1 u_2 \dots u_{k}w u_{k+1} u_{k+2} \dots$ obtained by
inserting $w$ in $\sigma$ at position $k$.

\begin{lemma}\label{lem:one step embellishment}
Let $\sigma$ be a firing sequence of length $\mathord\geq k$ and $w$ a transition that on $\ph(\sigma)$ is continuously
enabled from position $k$ onwards. Then $\sigma\oplus_kw$ is a firing sequence.
\end{lemma}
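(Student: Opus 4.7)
The plan is to construct the path that realises $\sigma \oplus_k w$ step-by-step and verify by induction that it is well-defined. Let $\ph(\sigma) = M_0 u_1 M_1 u_2 M_2 \dots$; I would describe the embellished path as agreeing with $\ph(\sigma)$ on the initial segment $M_0 u_1 M_1 \dots u_k M_k$, then firing $w$ from $M_k$ to reach some marking $M_k'$, and then firing $u_{k+1}, u_{k+2}, \dots$ in turn, reaching markings $M_{k+1}', M_{k+2}', \dots$. The invariant to maintain for all $i \geq k$ is
\[
M_i' \;=\; M_i - \precond{w} + \postcond{w}.
\]

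The first step is to record the auxiliary fact that $\precond{w} \leq M_i$ for every $i\geq k$, which I would prove by induction on $i$: the base case $i=k$ is immediate from $M_k[w\rangle$, and for the inductive step I use that $\precond{w}\cap\precond{u_{i+1}}=\emptyset$ (the continuous-enabledness assumption for $i+1>k$), so firing $u_{i+1}$ only removes tokens from places disjoint from $\precond{w}$ and thus preserves $\precond{w}\leq M_{i+1}$. In particular the marking $M_k' := M_k-\precond{w}+\postcond{w}$ is well-defined and reached by firing $w$ at $M_k$.

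Next I would prove the main invariant by induction on $i\geq k$. The base case $i=k$ holds by construction. For the inductive step, assume $M_i' = M_i - \precond{w} + \postcond{w}$ and that $u_{k+1},\dots,u_i$ have been successfully fired from $M_k'$. I need to check (i) $\precond{u_{i+1}}\leq M_i'$ so that $u_{i+1}$ can fire, and (ii) firing it produces $M_{i+1}'$ satisfying the invariant. For (i), on any place $s$ with $\precond{u_{i+1}}(s)>0$, disjointness gives $\precond{w}(s)=0$, so $M_i'(s) = M_i(s) + \postcond{w}(s) \geq M_i(s) \geq \precond{u_{i+1}}(s)$. For (ii), because $\precond{w}\cap\precond{u_{i+1}}=\emptyset$ and $M_i\geq \precond{w}$, one can rearrange the multiset subtractions and additions to get
\[
M_i' - \precond{u_{i+1}} + \postcond{u_{i+1}}
\;=\; (M_i - \precond{u_{i+1}} + \postcond{u_{i+1}}) - \precond{w} + \postcond{w}
\;=\; M_{i+1} - \precond{w} + \postcond{w},
\]
which is $M_{i+1}'$ as required.

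I do not anticipate a genuine obstacle here; the whole argument is bookkeeping on multisets and the only subtle point is ensuring that every subtraction is defined, which is exactly what the disjointness clause $\precond{w}\cap\precond{u_i}=\emptyset$ in the definition of continuous enabledness is designed to guarantee. The mildest care is needed in verifying that these multiset identities are legitimate (rather than formal), which is why I would pair the main invariant with the auxiliary $\precond{w}\leq M_i$ lemma and handle them in that order.
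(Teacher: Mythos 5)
Your proof is correct and follows essentially the same route as the paper's: the paper also defines $M'_i := M_i - \precond{w} + \postcond{w}$ for $i \geq k$ and justifies that the resulting sequence is a path using exactly the two facts you establish, namely $\precond{w}\leq M_i$ and $\precond{u_{i+1}} \leq M_i-\precond{w}$ for all $i\geq k$. Your version merely spells out the inductions and multiset bookkeeping that the paper leaves implicit.
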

\begin{proof}
Let $\ph(\sigma) = M_0 u_1 M_1 u_2 M_2 u_3 \dots$. Define
$M'_{i}:=M_i-\precond{w}+\postcond{w}$ for $i\geq k$. Then
$M_0 u_1 M_1 u_2 \dots u_{k} M_k w M'_k u_{k+1} M'_{k+1} u_{k+2} \dots$ is again a path of $N$,
using that $\precond{w}\leq M_i$ and $\precond{u_{i+1}} \leq M_i-\precond{w}$ for all $i\geq k$.
\qed
\end{proof}
If $\pi$ is a path and $w$ a transition that on $\pi$ is continuously enabled from position $k$
onwards, then $\pi\oplus_{k}w$ abbreviates $\ph(\fs(\pi)\oplus_{k}w)$.

A path $\pi = M_0 u_1 M_1 u_2 M_2 u_3 \dots$ of $N$ is \emph{$(k,n)$-incomplete} if $k$ is the
smallest number such that there is a transition $w$ with $\ell(w)\in\OA\cup\{\tau\}$---called a \emph{witness} of the
$(k,n)$-incompleteness of $\pi$---that is continuously enabled from position $k$ onwards, and $n$ is
the number of places $s$ of $N$ such that
$s \mathbin\in\precond{w}$ for a witness $w$ of the $(k,n)$-incompleteness of $\pi$.
Since the reachable marking $M_k$ is always finite, so are the numbers $n$.
Note that a path is $(k,n)$-incomplete for some finite $k$ and $n$ iff it is not complete;
henceforth we call a complete path \emph{$(\infty,0)$-incomplete}.
If a path $\pi$ is $(k,n)$-incomplete, and a path $\rho$ is $(h,m)$-incomplete, then we call
$\rho$ \emph{less incomplete} than $\pi$ if $\rho$ has the same prefix up to position $k$
as $\pi$ and either $h>k$ or $h=k \wedge m<n$.

\begin{lemma}\label{lem:less incomplete}
Let $i\mathbin\geq 0$, $\pi$ be a $(k,n)$-incomplete path of the net $N$ with at least $k\mathord+i$
transitions, and $w$ a witness of the $(k,n)$-incompleteness of $\pi$. Then $\pi\oplus_{k+i}w$ is less incomplete than $\pi$.
\end{lemma}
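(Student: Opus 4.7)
The plan is to invoke Lemma~\ref{lem:one step embellishment} to first justify that $\pi':=\pi\oplus_{k+i}w$ is indeed a path of $N$. By construction the prefixes of $\pi$ and $\pi'$ coincide up to position $k+i$, so in particular up to position $k$, which is the prefix-condition in the definition of ``less incomplete.'' It remains to compare the incompleteness indices. Write $\pi'$ as $(h,m)$-incomplete (adopting the convention $(\infty,0)$ if $\pi'$ is complete); I must show either $h>k$ or $h=k\wedge m<n$.

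First I would argue $h\geq k$. Suppose instead that some non-blocking transition $v$ is continuously enabled from a position $h'<k$ on $\pi'$. Because the markings at positions $\leq k$ in $\pi$ and $\pi'$ coincide, $M_{h'}[v\rangle$ holds in $\pi$ too; and since every transition appearing after position $h'$ on $\pi$ also appears after position $h'$ on $\pi'$ (the only difference is the extra $w$ inserted further along), the disjointness condition $\precond{v}\cap\precond{u_{j}}=\emptyset$ for $j>h'$ inherited from $\pi'$ holds on $\pi$ too. This contradicts the minimality of $k$ for~$\pi$. Hence $h\geq k$, and if $h>k$ we are done.

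The main obstacle is the case $h=k$: I need the strict inequality $m<n$. The key observation is that in $\pi'$ the transition $w$ itself occurs after position $k$ (namely at position $k+i+1$). Therefore any witness $v$ of the $(k,\cdot)$-incompleteness of $\pi'$ satisfies $\precond{v}\cap\precond{w}=\emptyset$, because continuous enabledness from position $k$ on $\pi'$ forces $\precond{v}$ to be disjoint from the preset of every later transition of $\pi'$, including $w$. At the same time, repeating the argument of the previous paragraph shows that every witness $v$ of the $(k,\cdot)$-incompleteness of $\pi'$ is also a witness of the $(k,n)$-incompleteness of $\pi$. Writing $P(\rho,k):=\bigcup\{\precond{v}\mid v\text{ is a witness of }(k,\cdot)\text{-incompleteness of }\rho\}$, we conclude $P(\pi',k)\subseteq P(\pi,k)\setminus\precond{w}$.

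To finish, I invoke the blanket assumption on nets that $\precond{w}\neq\emptyset$: pick any $s\in\precond{w}$. Then $s\in P(\pi,k)$ (since $w\in W(\pi,k)$) but $s\notin P(\pi',k)$ by the previous inclusion, so $P(\pi',k)\subsetneq P(\pi,k)$ and hence $m=|P(\pi',k)|<|P(\pi,k)|=n$, as required. I expect this last counting step, and in particular the clean use of the fact that $w$ now appears \emph{in} $\pi'$ past position $k$, to be the delicate point; everything else is bookkeeping with the definitions of witness, continuous enabledness, and path insertion.
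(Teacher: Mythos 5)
Your proof is correct and follows essentially the same route as the paper's: rule out $h<k$ by observing that a witness of $\pi\oplus_{k+i}w$ from a position $\le k$ is also continuously enabled there on $\pi$, then in the case $h=k$ use that $w$ now occurs in the embellished path past position $k$, forcing every witness's preset to be disjoint from $\precond{w}\neq\emptyset$, which strictly shrinks the union of witness presets. The only cosmetic difference is your explicit notation $P(\rho,k)$ for that union and your explicit treatment of the prefix condition and the complete $(\infty,0)$ case, which the paper leaves implicit.
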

\begin{proof}
Suppose $\pi\oplus_{k+i}w$ is $(h,m)$-incomplete with $h\leq k$, and let $v$ be a witness of the
$(h,m)$-incompleteness of $\pi\oplus_{k+i}w$.
Then on $\pi\oplus_{k+i}w$ the transition $v$ is continuously enabled from position $h$ onwards.
Let $M_h$ be the marking occurring at position~$h$ in $\pi$, or equivalently in $\pi\oplus_{k+i}w$.
Then $\precond{v}\leq M_h$ and \mbox{$\precond{v}\cap\precond{u}\mathbin=\emptyset$} for all transitions
$u$ occurring in $\pi\oplus_{k+i}w$ past position $h$. This includes all transitions $u$ occurring
in $\pi$ past position $h$, so $v$ is continuously enabled from position $h$ onwards also on $\pi$.
It follows that $h=k$ and any witness of the $(k,m)$-incompleteness of $\pi\oplus_{k+i}w$ is also a
witness of the witness of the $(k,n)$-incompleteness of $\pi$.
Moreover, $\precond{v}\cap\precond{w}=\emptyset$, and since
$\precond{w}\neq\emptyset$ this implies $m<n$.
\qed
\end{proof}

\begin{lemma}\label{lem:embellishment by complete path}
Any infinite path in $N$ is embellished by a complete path.
\end{lemma}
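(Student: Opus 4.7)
The plan is to iteratively insert witnesses of incompleteness using the preceding two lemmas, then pass to a limit. Starting from $\pi_0 := \pi$, given an infinite embellishment $\pi_j$ of $\pi$, stop if $\pi_j$ is complete; otherwise $\pi_j$ is $(k_j, n_j)$-incomplete for finite $k_j, n_j$, so pick any witness $w_j$ and set $\pi_{j+1} := \pi_j \oplus_{k_j} w_j$. By \Lem{one step embellishment} $\pi_{j+1}$ is a path; it inserts the single non-blocking transition $w_j$, so it embellishes $\pi_j$ (and, transitively, $\pi$); and \Lem{less incomplete} (with $i=0$) ensures that $\pi_{j+1}$ is strictly less incomplete than $\pi_j$.

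The next step is to argue that this process either terminates in a complete path or forces $k_j \to \infty$. Strict decrease of incompleteness means $k_{j+1} > k_j$ or $(k_{j+1} = k_j$ and $n_{j+1} < n_j)$, so the sequence $(k_j)$ is weakly increasing; because $n_j \in \NN$ cannot strictly decrease forever at a fixed $k$, it cannot stay at any value indefinitely. Hence either some $\pi_j$ is already complete (and we are done), or $k_j$ tends to infinity.

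In the latter case I form $\pi_\infty$ as the limit. Since every insertion happens at position $k_j$, every transition and marking at any position $\leq k_j$ in $\pi_j$ is preserved in all $\pi_{j'}$ with $j' \geq j$; as $k_j \to \infty$ each finite prefix eventually stabilises, yielding a well-defined infinite path $\pi_\infty$ of $N$. Composing the one-step embellishment maps produces a monotone injection from the transitions of $\pi$ into those of $\pi_\infty$, witnessing that $\pi_\infty$ embellishes $\pi$. To check completeness I argue by contradiction: if a non-blocking $v$ were continuously enabled from some position $h$ onwards on $\pi_\infty$, pick $j$ with $k_j > h$; then the marking at position $h$ of $\pi_j$ coincides with that of $\pi_\infty$, and every transition of $\pi_j$ past $h$ also appears past $h$ in $\pi_\infty$, so $v$ remains continuously enabled from $h$ on $\pi_j$. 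This makes $v$ a witness of incompleteness of $\pi_j$ at some position $\leq h < k_j$, contradicting the minimality of $k_j$.

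The hardest part will be the limit construction---namely, checking that the one-step embellishment maps cohere into a single embellishment $\pi \to \pi_\infty$ and that continuous enabledness on $\pi_\infty$ genuinely transfers back to $\pi_j$ for large $j$. Both reduce to the bookkeeping fact that insertions occur at non-decreasing positions $k_j$ going to infinity, so below any fixed position nothing changes after finitely many steps. The quantitative core is that $n_j \in \NN$ forbids indefinite stalling at a single $k$, which is exactly what converts the per-step strict decrease of \Lem{less incomplete} into the unbounded growth of $k_j$.
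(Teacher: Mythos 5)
There is a genuine gap in the limit construction, and it is precisely the point where your proof diverges from the paper's. You always insert the witness at the current minimal position $k_j$, i.e.\ $\pi_{j+1}:=\pi_j\oplus_{k_j}w_j$. Each such insertion shifts every original transition of $\pi$ that currently sits at a position $>k_j$ one place to the right. Writing $p^{(j)}_m$ for the position of the $m$-th transition $u_m$ of $\pi$ inside $\pi_j$, you get $p^{(j+1)}_m=p^{(j)}_m+1$ whenever $k_j<p^{(j)}_m$. Your argument establishes that $k_j\to\infty$, which indeed makes every finite \emph{prefix} of the $\pi_j$ stabilise, so $\pi_\infty$ exists as a path; but it does \emph{not} show that $k_j$ ever overtakes $p^{(j)}_m$, since $p^{(j)}_m$ grows in lockstep with the insertions. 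If, say, $k_j=j$ at every stage, then $p^{(j)}_m=m+j$ and $k_j<p^{(j)}_m$ forever, so $u_m$ is pushed beyond every finite position and does not occur in $\pi_\infty$ at all: the limit is a path consisting only of inserted non-blocking transitions and does not embellish $\pi$. Your claim that ``composing the one-step embellishment maps produces a monotone injection from the transitions of $\pi$ into those of $\pi_\infty$'' is exactly the assertion that each $p^{(j)}_m$ stabilises, and nothing in your argument delivers it. The same defect infects your completeness argument: to transfer continuous enabledness of $v$ from $\pi_\infty$ back to $\pi_j$ you need $\precond{v}\cap\precond{u}=\emptyset$ for \emph{all} transitions $u$ of $\pi_j$ past position $h$, and this only follows from the corresponding statement about $\pi_\infty$ if every such $u$ reappears in $\pi_\infty$ --- which is the very fact that may fail.

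The paper avoids this by inserting the $i$-th witness not at position $k$ but at position $k+2i$ (legitimate by \Lem{less incomplete} with its parameter $i$, since a transition continuously enabled from $k$ onwards is also continuously enabled from $k+2i$ onwards). This freezes the first $2i$ transitions of $\pi_i$ in all later stages; and since $\pi_i$ contains only $i$ inserted transitions, the $i$-th original transition of $\pi$ sits at position at most $2i$ in $\pi_i$ and hence survives into the limit. That one offset is what makes both the embellishment claim and the completeness claim for $\pi_\infty$ go through. Your identification of the ``quantitative core'' ($n_j\in\NN$ forcing $k_j\to\infty$) is correct and matches the paper, but it is not sufficient on its own; you need to repair the insertion positions along these lines.
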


\begin{proof}
Let $\pi$ be the given path.
We build a sequence $\pi_i$ of paths in $N$ that all embellish $\pi$, such that, for all $i$,
$\pi_{i+1}$ is less incomplete than $\pi_i$ and the first $2i$ transitions of $\pi_i$ and $\pi_{i+\!1}$ are the same.

We start by taking $\pi_0$ to be $\pi$.
If at any point we hit a path $\pi_i$ that is complete, our work is done.
Otherwise, given the $(k,n)$-incomplete path $\pi_i$, for some $k$ and $n$,
pick a witness $w$ of the $(k,n)$-incompleteness of $\pi_i$ and take $\pi_{i+1} \mathbin{:=} \pi_i\oplus_{k+2i}w$.
This path exists by \Lem{one step embellishment}, since $w$ is continuously enabled from position
$k$ onwards, and hence also from position $k+2i$ onwards.
By construction $\pi_{i+1}$ embellishes $\pi_i$ and hence $\pi$.
By \Lem{less incomplete} $\pi_{i+1}$ is less incomplete than $\pi_i$.
Moreover, the first $2i$ transitions of $\pi_i$ and $\pi_{i+\!1}$ are the same.

If at no point we hit a path $\pi_i$ that is complete, let $\rho:=\lim_{i\rightarrow\infty}\pi_i$.
This limit clearly exists: for any $i\mathbin\in\NN$ the first $2i$ transitions of $\rho$ are the first
$2i$ transitions of $\pi_i$ (and thus also of $\pi_j$ for any $j\mathbin>i$).
We show that $\rho$ is complete and embellishes~$\pi$.

For the latter property, the $i^{\rm th}$ transition $u_i$ of $\pi$ must also occur in $\pi_i$, and
no further than at position $2i$, for $\pi_i$ is an embellishment of $\pi$ obtained by adding only
$i$ transitions. As in the sequence $(\pi_j)_{j=0}^\infty$ past index $i$ no further changes occur
in the first $2i$ transitions, the transition $u_i$ also occurs in $\rho$.
Given the construction, this implies that $\rho$ embellishes $\pi$.
The same argument shows that $\rho$ embellishes $\pi_i$ for each $i\in\NN$.

Now suppose that $\rho$ is incomplete. Then there is a non-blocking transition $w$
that on $\rho$, from some position $k$ onwards, is continuously enabled.
Let $i\in\NN$ be an index such that $\pi_i$ is $(k,n)$-incomplete for some $k>h$.
Such an $i$ must exist, as the members of $(\pi_i)_{i=0}^\infty$ become less incomplete with
increasing $i$.
Let $M_h$ be the marking occurring at position~$h$ in $\rho$.
Then $M_h$ also occurs at position~$h$ in $\pi_i$, as the first $k+2i$ transitions of $\pi_i$
are the same for all $\pi_j$ with $j\geq i$, and thus for $\rho$.
Now $\precond{w}\leq M_h$ and \mbox{$\precond{w}\cap\precond{u}\mathbin=\emptyset$} for all transitions
$u$ occurring in $\rho$ past position $h$. Since $\rho$ embellishes $\pi_i$, this includes all transitions $u$ occurring
in $\pi_i$ past position $h$, so $w$ is continuously enabled from position $h$ onwards also on $\pi_i$,
contradicting the $(k,n)$-incompleteness of $\pi_i$.
\qed
\end{proof}

\begin{lemma}\label{lem:extension to complete path}
Any finite path in $N$ can be extended to a complete path in $N\!$, such that all transitions in the
extension have labels in $\OA\cup\{\tau\}$.
\end{lemma}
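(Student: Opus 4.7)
The plan is to combine two observations: a clean characterisation of complete finite paths, and a local variant of the construction used in the proof of \Lem{embellishment by complete path}.

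First I would note that a finite path $\rho = M_0 u_1 M_1 \ldots u_N M_N$ is complete if and only if its terminal marking $M_N$ enables no non-blocking transition. One direction is vacuous: any transition enabled at $M_N$ is continuously enabled from position $N$ onwards (there is no $i > N$). For the converse, if $v$ is continuously enabled from some position $k$ on $\rho$, i.e.\ $\precond{v}\leq M_k$ and $\precond{v}\cap\precond{u_j}=\emptyset$ for all $k<j\leq N$, then the tokens in $\precond{v}$ are never consumed past position $k$, giving $\precond{v}\leq M_N$. So $v$ is enabled at $M_N$.

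Next, let $\pi$ be the given finite path, of length $n$, ending in $M_n$. I would distinguish two cases. \emph{Case A:} by iteratively firing some enabled non-blocking transition starting from $M_n$, one eventually reaches a marking enabling no non-blocking transition. The resulting finite extension of $\pi$ is then already complete by the characterisation above, and all the added transitions are non-blocking, so we are done. \emph{Case B:} otherwise, at every intermediate marking reached this way some non-blocking transition is enabled, so one can continue indefinitely. This yields an infinite extension $\pi' = \pi \cdot \pi_{\text{tail}}$ whose first $n$ transitions are those of $\pi$ and whose tail is an infinite sequence of non-blocking transitions.

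For Case B I would adapt the proof of \Lem{embellishment by complete path} applied to $\pi'$, with the sole modification that, when classifying an intermediate path $\pi_i$ as $(k,m)$-incomplete, the minimal $k$ is required to satisfy $k\geq n$. This restriction is harmless: if on such a $\pi_i$ a non-blocking transition $w$ is continuously enabled from some position $k<n$, then $w$ is also continuously enabled from position $n$, since both defining conditions $\precond{w}\leq M_{k'}$ and $\precond{w}\cap\precond{u_j}=\emptyset$ for $j>k'$ become strictly weaker for larger starting positions $k'$. Each insertion of a witness $w$ at position $k_i+2i\geq n$ therefore never disturbs the prefix $\pi$, and all witnesses are non-blocking by definition. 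The convergence argument of \Lem{embellishment by complete path} then produces a limit path $\rho$ which by construction has $\pi$ as a prefix, extends it only by non-blocking transitions, and is complete.

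The main obstacle I anticipate is verifying the analogue of \Lem{less incomplete} under the modified incompleteness measure (restricted to $k\geq n$), i.e.\ confirming that each witness-insertion strictly decreases this measure. This reduces to the original argument essentially verbatim once the observation above is in place, since restricting attention to positions $\geq n$ only excludes some incompleteness witnesses from consideration and does not interfere with the decrement step; but one has to be careful in the bookkeeping to confirm that no new incompleteness is introduced at positions $\geq n$ by an insertion at position $k_i+2i$, which again follows by the same disjointness reasoning used in the proof of \Lem{less incomplete}.
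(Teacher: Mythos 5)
Your argument is correct, but it is organised rather differently from---and more elaborately than---the paper's proof, which treats this lemma as a \emph{simpler} variant of \Lem{embellishment by complete path} rather than a reduction to it. The paper's construction is: starting from $\pi_0:=\pi$, as long as the current finite path $\pi_i$ is $(k,n)$-incomplete, pick a witness $w$ and \emph{append} it at the end, forming $\ph(\fs(\pi_i)w)$; this is just $\pi_i\oplus_{k+j}w$ with $k+j$ the length of $\pi_i$, so \Lem{less incomplete} applies verbatim, the prefix $\pi$ is never disturbed, every added transition is non-blocking, and if the iteration does not terminate the limit is complete by exactly the argument of \Lem{embellishment by complete path}. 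Your opening observation---that a finite path is complete iff its last marking enables no non-blocking transition---is correct (it is implicitly what makes the appended witness fireable), but the ensuing case split is unnecessary: your Case~A is precisely the situation in which the paper's iteration terminates, and your Case~B replaces the direct construction by first building an arbitrary infinite non-blocking extension and then re-running the embellishment argument on it under a modified incompleteness measure with positions restricted to $\geq n$. That detour does work---your justifications that restricting to positions $\geq n$ loses no witnesses, that insertions at positions $k_i+2i\geq n$ preserve the prefix, and that the analogue of \Lem{less incomplete} survives the modification are all sound---but it obliges you to re-verify the decreasing-measure lemma, which the append-at-the-end construction avoids entirely. Both proofs rest on the same two pillars (\Lem{less incomplete} and the limit/completeness argument), so the difference is one of economy rather than of substance.
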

\begin{proof}
This is a simpler variant of the previous proof.
Let $\pi$ be the given path.
We build a sequence $\pi_i$ of paths in $N$ that all extend $\pi$, such that, for all $i$,
$\pi_{i+1}$ is less incomplete than $\pi_i$ and extends $\pi_i$ by one transition.

We start by taking $\pi_0$ to be $\pi$.
If at any point we hit a path $\pi_i$ that is complete, our work is done.
Otherwise, given the $(k,n)$-incomplete path $\pi_i$, for some $k$ and $n$,
pick a witness $w$ of the $(k,n)$-incompleteness of $\pi_i$
and obtain $\pi_{i+1} := \ph(\fs(\pi_i)w)$
by appending transition $w$ to $\pi_i$.
By construction $\pi_{i+1}$ extends $\pi_i$ by one non-blocking transition and hence extends $\pi$.
By \Lem{less incomplete} $\pi_{i+1}$ is less incomplete than $\pi_i$.

If at no point we hit a path $\pi_i$ that is complete, let $\rho:=\lim_{i\rightarrow\infty}\pi_i$.
Clearly, $\rho$ extends $\pi$. That $\rho$ is complete follows exactly as in the previous proof.
\qed
\end{proof}

\subsection{Paths of the Hypothetical Fair Scheduler}

\begin{lemma}\label{lem:PN1}
  Our hypothetical net $N$ has a path with no occurrences of (transitions labelled) $r_1$, but infinitely many occurrences
  of $r_2$.
\end{lemma}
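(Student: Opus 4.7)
The plan is to construct the desired path as the limit of an increasing chain of finite paths $\pi_0, \pi_1, \pi_2, \dots$, where each $\pi_n$ is a finite path of $N$ with no occurrence of $r_1$ and exactly $n$ occurrences of $r_2$, and $\pi_{n+1}$ extends $\pi_n$. I will take $\pi_0$ to be the trivial path consisting just of the initial marking $M_0$, and the desired path will be $\pi := \lim_{n\to\infty}\pi_n$.

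To produce $\pi_{n+1}$ from $\pi_n$, I first apply \Lem{extension to complete path} to extend $\pi_n$ to a complete path $\rho_n$; the extension uses only transitions with labels in $\OA\cup\{\tau\}$. Since $r_1,r_2\in\HC$ are blocking, the extension introduces no $r_1$- or $r_2$-transitions, so $\rho_n$ has zero occurrences of $r_1$ and exactly $n$ occurrences of $r_2$---both finite. Requirement~\ref{enabled PN} applied with $i=2$ therefore yields a transition $v$ with $\ell(v)=r_2$ that is continuously enabled on $\rho_n$ from some position $k$ onwards. Setting $m := \max(k,|\pi_n|)$, the very definition of continuous enabledness gives $v$ still continuously enabled from position $m$ (since $\precond{v}\leq M_i$ for all $i\geq k$, and the disjointness constraint only strengthens when moving to a later position), so by \Lem{one step embellishment} the insertion $\rho_n \oplus_m v$ is a valid path. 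I take $\pi_{n+1}$ to be the initial segment of this path ending at the newly inserted $v$. Since $m\geq|\pi_n|$ and only non-blocking transitions from the $\rho_n$-extension followed by the single $r_2$-transition $v$ were added beyond position $|\pi_n|$, the path $\pi_{n+1}$ extends $\pi_n$, contains no $r_1$, and has exactly $n+1$ occurrences of $r_2$, as required.

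Finally, the limit $\pi$ is well defined because the lengths $|\pi_n|$ strictly increase and each $\pi_{n+1}$ extends $\pi_n$, so every finite prefix stabilises; it inherits the structure of a valid infinite path of $N$ with no $r_1$-transitions and infinitely many $r_2$-transitions.

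The main obstacle is that Requirement~\ref{enabled PN} applies only to \emph{complete} paths, whereas the inductive path $\pi_n$ is typically not complete. The detour through $\rho_n$ supplied by \Lem{extension to complete path} handles this, crucially exploiting that the extension introduces only non-blocking transitions---this preserves both the absence of $r_1$'s and the finiteness of the $r_2$-count, so Requirement~\ref{enabled PN} can be invoked to supply the next $r_2$-transition without accidentally pushing an $r_1$ into the path.
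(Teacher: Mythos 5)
Your proposal is correct and follows essentially the same route as the paper's proof: build an increasing chain of finite paths with no $r_1$ and exactly $n$ occurrences of $r_2$, extend each to a complete path via \Lem{extension to complete path} (noting the extension adds only non-blocking transitions), invoke Requirement~\ref{enabled PN} to obtain a continuously enabled $r_2$-transition, and append it beyond the end of the current prefix. The only cosmetic difference is that you insert $v$ via \Lem{one step embellishment} and then truncate, whereas the paper truncates the complete path first and then fires $v$; these are the same step.
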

\begin{proof}
We construct an infinite sequence $(\pi_k)_{k=0}^\infty$ of finite paths of $N$,
such that $\pi_k$ has no occurrences of $r_1$ and exactly $k$ occurrences
of $r_2$, and such that $\pi_k$ is a prefix of $\pi_{k+1}$ for all $k\in\NN$.
The limit of this sequence will be the required path.

$\pi_0$ is  the trivial path, consisting of the initial marking $M_0$ only.

Now assume we have constructed a path $\pi_k$ as required.
By \Lem{extension to complete path} $\pi_k$ can be extended into a complete path $\pi_k'$ that has no occurrences of $r_1$ and
exactly $k$ occurrences of $r_2$. Since $\pi_k'$ is complete, it must be $r_2$-enabled by Requirement~\ref{enabled PN}.
Hence there is a finite prefix $\pi_k''$ of $\pi_k'$, still extending $\pi_k$, such that a
transition~$v$ with $\ell(v)=r_2$ is enabled in the last state of $\pi_k''$.
Obtain $p_{k+1}$ by extending $\pi_k''$ with~$v$.
\qed
\end{proof}

\begin{lemma}\label{lem:PN2}
  $N$ has a path with exactly one occurrence of $r_1$, none of $t_1$, and infinitely many occurrences of $t_2$.
\end{lemma}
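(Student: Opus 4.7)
The plan is to construct the desired path by chaining \Lem{PN1} with \Lem{embellishment by complete path}, Requirement~\ref{enabled PN} of \Thm{no fair PN scheduler}, and \Lem{one step embellishment}. First, I would apply \Lem{PN1} to obtain an infinite path $\pi_1$ of $N$ with no occurrences of $r_1$ and infinitely many occurrences of $r_2$. Then \Lem{embellishment by complete path} yields a complete path $\pi_2$ embellishing $\pi_1$. Since embellishment only inserts transitions labelled in $\OA\cup\{\tau\}$, and neither $r_1$ nor $r_2$ lies in this set, $\pi_2$ still has no occurrence of $r_1$ and still has infinitely many occurrences of $r_2$.

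Three useful facts about $\pi_2$ follow. By Requirement~\ref{patient PN}, applied to every finite prefix of $\pi_2$, the absence of $r_1$ forces $\pi_2$ to contain no $t_1$ whatsoever. Since $\pi_2$ is complete, Requirement~\ref{fair PN} forces each of its infinitely many $r_2$-occurrences to be followed by some $t_2$; if only finitely many $t_2$'s occurred, any $r_2$ past the last of them would contradict this, so $\pi_2$ contains infinitely many $t_2$. Finally, $\pi_2$ is complete and has only finitely (in fact zero) many $r_1$'s, so Requirement~\ref{enabled PN} delivers a transition $v$ with $\ell(v)=r_1$ that is continuously enabled on $\pi_2$ from some position $k$ onwards.

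I would then invoke \Lem{one step embellishment} to form $\pi_3 := \pi_2 \oplus_k v$, a path of $N$, and read off the three required properties directly: $\pi_3$ has exactly one $r_1$ (the inserted $v$), no $t_1$ (as $\pi_2$ had none and $\ell(v)\neq t_1$), and still infinitely many $t_2$ (as $\oplus_k$ only inserts, never removes). The main conceptual pitfall, and the step most tempting to bungle, is the reflex of trying to make $\pi_3$ itself complete: if it were, Requirement~\ref{fair PN} would demand a $t_1$ after its $r_1$, directly contradicting the lemma's conclusion. The resolution is that the lemma demands only a path, not a complete one; completeness is used only on the auxiliary $\pi_2$ in order to harvest infinitely many $t_2$'s through Requirement~\ref{fair PN}, after which a single, deliberately unfulfilled $r_1$-request is spliced in to spoil completeness.
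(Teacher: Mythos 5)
Your proposal is correct and follows essentially the same route as the paper's own proof: take the path from Lemma~\ref{lem:PN1}, embellish it to a complete path via Lemma~\ref{lem:embellishment by complete path}, extract the absence of $t_1$, the infinitely many $t_2$'s and the $r_1$-enabledness from Requirements~\ref{patient PN}, \ref{fair PN} and \ref{enabled PN}, and then insert a single $r_1$-transition with Lemma~\ref{lem:one step embellishment}. Your closing remark that the resulting path is deliberately \emph{not} complete is exactly the right observation and is consistent with the paper.
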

\begin{proof}
Let $\pi$ be the path found by \Lem{PN1}.
By \Lem{embellishment by complete path} this path is embellished by a complete path $\pi'$,
that thus has no occurrences of $r_1$ and infinitely many of $r_2$.
By Requirement~\ref{fair PN} $\pi'$ has infinitely many occurrences of $t_2$,
and by Requirement~\ref{patient PN} it has no occurrences of $t_1$.
By Requirement~\ref{enabled PN} $\pi'$ is $r_1$-enabled.
Let $w$ be a transition labelled $r_1$ that is on $\pi$ is continuously enabled from position $k$ onwards.
By \Lem{one step embellishment} $N$ has a path $\pi\oplus_k w$, obtained from $\pi'$ by inserting
transition $w$ in position $k$. That path has exactly one occurrence of $r_1$, none of $t_1$, and infinitely many of $t_2$.
\qed
\end{proof}

\begin{lemma}\label{lem:PN3}
  $N$ has a $t_1$-enabled path with infinitely many occurrences of $t_2$.
\end{lemma}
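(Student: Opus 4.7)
The plan is to extract the required path by running a carefully steered version of the embellishment procedure from the proof of \Lem{embellishment by complete path}, starting from the path $\pi$ provided by \Lem{PN2}, which has a single occurrence of $r_1$, none of $t_1$, and infinitely many occurrences of $t_2$.

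First I would iterate the embellishment on $\pi$ exactly as in that proof, but with the restriction that at each step the witness $w$ of the current path's incompleteness is chosen to have a label \emph{different from} $t_1$ whenever such a witness exists. Since witnesses are non-blocking transitions and $t_1\in\OA$ is a non-blocking action, this restriction is meaningful: it rules out $t_1$-labelled witnesses in favour of those labelled $t_2$, $e$, or $\tau$. By construction, every path $\pi_i$ produced in this way inherits the infinitely many $t_2$-occurrences of $\pi$ and still contains no occurrence of $t_1$.

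Next I would argue that the process must halt at some finite stage at a path $\pi_i$ all of whose witnesses of incompleteness are labelled $t_1$. If it did not, then either some $\pi_i$ would itself be complete, or the iteration would proceed forever and, by the limit argument in the proof of \Lem{embellishment by complete path}, yield a complete path in the limit; in either case we would obtain a complete path with one $r_1$ and no $t_1$, contradicting Requirement~\ref{fair PN}. At the halting stage, picking any $t_1$-witness $v$ shows that $v$ is a $t_1$-labelled transition continuously enabled on $\pi_i$ from some position onwards, so $\pi_i$ is $t_1$-enabled; since $\pi_i$ still contains infinitely many $t_2$'s, it is the path the lemma demands.

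The step I expect to be the main obstacle is spotting the deferral strategy in the first place. More direct attempts, such as deleting the unique $t_1$-transition $v$ from the complete embellishment $\pi'$ of $\pi$, do not obviously yield a valid path: a subsequent transition of $\pi'$ may rely on a token freshly produced by $v$ in $\postcond{v}\setminus\precond{v}$, so the residual firing sequence need not be firable from the marking reached just before $v$. By deferring the insertion of $t_1$ rather than performing and then deleting it, the construction exploits the freedom granted by \Lem{embellishment by complete path} to pick any witness of incompleteness at each step, sidestepping this difficulty altogether.
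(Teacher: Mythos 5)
Your proof is correct and follows essentially the same route as the paper's: starting from the path of \Lem{PN2}, iterate the embellishment procedure of \Lem{embellishment by complete path} while refusing to insert $t_1$-labelled witnesses, and use Requirement~\ref{fair PN} to rule out the possibility that this never terminates (since the limit would be a complete path with an $r_1$ but no $t_1$). The only cosmetic difference is the halting condition---the paper stops as soon as the current path is $t_1$-enabled, observing that otherwise no witness can be labelled $t_1$, whereas you stop only once every witness is labelled $t_1$---but both variants yield the required $t_1$-enabled embellishment with infinitely many occurrences of $t_2$.
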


\begin{proof}
Let $\pi$ be the path found by \Lem{PN2}.
We build a sequence $\pi_i$ of paths in $N$ that all embellish $\pi$ and do not contain $t_1$, such that, for all $i$,
$\pi_{i+1}$ is less incomplete than $\pi_i$ and the first $2i$ transitions of $\pi_i$ and $\pi_{i+\!1}$ are the same.
Since the $\pi_i$ embellish $\pi$, they have exactly one occurrence of $r_1$, and infinitely many of $t_2$.
Moreover, by Requirement~\ref{fair PN}, none of the $\pi_i$ can be complete.

We start by taking $\pi_0$ to be $\pi$.
If at any point we hit a path $\pi_i$ that is $t_1$-enabled, our work is done.
Otherwise, given the $(k,n)$-incomplete path $\pi_i$, for some $k$ and $n$,
pick a witness $w$ of the $(k,n)$-incompleteness of $\pi_i$ and take $\pi_{i+1} \mathbin{:=} \pi_i\oplus_{k+2i}w$.
This path exists by \Lem{one step embellishment}, since $t$ is continuously enabled from position
$k$ onwards, and hence also from position $k+2i$ onwards.
Note that $\ell(w)\neq t_1$, since $\pi_i$ is not $t_1$-enabled.
Hence $\pi_{i+1}$ does not contain $t_1$.
By construction $\pi_{i+1}$ embellishes $\pi_i$ and hence $\pi$.
By \Lem{less incomplete} $\pi_{i+1}$ is less incomplete than $\pi_i$.
Moreover, the first $2i$ transitions of $\pi_i$ and $\pi_{i+\!1}$ are the same.

If at no point we hit a path $\pi_i$ that is $t_1$-enabled, let $\rho:=\lim_{i\rightarrow\infty}\pi_i$.
Exactly as in the proof of \Lem{embellishment by complete path} it follows that
$\rho$ is complete and embellishes $\pi$. Since $t_1$ does not occur on any of the $\pi_i$, it does
not occur on $\rho$. However, $r_1$ does occur on $\rho$, since it occurred on $\pi$.
This is in contradiction with Requirement~\ref{fair PN}.
Therefore, the assumption that at no point we hit a path $\pi_i$ that is $t_1$-enabled must be wrong.
\qed
\end{proof}

\begin{trivlist}
\item[\hspace{\labelsep}{\bf Proof of \Thm{no fair PN scheduler}}]
Let $\pi$ be the path found in \Lem{PN3}.
It must have a finite prefix $\pi'$ ending with an occurrence of $t_2$, such that a transition $w$
labelled $t_1$ is enabled it the last state of $\pi'$.
Extending $\pi'$ with $w$ yields a finite path of $N$ violating Requirement~\ref{coordinated PN}.
\qed
\end{trivlist}

\section{An Operational Petri Net Semantics of CCS}\label{sec:operational PN}
%%%%%%%%%%%%%%%%%%%%%%%

\newcommand{\weg}[1]{}
\newcommand{\src}{{\it src}}
\newcommand{\target}{{\it target}}
\newcommand{\shar}[1]{\mathord{\stackrel{#1}{\rightarrow}}}
\renewcommand{\myref}[1]{\hyperlink{lab:#1}{\sc (#1)}}

This section presents an operational Petri net semantics of CCS$^!\!$, following
Degano, De Nicola \mand Montanari \cite{DDM87}.
It associates a Petri net $\denote{P}$ with each CCS$^!$ expression $P$.
We establish that this Petri net is safe, all its reachable marking are
finite, and there are no transitions $u$ with $\precond{u}=\emptyset$;
hence it is one of the nets considered in \Sect{nets}. In \Sect{proof conclusion}
we will show that if a CCS$^!$ expression $F$ satisfies the four requirements of \Thm{no fair scheduler}
then the Petri net $\denote{F}$ satisfies the four requirements of \Thm{no fair PN scheduler}.
As a result, \Thm{no fair scheduler} will follow from \Thm{no fair PN scheduler}.

The standard operational semantics of CCS$^!$, presented in \Sect{CCS}, yields one big labelled
transition system for the entire language. Each individual CCS$^!$ expression $P$ appears as a state in
this LTS\@.  If desired, a \emph{process graph}---an LTS enriched with an initial state---for $P$
can be extracted from this system-wide LTS by appointing $P$ as the initial state, and optionally
deleting all states and transitions not reachable from $P$. In the same vein, an operational Petri
net semantics yields one big Petri net for the entire language, but without an initial marking.
We call such a Petri net {\em unmarked}. Each
process $P\in\T_{\rm CCS^!}$ corresponds to a marking $dec(P)$ of that net. If desired, a Petri net
for $P$ can be extracted from this system-wide net by appointing $dec(P)$ as its initial marking,
and optionally deleting all places and transitions not reachable from $dec(P)$.

The set $\SC_{\rm CCS^!}$ of places in the net---the \emph{grapes} of \cite{DDM87}---is the smallest set including:
\vspace{-12pt}
\begin{center}
\begin{tabular}{@{}l@{~~}l@{\qquad}l@{~~}l@{\qquad}l@{~~}l@{}}
&&$\AI$ &  \emph{agent identifier}\\
$\alpha.\E$  & \emph{prefixing}&
$\sum_{i\in I}\E_i$  & \emph{choice}&
$\mu\backslash a$  & \emph{restriction}\\
$\mu|$ & \emph{left parallel component}&
$|\mu$ & \emph{right component} &
$\mu[f]$ &  \emph{relabelling} \\
\end{tabular}
\end{center}
\noindent for $\AI\mathbin\in\K\!$, $\alpha\mathbin\in Act$,
 $\E,\E_i\mathbin\in\T_{\rm CCS^!}$, $a\mathbin\in\HC $, $\mu\mathbin\in\SC_{\rm CCS^!}$, index sets $I$, and relabellings~$f$.
The mapping $dec:\T_{\rm CCS^!} \rightarrow \pow(\SC_{\rm CCS^!})$ decomposing
a process expression into a set of grapes is inductively defined by:
\[
\begin{array}{@{}l@{~=~}l@{\qquad\qquad}l@{~=~}l@{}}
dec(\alpha.P)  & \{\alpha.P\} &
dec(A) & \{ A \} \\
dec(\sum_{i\in I}\E_i) & \{ \sum_{i\in I}\E_i \} &
dec(P|Q) & dec(P)|~ \cup ~|dec(Q) \\
dec(P\backslash a) & dec(P)\backslash a &
dec(P[f]) & dec(P)[f] \\
\end{array}
\]
Here $H[f]$, $H\backslash a$, $H|$ and $|H$ are understood element by element; e.g.\
$H[f] = \{\mu[f] \mid \mu\in H\}$. Moreover the binding is important, meaning that $(|H)|\not=|(H|)$.

We construct the unmarked Petri net $(S,T,F,\ell)$ of CCS$^!$
with $S:=\SC_{\rm CCS^!}$, specifying the triple $(T,F,\ell)$ as a ternary relation
$\mathord{\rightarrow} \subseteq \NN^S\times Act\times \NN^S$.
An element \plat{$H \ar{\alpha} J$} of this relation denotes a transition $u\mathbin\in\T$ with $\ell(u)\mathbin=\alpha$
such that $\precond{u}\mathbin=H$ and $\postcond{u}\mathbin=J$.
The transitions \plat{$H\ar{\alpha}J$} are derived from the rules of \Tab{PN-CCS}.
\begin{table}[ht]
\normalsize
\begin{center}
\framebox{$\begin{array}{@{}c@{}c@{}c@{}}
\{\alpha.P\} \ar{\alpha} dec(P)~~~~~~~~ &
\multicolumn{2}{c}{\displaystyle \frac{(dec(\E_j){-}K) \ar{\alpha} J}
{\{\sum_{i\in I}\E_i\} \ar{\alpha} J+K}~~(j\mathbin\in I,~~K\leq dec(P_j))}\\[4ex]
\displaystyle \frac{H \ar{\alpha} J}{H| \ar{\alpha} J|} &
\displaystyle \frac{H \ar{a} J \qquad K \ar{\bar a} L}{H| + |K  \ar{\tau} J| + |L} &
\displaystyle \frac{H \ar{\alpha} J}{|H \ar{\alpha} |J} \\[4ex]
\displaystyle\frac{H \ar{\alpha} J}{H\backslash a \ar{\alpha} J\backslash a}~
                     (a\mathbin{\neq}\alpha\mathbin{\neq}\bar{a}) &
\displaystyle\frac{H \ar{\alpha} J}{H[f] \ar{f(\alpha)} J[f]} &
~~\displaystyle \frac{(dec(\E){-}K) \ar{\alpha} J}{\{\AI\} \ar{\alpha} J+K}~\!%
                     \Big(\begin{array}{@{}c@{}}\scriptstyle A \mathbin{\stackrel{{\it def}}{=}} P\\[-3pt]
                     \scriptstyle K\mathord\leq \textit{dec}(P) \end{array}\Big)\!
\end{array}$}
\end{center}
\caption{Operational Petri net semantics of CCS$^!$}
\label{tab:PN-CCS}
\end{table}

Henceforth, we write $M\mathrel{[\alpha\rangle} M'$, for markings $M,M'\in \NN^{\SC_{\rm CCS^!}}$ and
$\alpha\mathbin\in\Act$, if there exists a transition $u$ with $M[u\rangle M'$ and $\ell(u)=\alpha$.
In that case \plat{$M=H+K$} and \plat{$M'=J+K$} for multisets of places $H,J,K: \SC_{\rm CCS^!}\rightarrow\NN$ with \plat{$H\ar{\alpha}J$}.

The following theorem says that function $dec$ is a strong bisimulation (\cite{Mi89}) between the LTS and the unmarked
Petri net of CCS$^!\hspace{-2.5pt}$. Since markings of the form $dec(R)$ are plain sets (rather than multisets), it also
follows that the Petri net of each CCS$^!$ expression is safe.%
\begin{theorem}\rm\label{thm:bisimulation}
If $R \ar{\alpha} R'$ for $R,R'\mathbin\in\T_{\rm CCS^!}$ and $\alpha\mathbin\in\Act$ then $dec(R) \mathrel{[\alpha\rangle} dec(R')$.
Moreover, if $dec(R) \mathbin{[\alpha\rangle} M$ then there is a $R'\mathbin\in\T_{\rm CCS^!}$ with $R \ar{\alpha} R'$ and $dec(R')\mathbin=M$.
\end{theorem}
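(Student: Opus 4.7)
The plan is to prove both directions by induction, using the close correspondence between the rules in Table~\ref{tab:CCS} and those in Table~\ref{tab:PN-CCS}. For the forward direction I would proceed by induction on the derivation of $R \ar{\alpha} R'$ in the LTS. For the backward direction I would proceed by induction on the structure of $R$ (equivalently, the derivation of the net-transition underlying $dec(R) \mathrel{[\alpha\rangle} M$), observing that the top-level form of $R$ determines which net-rules in Table~\ref{tab:PN-CCS} can possibly fire grapes in $dec(R)$.

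For the forward direction, each rule in Table~\ref{tab:CCS} has a matching rule in Table~\ref{tab:PN-CCS}. The base case $\alpha.P \ar{\alpha} P$ matches $\{\alpha.P\} \ar{\alpha} dec(P)$ directly, since $dec(\alpha.P) = \{\alpha.P\}$. For the parallel and restriction rules, the induction hypothesis yields a net-transition $H \ar{\beta} J$ with $dec(E) = H + K$ and $dec(E') = J + K$; applying the corresponding net-rule (prefixing with $|$, with $\backslash a$, or combining two hypotheses through the synchronisation rule) produces the required net-transition in $dec(E|F)$, $dec(E\backslash a)$, etc., carrying along the untouched grapes $K$ and $dec(F)$. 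For the choice and agent-identifier rules, I would instantiate $K$ in the net-rule to the context left over by the inductive net-transition from $dec(E_j)$ (respectively $dec(P)$ for $A \stackrel{\it def}{=} P$), so that the residual marking equals $dec(E')$.

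For the backward direction, given $dec(R) \mathrel{[\alpha\rangle} M$ with underlying transition $u$ and $\precond{u} \leq dec(R)$, I would case-split on the outermost constructor of $R$. If $R = \alpha.P$ then $dec(R)=\{\alpha.P\}$ so $u$ is the base transition and $R \ar{\alpha} P$ with $dec(P)=M$. If $R = P|Q$ then $dec(R) = dec(P)| \cup |dec(Q)$, and inspection of Table~\ref{tab:PN-CCS} shows that $u$ must come from one of the three parallel rules; in each case the premise is a net-transition on grapes of $P$ or $Q$ (after stripping the $|$ or $|\hspace{1pt}$ marker), and the induction hypothesis for $P$ and/or $Q$ yields LTS-transitions that can be combined via the parallel rules in Table~\ref{tab:CCS}. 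The cases $R = P\backslash a$ and $R = P[f]$ are analogous. The cases $R = \sum_{i \in I} E_i$ and $R = A$ require the net-rules with the remainder $K$: the transition $u$ must have $\precond{u} \subseteq \{R\}$ and by the only applicable net-rule there is $j \in I$ (or the defining equation $A \stackrel{\it def}{=} P$), $K \leq dec(E_j)$ (resp.\ $\leq dec(P)$), and $H \ar{\alpha} J$ with $H = dec(E_j)-K$ and $M = J+K$; the induction hypothesis applied to $E_j$ (resp.\ $P$) then yields $E_j \ar{\alpha} E'$ (resp.\ $P \ar{\alpha} E'$) with $dec(E') = J+K = M$, and the LTS choice (resp.\ identifier) rule gives $R \ar{\alpha} E'$.

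The main obstacle is the bookkeeping in the choice and recursion cases, where the Petri-net rule carries an explicit remainder $K$ that does not appear on the LTS side. One must show that the obvious choice $K := dec(E_j) - H$ in the forward direction, and the extraction of $E'$ from the inductive hypothesis in the backward direction, are consistent with the multiset identities $dec(R') = dec(R) - \precond{u} + \postcond{u}$. A related subtlety is that the net semantics is presented as one big unmarked net and the grape set $\SC_{\rm CCS^!}$ is constructed by mutual recursion with $\T_{\rm CCS^!}$; care is needed that $dec(E')$ really equals the arithmetic combination of $H$, $J$ and $K$ appearing in the net-rules, which amounts to checking that $dec$ commutes with the syntactic markers $|\hspace{1pt}$, $\hspace{1pt}|$, $\backslash a$, $[f]$ in the expected way. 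Once these multiset identities are verified, the safety claim follows as a corollary: $dec(R)$ is always a plain set by a trivial induction on $R$, and the first part of the theorem shows that any reachable marking has the form $dec(R')$ for some $R'$.
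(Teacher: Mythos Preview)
Your plan matches the paper's proof closely: both directions are handled by induction, the forward on the LTS derivation and the backward on the derivation of the net transition, with the same rule-by-rule correspondence and the same bookkeeping of the remainder $K$ in the choice and recursion cases.

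One caution, though: structural induction on $R$ is \emph{not} equivalent to induction on the derivation of the net transition, contrary to what you suggest. In the agent-identifier case $R = A$ with $A \stackrel{\it def}{=} P$, the expression $P$ need not be a structural subterm of $A$, so invoking the induction hypothesis on $P$ is illegitimate under structural induction on $R$. The paper therefore reformulates the backward direction as ``if $(dec(R){-}K) \ar{\alpha} J$ with $K \leq dec(R)$ then there is $R'$ with $R \ar{\alpha} R'$ and $dec(R') = J{+}K$'' and proves it by induction on the derivation of $(dec(R){-}K) \ar{\alpha} J$ from Table~\ref{tab:PN-CCS}; in the agent-identifier rule the premise $(dec(P){-}K) \ar{\alpha} J$ genuinely has a shorter derivation than the conclusion, so the induction goes through. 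Provided you commit to that induction principle rather than to structural induction on $R$, the rest of your sketch is correct.
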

\begin{proof}
The first statement follows by induction on the derivability of the transition
\plat{$R \ar{\alpha} R'$} from the rules of \Tab{CCS}. We only spell out two representative cases;
the others are similar or straightforward.
\begin{itemize}
\item Suppose $P|Q \ar{\alpha} P'|Q$ because $P \ar{\alpha} P'$.
By induction $dec(P) \mathrel{[\alpha\rangle} dec(P')$.
Hence \plat{$dec(P)=H+ K$} and \plat{$dec(P')=J+K$} for (multi)sets $H,J,K\subseteq \SC_{\rm CCS^!}$ with \plat{$H\ar{\alpha}J$}.
By \Tab{PN-CCS} we obtain \plat{$H| \ar{\alpha} J|$}. Hence
$$\begin{array}{ccl}dec(P|Q) & = & dec(P)| + |dec(Q) \\
  & = & H| + K| + |dec(Q) \\
  & \mathrel{[\alpha\rangle} & J| + K| + |dec(Q) \\
  & = & dec(P')| + |dec(Q) \\
  & = & dec(P'|Q)\;.
\end{array}$$
\item Suppose $A \mathbin{\stackrel{{\it def}}{=}} P$ and $A \ar{\alpha} P'$ since $P \ar{\alpha} P'$.
By induction $dec(P) \mathrel{[\alpha\rangle} dec(P')$. Hence  \plat{$dec(P)=H+K$} and
\plat{$dec(P')=J+ K$} for sets $H,J,K\subseteq \SC_{\rm CCS^!}$ with \plat{$dec(P){-}K=H\ar{\alpha}J$}.
By \Tab{PN-CCS}, \plat{$dec(A) = \{A\} \ar{\alpha} J + K = dec(P')$}.
\end{itemize}
The second statement can be reformulated as
\begin{quote}
if $(dec(R){-}K) \mathbin{\ar{\alpha}} J$ with $K\leq dec(R)$\\ then there is a $R'\mathbin\in\T_{\rm CCS^!}$ with $R \ar{\alpha} R'$ and $dec(R')\mathbin=J{+}K$.
\end{quote}
for $R\mathbin\in\T_{\rm CCS^!}$ and $K,J: \SC_{\rm CCS^!}\rightarrow\NN$.
We prove it by induction on the derivability of the transition \plat{$dec(P){-}K \ar{\alpha} J$} from the rules of \Tab{PN-CCS}.
\begin{itemize}
\item Suppose $dec(R)-K = \{\alpha.P\} \ar{\alpha} dec(P)$. Since the only set $dec(R)$ containing
  $\alpha.P$ is $\{\alpha.P\}$, we have $K\mathbin=\emptyset$, $J\mathbin=dec(P)$ and $R\mathbin=\alpha.P$.  Take $R':=P$.
\item Suppose $dec(R)-K' =H[f] \ar{f(\alpha)} J[f]$ because $H \ar{\alpha} J$.
  Then $R$ must have the form $P[f]$, so that $dec(R)=dec(P)[f]$, and $K'$ must have the form $K[f]$.\linebreak[4]
  Thus \plat{$dec(P)-K=H\ar{\alpha} J$}, and by induction there is a $P'\mathbin\in\T_{\rm CCS^!}$ with
  \plat{$P \ar{\alpha} P'$} and $dec(P')\mathbin=J+K$. By \Tab{CCS}, \plat{$R = P[f] \ar{\alpha} P'[f]$}.
  Moreover, $dec(P'[f])=dec(P')[f]=J[f]+K[f]=J[f]+K'$.
\item The case for restriction proceeds likewise.
\item Suppose $dec(R)-K' = H| \ar{\alpha} J|$ because $H \ar{\alpha} J$.
  Then $R$ must have the form $P|Q$, and $K' = (dec(P)| - H|) + |dec(Q) = K| + |dec(Q)$, where $K:=dec(P)-H$.
  Thus \plat{$dec(P)-K=H\ar{\alpha} J$}, so by induction there is a $P'\mathbin\in\T_{\rm CCS^!}$ with
  \plat{$P \ar{\alpha} P'$} and $dec(P')\mathbin=J+K$.  By \Tab{CCS}, \plat{$R = P|Q \ar{\alpha} P'|Q$}.
  Moreover, $dec(P'|Q)=dec(P')|+|dec(Q)=J|+K|+|dec(Q)=J|+K'$.
\item Suppose $dec(R)-K' = H| + |K  \ar{\tau} J| + |L$ because $H \ar{a} J$ and $K \ar{\bar a} L$.
  Then $R$ has the form $P|Q$, and $K' = (dec(P)| - H|) + (|dec(Q) - |J)= K_1| + |K_2$, where
  $K_1:=dec(P)-H$ and $K_2:=dec(Q)-J$.
  Thus \plat{$dec(P)-K_1=H\ar{a} J$} and \plat{$dec(Q)-K_2=J\ar{\bar a} L$}, so by induction
  there are $P',Q'\mathbin\in\T_{\rm CCS^!}$ with \plat{$P \ar{a} P'$}, $dec(P')\mathbin=J+K_1$,
  \plat{$Q \ar{\bar a} Q'$} and $dec(Q')\mathbin=L+K_2$.  By \Tab{CCS}, \plat{$R = P|Q \ar{\tau} P'|Q'$}.
  Moreover, $dec(P'|Q')=dec(P')|+|dec(Q')=J|+K_1|+|L+|K_2=J|+|L+K'$.
\item The case for the last rule for parallel composition follows by symmetry.
\item Suppose $dec(R)-K' = \{\sum_{i\in I}\E_i\} \ar{\alpha} J+K$ because $(dec(\E_j){-}K) \ar{\alpha} J$
  for some $j\mathbin\in I$.
  Since the only set $dec(R)$ containing $\sum_{i\in I}\E_i$ is $\{\sum_{i\in I}\E_i\}$,
  we have $K'\mathbin=\emptyset$ and $R=\sum_{i\in I}\E_i$.
  By induction, there is a $P'_j\mathbin\in\T_{\rm CCS^!}$ with \plat{$P_j \ar{\alpha} P'_j$}
  and $dec(P'_j)\mathbin=J{+}K$.
  By \Tab{CCS}, \plat{$R = \sum_{i\in I}\E_i \ar{\alpha} P'_j$}.
\item The case
  for recursion (agent identifiers) goes likewise.
\qed
\end{itemize}
\end{proof}

A trivial induction shows that there are no transitions without preplaces.
The following lemma implies that all reachable markings are finite, so
that the Petri nets of CCS$^!$ expressions
have all the properties of nets imposed in \Sect{nets}.
\begin{lemma}
For any $P\mathbin\in\T_{\rm CCS^!}$ the set $dec(P)$ is finite.
\end{lemma}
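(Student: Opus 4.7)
The plan is to prove this by straightforward structural induction on the CCS$^!$ expression $P$, appealing to the inductive clauses of the definition of $dec$ given just before the unmarked Petri net construction.

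For the base cases, if $P$ has the form $\alpha.Q$, $A\in\K$, or $\sum_{i\in I}\E_i$, then $dec(P)$ is by definition a singleton ($\{\alpha.Q\}$, $\{A\}$, or $\{\sum_{i\in I}\E_i\}$ respectively) and hence trivially finite. The key point worth emphasising is that an arbitrary choice $\sum_{i\in I}\E_i$---possibly with infinite index set $I$---contributes only one grape to $dec(P)$, since the grape construction treats the whole summation as a single entity (whose successors under transitions are handled by the choice rule in \Tab{PN-CCS} via the subterms $\E_j$, not by decomposing the sum into parts).

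For the inductive step, the three cases $P|Q$, $P\backslash a$, and $P[f]$ each reduce a grape set to something built from the grape sets of strictly smaller expressions. Specifically, $dec(P|Q)=dec(P)|\cup{}|dec(Q)$ is a union of two finite sets (finite by the induction hypothesis), $dec(P\backslash a)=dec(P)\backslash a$ is the image of a finite set under the operation $\mu\mapsto \mu\backslash a$, and $dec(P[f])=dec(P)[f]$ is likewise the image of a finite set under $\mu\mapsto\mu[f]$. In all three cases finiteness is preserved.

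I do not anticipate a real obstacle here; this is essentially a bookkeeping lemma. The only subtlety to flag is the treatment of infinite sums: one should make explicit in the argument that $dec$ is not distributed over $\sum_{i\in I}$, so the cardinality of $I$ plays no role in the size of $dec(P)$. Once this is noted, the induction goes through in one line per case, and finiteness of all reachable markings then follows by combining this lemma with \Thm{bisimulation}, since every reachable marking is of the form $dec(R)$ for some $R\in\T_{\rm CCS^!}$ reachable from the initial expression.
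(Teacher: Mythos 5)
Your proof is correct and is exactly the ``straightforward induction'' the paper has in mind: the three singleton base cases (including the observation that an infinite sum still yields a single grape) and the three image/union inductive cases. Nothing further is needed.
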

\begin{proof}
A straightforward induction.
\qed\end{proof}

The above operational Petri net semantics of CCS has the disadvantage that initial concurrency
in expressions of the form $\sum_{i\in I}\E_i$ or $A$ is not represented~\cite{DDM87}.
Although this Petri net semantics matches the LTS semantics of CCS up
to strong (interleaving) bisimilarity---and thereby also the standard denotational
Petri net semantics of CCS-like operators \cite{GV87}---, it
does not match the standard denotational Petri net semantics up to semantic equivalences that take
concurrency explicitly into account. For this reason Olderog~\cite{Old91} provides an alternative
operational Petri net semantics that is more accurate in this sense. However, the work of Olderog
does not generalise in a straightforward way to the infinite sum construct of CCS, and to unguarded
recursion. 
In fact, a safe Petri net that accurately models the concurrent behaviour of the CCS process
$\sum_{i\in \NN}(a_i.0|b_i.0)$ would need an uncountable initial marking, and hence falls outside
the class of nets we handle in \Sect{nets}. Since the accurate modelling of concurrency is not
essential for this paper, we therefore use the semantics of \cite{DDM87}.

\section[Fair Schedulers Cannot be Rendered in CCS---Proof]
        {Fair Schedulers Cannot be Rendered in CCS$^{!}$---Proof}
\label{sec:proof conclusion}

\begin{lemma}\label{lem:injective}
The mapping $dec:\T_{\rm CCS^!} \rightarrow \pow(\SC_{\rm CCS^!})$ is injective.
\end{lemma}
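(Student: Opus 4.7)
The plan is to prove the lemma by structural induction on the CCS$^!$ expression $P$, exploiting the fact that the syntactic categories of grapes in $\SC_{\rm CCS^!}$ (agent identifier, prefixing, choice, restriction, left parallel, right parallel, relabelling) are mutually disjoint, so every grape has a unique outermost form. First I would record two easy preparatory facts used throughout the induction: (i) a trivial induction shows that $dec(R)$ is nonempty for every $R\in\T_{\rm CCS^!}$, and (ii) the constructors $\mu\mapsto\mu|$, $\mu\mapsto|\mu$, $\mu\mapsto\mu\backslash a$ and $\mu\mapsto\mu[f]$ on grapes are individually injective.

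The induction splits into the six syntactic cases of $P$. In the leaf-like cases $P=\alpha.P'$, $P=A$, or $P=\sum_{i\in I}P_i$, the set $dec(P)$ is a singleton whose unique element is a grape of the corresponding form (prefixing, agent identifier, or choice). Since the other expression constructors produce grapes whose outermost form is parallel/restriction/relabelling, no other shape for $Q$ can yield $dec(Q)=dec(P)$; and reading $P$ off the unique grape recovers $P$ directly. For $P=P'\backslash a$ every grape of $dec(P)$ has outermost form ${\cdot}\backslash a$, which rules out every possible shape of $Q$ except $Q=Q'\backslash b$; comparing outermost restriction names forces $b=a$, and by injectivity of $\mu\mapsto\mu\backslash a$ we get $dec(P')=dec(Q')$, so the inductive hypothesis gives $P'=Q'$ and hence $P=Q$. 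The case $P=P'[f]$ is analogous, using that the relabelling $f$ is recoverable from the outermost form of any grape in $dec(P)$.

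The main (and really the only delicate) case is $P=P_1|P_2$. Here $dec(P)=\{\mu|\mid\mu\in dec(P_1)\}\cup\{|\nu\mid\nu\in dec(P_2)\}$, and both parts are nonempty by the preparatory fact. Since grapes of form $\mu|$ and $|\nu$ have distinct outermost constructors, no expression $Q$ of shape prefixing/agent identifier/choice/restriction/relabelling can have $dec(Q)=dec(P)$, so $Q$ must be a parallel composition $Q_1|Q_2$. Now partition $dec(P)=dec(Q)$ according to which of the two constructors ${\cdot}|$ or $|{\cdot}$ appears at the outermost level; this partition is forced by the outermost form, so $\{\mu|\mid\mu\in dec(P_1)\}=\{\mu|\mid\mu\in dec(Q_1)\}$ and likewise on the right. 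Injectivity of $\mu\mapsto\mu|$ and $\mu\mapsto|\mu$ then yields $dec(P_1)=dec(Q_1)$ and $dec(P_2)=dec(Q_2)$, and the inductive hypothesis gives $P_1=Q_1$, $P_2=Q_2$, hence $P=Q$.

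The only real obstacle to watch for is the parallel case: one must be sure that the two sides of a parallel composition cannot be confused with each other or with a restriction/relabelling/leaf. The disjointness of the seven grape shapes, together with the non-emptiness of $dec(\cdot)$ so that both parallel halves actually contribute grapes, handles this cleanly; no deeper combinatorial argument is needed.
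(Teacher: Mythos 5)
Your proof is correct and is exactly the ``straightforward induction on the structure of the elements in $\T_{\rm CCS^!}$'' that the paper itself gives (without spelling out any details); the points you isolate---disjointness of the outermost grape constructors, injectivity of $\mu\mapsto\mu|$, $\mu\mapsto|\mu$, $\mu\mapsto\mu\backslash a$, $\mu\mapsto\mu[f]$, and nonemptiness of $dec(\cdot)$ to make the parallel and restriction cases go through---are precisely the ingredients the paper leaves implicit.
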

\begin{proof}
A straightforward induction on the structure of the elements in $\T_{\rm CCS^!}$.
\qed
\end{proof}

\begin{lemma}\label{lem:path transfer}
Let $P\mathbin\in\T_{\rm CCS^!}$.
For any path $\pi=dec(P) u_1 M_1 u_2 M_2 u_3 \dots$ in the unmarked Petri net of CCS$^!$
there is a unique path \plat{$\widehat\pi = P \ar{\alpha_1} P_1 \ar{\alpha_2} P_2 \ar{\alpha_3}
\dots$} of the same (finite or infinite) length in the LTS of CCS$^!$ with $dec(P_i)\mathbin=M_i$ and $\ell(u_i)\mathbin=\alpha_i$ for
all $i$.
\end{lemma}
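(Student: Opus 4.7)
The plan is a routine induction on the length of the prefix, using the bisimulation-style correspondence of \Thm{bisimulation} to build $\widehat\pi$ step by step, and \Lem{injective} to pin down uniqueness.

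First I would proceed by induction on $i \geq 0$ to construct, for each position $i$ within the length of $\pi$, a prefix $P \ar{\alpha_1} P_1 \ar{\alpha_2} \cdots \ar{\alpha_i} P_i$ of $\widehat\pi$ satisfying $dec(P_j) = M_j$ and $\ell(u_j) = \alpha_j$ for all $j \leq i$. The base case $i=0$ is immediate: the empty prefix starts in $P$, and $dec(P) = M_0$ holds by definition. For the inductive step, suppose the prefix up to $P_{i-1}$ has been built with $dec(P_{i-1}) = M_{i-1}$. The Petri-net transition $M_{i-1} [u_i\rangle M_i$ with $\ell(u_i) = \alpha_i$ unfolds into $dec(P_{i-1}) \mathrel{[\alpha_i\rangle} M_i$. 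By the second part of \Thm{bisimulation}, there exists $P_i \in \T_{\rm CCS^!}$ with $P_{i-1} \ar{\alpha_i} P_i$ and $dec(P_i) = M_i$, which extends the prefix as required.

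For uniqueness, suppose $\widehat\pi'$ is another path with states $P'_i$ and labels $\alpha'_i$ meeting the same conditions. Then $\alpha'_i = \ell(u_i) = \alpha_i$ for all $i$, and $dec(P'_i) = M_i = dec(P_i)$; by \Lem{injective}, $P'_i = P_i$ for all $i$. Hence the sequence of states and labels constituting $\widehat\pi$ is uniquely determined.

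Finally, if $\pi$ is infinite, the construction above yields an ever-growing sequence of consistent finite prefixes, whose union forms the unique infinite path $\widehat\pi$; if $\pi$ is finite of length $n$, the induction terminates at $i=n$ and produces the unique finite $\widehat\pi$ of the same length. There is no real obstacle here: the work has already been done by \Thm{bisimulation} (existence at each step) and \Lem{injective} (uniqueness at each step); this lemma is essentially the statement that these two facts assemble into a path-level correspondence.
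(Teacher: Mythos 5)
Your proof is correct and follows exactly the route the paper takes: the paper's own proof is simply ``a straightforward induction on $i$, using Theorem~\ref{thm:bisimulation} and Lemma~\ref{lem:injective}'', which is precisely the argument you spell out. No issues.
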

\begin{proof}
A straightforward induction on $i$, using \Thm{bisimulation} and \Lem{injective}.
\qed
\end{proof}

\noindent
The following observations are based directly on \Tab{PN-CCS} and the definition of $dec$.%
\begin{observation}\label{obs:decompose parallel}\rm Let $dec(P|Q)\mathrel{[u\rangle} M$.
Then $M$ has the form $dec(P'|Q')$ and either
\begin{itemize}
\vspace{-1ex}
\item $Q\mathbin=Q'$ and $dec(P)\mathrel{[v\rangle} dec(P')$ for a $v\mathbin\in T$ with $\ell(v)\mathbin=\ell(u)$,
  $\precond{u}\mathbin=\precond{v}|$ and $\postcond{u}\mathbin=\postcond{v}|$,
\item $P\mathbin=P'$ and $dec(Q)\mathrel{[w\rangle} dec(Q')$ for a $w\mathbin\in T$ with $\ell(w)\mathord=\ell(u)$,
  $\!\precond{u}\mathord=|\precond{w}$ and $\postcond{u}\mathord=|\postcond{w}\!\!$,
\item or $dec(P)\mathrel{[v\rangle} dec(P')$ and $dec(Q)\mathrel{[w\rangle} dec(Q')$ for
  $v,w\mathbin\in T$ with $\ell(v)\mathbin=c\in\HC$, $\ell(w)\mathbin=\bar c$,
  $\precond{u}\mathbin=\precond{v}|+|\precond{w}$ and $\postcond{u}\mathbin=\postcond{v}|+|\postcond{w}$.
\end{itemize}
\end{observation}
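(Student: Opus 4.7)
The plan is to perform a case analysis on the last rule of Table~\ref{tab:PN-CCS} used to derive the transition $u$. The preliminary observation is that every place appearing in $dec(P|Q) = dec(P)|\,\cup\,|dec(Q)$ has the form $\mu|$ or $|\nu$. Since $\precond{u}\leq dec(P|Q)$, every preplace of $u$ shares this shape. Inspecting Table~\ref{tab:PN-CCS}, the conclusions of the prefixing, choice, restriction, relabelling, and recursion rules have preplaces of shape $\{\alpha.R\}$, $\{\sum_i E_i\}$, $H\backslash a$, $H[f]$, and $\{A\}$ respectively, none of which can occur in $dec(P|Q)$. Hence the derivation of $u$ must terminate with one of the three parallel composition rules.

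In the ``left-only'' case $u$ is obtained from $\frac{H\ar{\alpha}J}{H|\ar{\alpha}J|}$, producing a Petri net transition $v$ with $\precond{v}=H$, $\postcond{v}=J$ and $\ell(v)=\ell(u)$. From $H|\leq dec(P|Q)$ I deduce $H\leq dec(P)$, so writing $dec(P)=H+C$ for a suitable multiset $C$ I can apply the second part of Theorem~\ref{thm:bisimulation} to obtain $P'\in\T_{\rm CCS^!}$ with $P\ar{\ell(u)}P'$ and $dec(P')=J+C$, giving $dec(P)\mathrel{[v\rangle}dec(P')$. A direct computation then yields $M = dec(P|Q)-\precond{u}+\postcond{u} = dec(P')|+|dec(Q) = dec(P'|Q)$, so taking $Q':=Q$ the first bullet of the observation holds; the relations $\precond{u}=\precond{v}|$ and $\postcond{u}=\postcond{v}|$ are immediate from the rule. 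The ``right-only'' case is entirely symmetric.

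In the synchronization case, $u$ comes from $\frac{H\ar{a}J,\;K\ar{\bar a}L}{H|+|K\ar{\tau}J|+|L}$, so $\precond{u}=H|+|K$ and $\postcond{u}=J|+|L$. A parallel argument applies Theorem~\ref{thm:bisimulation} once on each side to produce $P'$ and $Q'$ with $P\ar{a}P'$, $Q\ar{\bar a}Q'$, $dec(P)\mathrel{[v\rangle}dec(P')$ and $dec(Q)\mathrel{[w\rangle}dec(Q')$, and the same computation gives $M=dec(P'|Q')$. I do not foresee any serious obstacle: the whole argument is routine bookkeeping, and the only points deserving care are tracking the ``context'' multisets $dec(P)-H$ and $dec(Q)-K$ when inverting Theorem~\ref{thm:bisimulation}, and invoking the injectivity of $dec$ (Lemma~\ref{lem:injective}) to recover $P'$ and $Q'$ uniquely from their decompositions.
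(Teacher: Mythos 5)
Your proposal is correct and follows exactly the route the paper intends: the paper states this as an unproved observation ``based directly on \Tab{PN-CCS} and the definition of $dec$'', and your case analysis on the last rule applied---using the shape of the grapes in $dec(P|Q)$ to force one of the three parallel rules, and then inverting via the second half of \Thm{bisimulation}---is precisely the elaboration of that remark. The only cosmetic point is that injectivity of $dec$ is not actually needed, since the observation only asserts the existence of $P'$ and $Q'$, not their uniqueness.
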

For each such transition $u$, the transitions $v$ and $w$ discovered above are called the left- and
right-projections of $u$, respectively, when they exist.
Hence any path $\pi$ starting from a marking $dec(P|Q)$ can be uniquely decomposed into a 
path $\pi_1$ starting from $dec(P)$ and a path $\pi_2$ starting from $dec(Q)$, notation $\pi\Rrightarrow\pi_1|\pi_2$. 
The path $\pi_1$ fires all existing left-projections of the transitions in $\pi$, in order, and
$\pi_2$ its right-projections.

\begin{lemma}\label{lem:enabled path}
If $\pi \Rrightarrow \pi_1 | \pi_2$ and $\pi_1$ is $\alpha$-enabled, then so is $\pi$. 
\end{lemma}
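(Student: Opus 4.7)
The plan is to lift the transition witnessing $\alpha$-enabledness of $\pi_1$ to a transition of the unmarked Petri net of CCS$^!$ that witnesses $\alpha$-enabledness of $\pi$. Concretely, I would fix a transition $v$ with $\ell(v)=\alpha$ that is continuously enabled on $\pi_1$ from some position $k$ onwards, so $\precond{v}\leq M$ (the marking at position $k$ of $\pi_1$) and $\precond{v}\cap\precond{v'}=\emptyset$ for every transition $v'$ fired on $\pi_1$ past position $k$. The first rule for parallel composition in \Tab{PN-CCS}, applied to $\precond{v}\ar{\alpha}\postcond{v}$, yields a transition $u$ in the unmarked net with $\ell(u)=\alpha$, $\precond{u}=\precond{v}|$ and $\postcond{u}=\postcond{v}|$. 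This $u$ will be my candidate witness for $\pi$.

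Next I would pinpoint the position $k'$ in $\pi$ at which $u$ becomes continuously enabled. Since $\pi\Rrightarrow\pi_1|\pi_2$ is obtained by collecting the left-projections of the transitions of $\pi$, in order, into $\pi_1$, there is a smallest index $k'$ such that exactly $k$ of the transitions $u_1,\dots,u_{k'}$ of $\pi$ admit a left-projection. The marking at position $k'$ of $\pi$ has the form $dec(P'|Q')=dec(P')|+|dec(Q')$, where $dec(P')$ is precisely $M$, the marking at position $k$ of $\pi_1$. From $\precond{v}\leq dec(P')$ I obtain $\precond{u}=\precond{v}|\leq dec(P')|$, so $u$ is enabled at position $k'$ of $\pi$.

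It remains to verify disjointness, i.e.\ that $\precond{u}\cap\precond{u'}=\emptyset$ for every transition $u'$ fired on $\pi$ past position $k'$. By Observation~\ref{obs:decompose parallel}, $\precond{u'}$ has one of the forms $\precond{v'}|$, $|\precond{w'}$, or $\precond{v'}|+|\precond{w'}$, where $v'$ and $w'$ are the left- and right-projections of $u'$. Places of the form $|\mu$ never occur in $\precond{u}=\precond{v}|$, so any right-projection summand of $\precond{u'}$ is automatically disjoint from $\precond{u}$. When $u'$ admits a left-projection $v'$, the choice of $k'$ guarantees that $v'$ appears on $\pi_1$ past position $k$; hence $\precond{v}\cap\precond{v'}=\emptyset$ by the $\alpha$-enabledness of $\pi_1$, and this lifts to $\precond{v}|\cap\precond{v'}|=\emptyset$. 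In every case $\precond{u}\cap\precond{u'}=\emptyset$, so $u$ is continuously enabled on $\pi$ from position $k'$ onwards and $\pi$ is $\alpha$-enabled.

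The argument has no deep ingredient beyond Observation~\ref{obs:decompose parallel}; the one thing that requires care is the bookkeeping relating positions of $\pi$ to positions of $\pi_1$ through the left-projection operation of $\Rrightarrow$, together with the trivial boundary cases $k=0$ (where $k'=0$) and $\pi_1$ finite of length exactly $k$ (where the disjointness condition past position $k$ is vacuous).
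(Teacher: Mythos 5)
Your proposal is correct and follows essentially the same route as the paper's own proof: lift the witnessing transition $v$ through the left parallel-composition rule of \Tab{PN-CCS} to a transition with preset $\precond{v}|$, locate the position of $\pi$ whose marking projects onto the relevant marking of $\pi_1$, and derive the disjointness condition for later transitions of $\pi$ from that for later transitions of $\pi_1$. Your treatment of the disjointness step is in fact slightly more explicit than the paper's, which passes over the case analysis of \Obs{decompose parallel} in a single sentence.
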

\begin{proof}
Let $\pi_1 = M_0' v_1 M_1' v_2 M_2' v_3 \dots$ be $\alpha$-enabled. Then there is a $h\geq 0$ and a transition $v$ with
$\ell(v)=\alpha$ such that $M_h' [v\rangle$ and $\precond{v}\cap\precond{v_i}=\emptyset$ for all $i \mathbin > h$.
Let $\pi = M_0 u_1 M_1 u_2 M_2 u_3 \dots$. 
Let $k\geq h$ be such that $v_0 \dots v_h$ is the sequence of existing left-projections of $u_1\dots u_k$.
The marking $M_k$ has the form $dec(P_k|Q_k)=dec(P_k)| \dcup |dec(Q_k)$ with $M_h'=dec(P_k)$.
Since $\precond{v}\leq dec(P_k)$, by \Tab{PN-CCS} there exists a transition $v'$ with
$\ell(v')=\alpha$ and $\precond{v'}=\precond{v}| \leq dec(P_k)| \leq M_k$. So $M_k[v'\rangle$.
Moreover, since $\precond{v}\cap\precond{v_i}=\emptyset$ for all $i>h$ we have $\precond{v'}\cap
\precond{u_i}=\emptyset$ for all $i>k$.
It follows that $\pi$ is $\alpha$-enabled.
\qed
\end{proof}

\begin{lemma}\label{lem:enabled path 2}
If $\pi \Rrightarrow \pi_1 | \pi_2$, $\pi_1$ is $c$-enabled and $\pi_2$ is
$\bar c$-enabled, for some $c\in\HC$, then $\pi$ is $\tau$-enabled.
\end{lemma}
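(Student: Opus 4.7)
The plan is to mirror the proof of the preceding Lemma~\ref{lem:enabled path}, but now combine the witnesses from the two sides via the synchronisation rule of Table~\ref{tab:PN-CCS} rather than via a single projection.

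Unfolding the assumptions, pick $h_1\geq 0$, a transition $v$ with $\ell(v)=c$, and a position in $\pi_1$ at marking $M'_{h_1}$ such that $\precond{v}\leq M'_{h_1}$ and $\precond{v}\cap\precond{v_i}=\emptyset$ for all $i>h_1$; symmetrically pick $h_2$, $w$ with $\ell(w)=\bar c$, and a marking $N'_{h_2}$ in $\pi_2$ with $\precond{w}\leq N'_{h_2}$ and $\precond{w}\cap\precond{w_j}=\emptyset$ for all $j>h_2$. Now choose $k\geq 0$ so that the left-projections of $u_1\dots u_k$ include $v_1\dots v_{h_1}$ and the right-projections include $w_1\dots w_{h_2}$. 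Writing $M_k=dec(P_k|Q_k)=dec(P_k)| + |dec(Q_k)$, one has $M'_{h_1}=dec(P_k)$ and $N'_{h_2}=dec(Q_k)$, so $\precond{v}\leq dec(P_k)$ and $\precond{w}\leq dec(Q_k)$. By the synchronisation rule of Table~\ref{tab:PN-CCS} there is a transition $u$ with $\ell(u)=\tau$, $\precond{u}=\precond{v}|+|\precond{w}$, and hence $\precond{u}\leq M_k$, i.e.\ $M_k[u\rangle$.

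It remains to check $\precond{u}\cap\precond{u_i}=\emptyset$ for every $i>k$. Here Observation~\ref{obs:decompose parallel} classifies $u_i$ into three cases. If $u_i$ has only a left-projection, say $v_j$ with $j>h_1$, then $\precond{u_i}=\precond{v_j}|$ and $\precond{v}\cap\precond{v_j}=\emptyset$ gives $\precond{u}\cap\precond{u_i}=\emptyset$ using the disjointness of the ``left'' and ``right'' tags on places. The symmetric argument handles a pure right-projection. In the synchronisation case, where $\precond{u_i}=\precond{v_j}|+|\precond{w_{j'}}$ with $j>h_1$ and $j'>h_2$, both summands are disjoint from the corresponding tagged parts of $\precond{u}$, yielding again $\precond{u}\cap\precond{u_i}=\emptyset$. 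This shows $\pi$ is $\tau$-enabled.

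The only mildly delicate point is the bookkeeping at the third step: one must know that each future $u_i$ decomposes into projections strictly past $h_1$ on the left and past $h_2$ on the right; this is exactly what the choice of $k$ guarantees, since $\pi\Rrightarrow\pi_1|\pi_2$ was obtained by taking all left-projections (in order) into $\pi_1$ and all right-projections (in order) into $\pi_2$. No new machinery beyond Table~\ref{tab:PN-CCS}, Observation~\ref{obs:decompose parallel}, and the bookkeeping already used in Lemma~\ref{lem:enabled path} is required.
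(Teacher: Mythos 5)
Your proof is correct and follows essentially the same route as the paper's: translate the $c$- and $\bar c$-enabledness of $\pi_1$ and $\pi_2$ into witnesses $v$ and $w$ persisting past a common position $k$ of $\pi$, combine them via the synchronisation rule of Table~\ref{tab:PN-CCS} into a $\tau$-transition $u$ with $\precond{u}=\precond{v}|+|\precond{w}\leq M_k$, and verify disjointness from all later $\precond{u_i}$ through the case analysis of Observation~\ref{obs:decompose parallel}. The paper merely compresses the bookkeeping you spell out by appealing to ``the same reasoning as in the previous proof'' and taking $k=\max(k_1,k_2)$.
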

\begin{proof}
Let $\pi = M_0 u_1 M_1 u_2 M_2 u_3 \dots$. Since $ \pi \Rrightarrow  \pi_1 |  \pi_2$,
each marking $M_i$ has the form $dec(P_i|Q_i) = dec(P_i)| \dcup |dec(Q_i)$.
By the same reasoning as in the previous proof, there is a $k_1\geq 0$ and a transition $v$ with $\ell(v)=c$ such that
$\precond{v} \leq dec(P_{i})$ and $\precond{v}|\cap \precond{u_i}=\emptyset$ for all $i\geq k_1$.
Likewise, there is a $k_2\geq 0$ and a transition $w$ with $\ell(w)=\bar c$ such that
$\precond{w} \leq dec(Q_{i}|$ and $|\precond{w}\cap \precond{u_i}=\emptyset$ for all $i\geq k_2$.

Let $k=\max(k_1,k_2)$. By the fourth rule of \Tab{PN-CCS} there is a transition $u$ with
$\ell(u)\mathbin=\tau$ and $\precond{u}\mathbin=\precond{v}|+|\precond{w} \mathbin\leq M_k$ and
$\precond{u}\cap \precond{u_i}\mathbin=\emptyset$ for all $i> k$. So $\pi$ is $\tau$-enabled.
\qed
\end{proof}

\begin{observation}\label{obs:decompose restriction}\rm
Let $dec(P\backslash c)\mathrel{[u\rangle} M$.
Then $M$ has the form $dec(P'\backslash c)$ and we have
$dec(P)\mathrel{[v\rangle} dec(P')$ for a $v\mathbin\in T$ with $\ell(v)\mathbin=\ell(u)$,
$\precond{u}\mathbin=\precond{v}\backslash c$ and $\postcond{u}\mathbin=\postcond{v}\backslash c$.
\end{observation}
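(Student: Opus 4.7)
The proof proposal mirrors the structure used for \emph{Observation}~\ref{obs:decompose parallel}, proceeding by a straightforward case analysis on which rule of \Tab{PN-CCS} derives the transition $u$. The starting point is the identity $dec(P\backslash c) = dec(P)\backslash c = \{\mu\backslash c \mid \mu \in dec(P)\}$, together with the requirement $\precond{u} \leq dec(P\backslash c)$, which forces every preplace of $u$ to be a grape of the form $\mu\backslash c$.

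Next I would observe by inspection that the only rule in \Tab{PN-CCS} capable of producing a transition whose preplaces are \emph{all} grapes of the form $\mu\backslash c$ (with the specific restriction label $c$) is the restriction rule itself; every other rule forces preplaces of a different outermost shape, namely $\{\alpha.P\}$, $\{\sum_{i\in I}\E_i\}$, $\{A\}$, $\mu|$, $|\mu$, or $\mu[f]$. Consequently, $u$ must be derived from some premise $H \ar{\ell(u)} J$ via the restriction rule with the same $c$; the associated transition $v$ then satisfies $\ell(v) = \ell(u)$, $\precond{v} = H$ and $\postcond{v} = J$, giving $\precond{u} = \precond{v}\backslash c$ and $\postcond{u} = \postcond{v}\backslash c$. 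The side condition $c \neq \ell(u) \neq \bar c$ is automatically met since $u$ is derivable.

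To finish, I would compute the resulting marking explicitly:
\[
M \;=\; dec(P\backslash c) - \precond{u} + \postcond{u} \;=\; dec(P)\backslash c - H\backslash c + J\backslash c \;=\; (dec(P) - H + J)\backslash c\,.
\]
Since $dec(P)\mathrel{[v\rangle} dec(P) - H + J$, the second part of \Thm{bisimulation} supplies a $P' \in \T_{\rm CCS^!}$ with $P \ar{\ell(u)} P'$ and $dec(P') = dec(P) - H + J$; hence $M = dec(P')\backslash c = dec(P'\backslash c)$, as required.

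There is no genuine obstacle here: the entire argument is mechanical case analysis on the rules of \Tab{PN-CCS} plus the already-established bisimulation theorem. The only subtlety worth flagging is the need to verify that the restriction rule is the \emph{unique} applicable rule, which follows at once from the syntactic shape of grapes; the observation is thus essentially an immediate reading of \Tab{PN-CCS}, just as \emph{Observation}~\ref{obs:decompose parallel} is for the parallel composition rules.
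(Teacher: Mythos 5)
Your proof is correct and matches the paper's intent exactly: the paper offers no explicit proof, stating only that the observation is ``based directly on \Tab{PN-CCS} and the definition of $dec$'', and your case analysis on the shape of the preplaces (which forces the restriction rule, using that $\precond{u}\neq\emptyset$) together with the appeal to \Thm{bisimulation} is precisely the intended immediate reading. No gaps.
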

For each such transition $u$, the transition $v$ discovered above is called the projection of $u$.
Hence any path $\pi$ starting from a marking $dec(P)\backslash c$ can be uniquely decomposed into a 
path $\pi'$ starting from $dec(P)$, notation $\pi\Rrightarrow\pi'\backslash c$. 
The path $\pi'$ fires all the projections of the transitions in $\pi$, in order.

\begin{lemma}\label{lem:enabled path 3}
If $ \pi \Rrightarrow  \pi'\backslash c$ and $\pi'$ is $\alpha$-enabled with $c \neq \alpha
\neq \bar c$, then $\pi$ is $\alpha$-enabled.
\end{lemma}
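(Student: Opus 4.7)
The plan is to mimic the argument used for \Lem{enabled path} in the parallel case, which is in fact simpler here because restriction induces a bijection (rather than a fan-out) between transitions of $\pi$ and their projections in $\pi'$. Write $\pi = M_0 u_1 M_1 u_2 M_2 \dots$ and $\pi' = M_0' v_1 M_1' v_2 M_2' \dots$, where each $v_i$ is the unique projection of $u_i$ given by \Rem{decompose restriction}. Since $\pi \Rrightarrow \pi' \backslash c$, each marking $M_i$ has the form $dec(P_i)\backslash c$ with $M_i' = dec(P_i)$, and the two sequences are index-aligned: $\ell(u_i) = \ell(v_i)$, $\precond{u_i} = \precond{v_i}\backslash c$, $\postcond{u_i}=\postcond{v_i}\backslash c$ for every $i$.

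Next, unpack the hypothesis that $\pi'$ is $\alpha$-enabled: there exists an index $h\geq 0$ and a transition $v$ with $\ell(v) = \alpha$ such that $\precond{v}\leq M_h'$ and $\precond{v}\cap\precond{v_i}=\emptyset$ for all $i>h$. Because $c\neq\alpha\neq\bar c$, the restriction rule of \Tab{PN-CCS} yields a transition $u$ with $\ell(u)=\alpha$, $\precond{u}=\precond{v}\backslash c$ and $\postcond{u}=\postcond{v}\backslash c$. The idea is to show that $u$ witnesses $\alpha$-enabledness of $\pi$ at the very same position $h$.

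For enabledness at $M_h$: from $\precond{v}\leq M_h' = dec(P_h)$ we obtain $\precond{u}=\precond{v}\backslash c\leq dec(P_h)\backslash c = M_h$, so $M_h[u\rangle$. For the non-interference condition, let $i>h$. Then $\precond{u}\cap\precond{u_i} = (\precond{v}\backslash c)\cap(\precond{v_i}\backslash c)$, and since $\precond{v}\cap\precond{v_i}=\emptyset$ it follows that this intersection is empty. Hence $u$ is continuously enabled from position $h$ onwards on $\pi$, establishing $\alpha$-enabledness of $\pi$.

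There is no real obstacle here: the lemma is essentially bookkeeping, and the only subtlety is to remember that the side condition $c\neq\alpha\neq\bar c$ is exactly what licenses lifting $v$ to a transition $u$ of the restricted net via the restriction rule. Without that side condition, the transition $u$ simply would not exist, and the conclusion could fail. (Note that, unlike the parallel case, we need no counterpart of \Lem{enabled path 2} for restriction, since restriction cannot combine two transitions into a $\tau$.)
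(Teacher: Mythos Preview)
Your proof is correct and follows essentially the same approach as the paper's: unfold $\alpha$-enabledness of $\pi'$ to obtain a witness $v$ at some position, lift $v$ through the restriction rule of \Tab{PN-CCS} (using the side condition $c\neq\alpha\neq\bar c$), and verify that the lifted transition is continuously enabled on $\pi$ from the same position by exploiting injectivity of the map $\mu\mapsto\mu\backslash c$ on preplaces. The paper's argument is only cosmetically different (it names the lifted transition $v'$ and the position $k$), and your additional remarks on why the parallel case needs an extra lemma while restriction does not are accurate.
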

\begin{proof}
Let $\pi' = M_0' v_1 M_1' v_2 M_2' v_3 \dots$ be $\alpha$-enabled. Then there is a $k\geq 0$ and a transition $v$ with
$\ell(v)=\alpha$ such that $M_k' [v\rangle$ and $\precond{v}\cap\precond{v_i}=\emptyset$ for all $i \mathbin > k$.
Let $\pi = M_0 u_1 M_1 u_2 M_2 u_3 \dots$. 
The marking $M_k$ has the form $dec(P_k\backslash c)=dec(P_k)\backslash c$ with $M_k'=dec(P_k)$.
Since $\precond{v}\leq dec(P_k)$ and $c\neq\alpha\neq\bar{c}$, by \Tab{PN-CCS} there exists a transition $v'$ with
$\ell(v')=\alpha$ and $\precond{v'}=\precond{v}\backslash c \leq dec(P_k)\backslash c = M_k$. So $M_k[v'\rangle$.
Moreover, since $\precond{v}\cap\precond{v_i}=\emptyset$ for all $i>k$ we have $\precond{v'}\cap
\precond{u_i}=\emptyset$ for all $i>k$. It follows that $\pi$ is $\alpha$-enabled.
\qed
\end{proof}

\begin{observation}\label{obs:decompose relabelling}\rm
Let $dec(P[f])\mathrel{[u\rangle} M$.
Then $M$ has the form $dec(P'[f])$ and we have
$dec(P)\mathrel{[v\rangle} dec(P')$ for a $v\mathbin\in T$ with $f(\ell(v))\mathbin=\ell(u)$,
$\precond{u}\mathbin=\precond{v}[f]$ and $\postcond{u}\mathbin=\postcond{v}[f]$.
\end{observation}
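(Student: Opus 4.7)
The plan is to mirror the proofs (implicit in the text) of Observations \ref{obs:decompose parallel} and \ref{obs:decompose restriction}, simply by reading off the only applicable rule of \Tab{PN-CCS}. Concretely, let $dec(P[f]) \mathrel{[u\rangle} M$, so that $\precond{u} \leq dec(P[f]) = dec(P)[f]$ and $M = dec(P)[f] - \precond{u} + \postcond{u}$. Every element of $\precond{u}$ is therefore a grape of the outer form $\mu[f]$ with $\mu \in dec(P)$. The first observation I would establish is that, among the eight rules of \Tab{PN-CCS}, only the relabelling rule produces a transition all of whose preplaces have this outer $[f]$-shape for this particular $f$: in the prefix and choice rules preplaces have the forms $\alpha.E$ resp.\ $\sum_{i}E_i$, in the recursion rule they are of the form $A$, in the parallel/synchronisation rules they have outer markers $\mid$ or $|$, and in the restriction rule they carry an outer $\backslash a$.

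Hence $u$ is derived from $\frac{H \ar{\beta} J}{H[f] \ar{f(\beta)} J[f]}$ for some underlying transition $v$ of the unmarked net of CCS$^!$ with $\ell(v) = \beta$, $\precond{v} = H$, $\postcond{v} = J$, $\precond{u} = H[f] = \precond{v}[f]$, $\postcond{u} = J[f] = \postcond{v}[f]$ and $\ell(u) = f(\beta) = f(\ell(v))$. Since the map $\mu \mapsto \mu[f]$ on $\SC_{\rm CCS^!}$ is injective, $\precond{u} = \precond{v}[f] \leq dec(P)[f]$ implies $\precond{v} \leq dec(P)$, so $v$ is enabled from $dec(P)$ and fires to $M' := dec(P) - \precond{v} + \postcond{v}$, i.e.\ $dec(P) \mathrel{[v\rangle} M'$. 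By the second half of \Thm{bisimulation} there exists $P' \in \T_{\rm CCS^!}$ with $P \ar{\beta} P'$ and $dec(P') = M'$, so that $dec(P) \mathrel{[v\rangle} dec(P')$, which is the decomposition sought.

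Finally I would verify that $M$ has the claimed form by a one-line calculation:
\[
M = dec(P)[f] - \precond{v}[f] + \postcond{v}[f] = \bigl(dec(P) - \precond{v} + \postcond{v}\bigr)[f] = M'[f] = dec(P')[f] = dec(P'[f]),
\]
using that $(-)[f]$ distributes over the multiset operations on markings and that $dec(P'[f]) = dec(P')[f]$ by definition of $dec$. The only non-routine step is the case analysis ruling out the other rules of \Tab{PN-CCS}; but this is entirely syntactic, since grapes are freely generated by their outer constructors, so the shape of $\precond{u}$ uniquely determines which rule of \Tab{PN-CCS} derived $u$. Nothing harder than that is required.
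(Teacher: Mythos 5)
Your proof is correct and matches the paper's approach: the paper offers no explicit argument, saying only that the observation is ``based directly on \Tab{PN-CCS} and the definition of $dec$'', and your case analysis on the outer constructor of the preplaces is precisely that direct reading, with the routine details (injectivity of $\mu\mapsto\mu[f]$ on grapes, the second half of \Thm{bisimulation} to recover $P'$, distributivity of $(-)[f]$ over the multiset operations) filled in. The only point you leave tacit is that the case analysis requires $\precond{u}\neq\emptyset$, which the paper establishes separately (``a trivial induction shows that there are no transitions without preplaces''), so nothing essential is missing.
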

For each such transition $u$, the transition $v$ discovered above is called a projection of $u$---it
need not be unique.
Hence any path $\pi$ starting from a marking $dec(P)\backslash c$ can be decomposed into a 
path $\pi'$ starting from $dec(P)$, notation $\pi\in \pi'[f]$. 
The path $\pi'$ fires projections of the transitions in $\pi$, in order.

\begin{lemma}\label{lem:enabled path 4}
If $ \pi \in  \pi'[f]$ and $\pi'$ is $\alpha$-enabled, then $\pi$ is $f(\alpha)$-enabled.
\end{lemma}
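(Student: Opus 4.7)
The plan is to mirror the proof of \Lem{enabled path 3}, substituting the role of restriction by relabelling, and appealing to \Ob{decompose relabelling} to decompose markings along $\pi$.

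First, unpack the hypothesis that $\pi'$ is $\alpha$-enabled: there exists $k\geq 0$ and a transition $v$ with $\ell(v)=\alpha$ such that $M_k'[v\rangle$ and $\precond{v}\cap\precond{v_i}=\emptyset$ for all $i>k$, where $\pi' = M_0' v_1 M_1' v_2 M_2' v_3 \dots$. Write $\pi = M_0 u_1 M_1 u_2 M_2 u_3 \dots$. Since $\pi\in\pi'[f]$, \Ob{decompose relabelling} yields processes $P_i\in\T_{\rm CCS^!}$ such that $M_i = dec(P_i[f]) = dec(P_i)[f]$, with $M_i' = dec(P_i)$; moreover each $v_i$ is a projection of $u_i$, so $\precond{u_i} = \precond{v_i}[f]$.

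Next, construct the required witness in $\pi$. Since $\precond{v}\leq dec(P_k) = M_k'$, the rule for relabelling in \Tab{PN-CCS} yields a transition $v'$ with $\ell(v') = f(\ell(v)) = f(\alpha)$ and $\precond{v'} = \precond{v}[f] \leq dec(P_k)[f] = M_k$, so $M_k[v'\rangle$. It remains to show $\precond{v'}\cap\precond{u_i}=\emptyset$ for all $i>k$, that is $\precond{v}[f]\cap\precond{v_i}[f]=\emptyset$. This follows from $\precond{v}\cap\precond{v_i}=\emptyset$ together with the observation that the map $\mu\mapsto\mu[f]$ on grapes is injective (it merely wraps a grape inside a relabelling constructor), so that relabelling preserves disjointness of (multi)sets of places. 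Together these two facts show $\pi$ is $f(\alpha)$-enabled.

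The one subtlety compared to \Lem{enabled path 3} is the non-uniqueness of projections noted in \Ob{decompose relabelling}: each $u_i$ may have several projections, and the statement of the lemma depends on one particular choice $v_1 v_2 \dots$ witnessing $\pi\in\pi'[f]$. However, this plays no role in the argument, because $\precond{u_i} = \precond{v_i}[f]$ is determined by \emph{any} projection $v_i$ of $u_i$, so the structural identity $\precond{v}[f]\cap\precond{v_i}[f] = (\precond{v}\cap\precond{v_i})[f]$ suffices regardless of which projection was chosen. I therefore do not anticipate any real obstacle; the argument is a routine adaptation of the restriction case, with the side condition $c\neq\alpha\neq\bar c$ dropped (relabelling places no such constraint on~$\alpha$) and the label $\alpha$ replaced by $f(\alpha)$ in the conclusion.
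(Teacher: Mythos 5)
Your proof is correct and is exactly what the paper intends: the paper's own proof of this lemma consists solely of the remark ``Just as the proof of \Lem{enabled path 3}'', and your write-up is precisely that adaptation (replace $\backslash c$ by $[f]$, drop the side condition $c\neq\alpha\neq\bar c$, relabel the witness to $f(\alpha)$, and use injectivity of the grape constructor $\mu\mapsto\mu[f]$ to transfer disjointness of preplaces). Your observation that the choice of projections $v_i$ is immaterial because $\precond{u_i}=\precond{v_i}[f]$ holds for any projection is a correct and worthwhile clarification.
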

\begin{proof}
Just as the proof of \Lem{enabled path 3}.
\qed
\end{proof}

\begin{observation}\label{obs:decompose projection}\rm
Let $\pi$ be a path in the unmarked Petri net of CCS$^!$.\\
Then $\pi\Rrightarrow\pi_1|\pi_2$ implies 
{that  $\widehat\pi_1|\widehat\pi_2$ is a decomposition of $\widehat\pi$ (c.f.\ Page~\hyperlink{hr:decomp}{\pageref*{pg:decomp}})}.
\\
Likewise, if $\pi\Rrightarrow\pi'\backslash c$ or $\pi\in\pi'[f]$ then $\widehat\pi'$ is a
decomposition of $\widehat\pi$.
\end{observation}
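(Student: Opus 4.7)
The proof plan is to reduce the observation to a per-transition verification. Given a path $\pi$ with $\pi \Rrightarrow \pi_1 | \pi_2$, the projections $\pi_1$ and $\pi_2$ are obtained by projecting each transition $u$ of $\pi$ according to Observation~\ref{obs:decompose parallel}: either $u$ has only a left-projection $v$ (contributing to $\pi_1$ alone), only a right-projection $w$ (contributing to $\pi_2$ alone), or both $v$ and $w$ in the synchronisation case (contributing one to each). By \Thm{bisimulation} each such $v$ or $w$ lifts to a unique LTS transition on source terms $P$ and $Q$ with $dec(P)$, $dec(Q)$ the corresponding Petri net sub-markings, and by \Lem{path transfer} (together with the injectivity of $dec$ given by \Lem{injective}) the LTS paths $\widehat{\pi_1}$ and $\widehat{\pi_2}$ are determined precisely by these lifted transitions.

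What remains to check is that these lifted transitions match, one for one and in order, the transitions appearing in a decomposition of $\widehat\pi$ in the sense of Page~\hyperlink{hr:decomp}{\pageref*{pg:decomp}}. I would do this by cases on the three subcases of Observation~\ref{obs:decompose parallel}, comparing each with the three parallel-composition rules of \Tab{CCS}. A left-only projection of $u$ corresponds to the first parallel rule, a right-only projection to the third, and a two-sided projection (necessarily with $\ell(u)=\tau$, $\ell(v)=c\in\HC$ and $\ell(w)=\bar c$) to the synchronisation rule; in that last case, one LTS transition of $\widehat\pi$ contributes \emph{two} sub-transitions, one to the left component of the LTS decomposition and one to the right, exactly mirroring $v$ and $w$. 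Since $\pi_1$ and $\pi_2$ are built by listing left- resp.\ right-projections in the order they appear in $\pi$, and the LTS decomposition of $\widehat\pi$ is defined the same way, alignment follows.

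The restriction case $\pi \Rrightarrow \pi'\backslash c$ is simpler: Observation~\ref{obs:decompose restriction} gives, for each transition of $\pi$, a unique Petri net projection, and the restriction rule of \Tab{CCS} gives a unique LTS decomposition of the corresponding transition of $\widehat\pi$; these coincide under $\widehat{\cdot}$ by \Thm{bisimulation} and \Lem{injective}. The relabelling case $\pi \in \pi'[f]$ proceeds identically, using Observation~\ref{obs:decompose relabelling} and the relabelling rule of \Tab{CCS}; here the choice of Petri net projection need not be unique, but any choice yields a valid LTS decomposition, consistent with the definition of decomposition for paths through $[f]$.

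No genuine obstacle arises; the whole argument is bookkeeping. The only point that might warrant care is making the statement and proof handle infinite paths uniformly with finite ones, which is immediate since the per-transition correspondence is local and \Lem{path transfer} already treats both lengths on the same footing.
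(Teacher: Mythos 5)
Your argument is correct and is essentially the paper's own (implicit) justification: the paper states this as an unproved observation, introduced by the remark that it is ``based directly on \Tab{PN-CCS} and the definition of $dec$'', and your per-transition bookkeeping---matching the three cases of Observation~\ref{obs:decompose parallel} against the three parallel-composition rules of \Tab{CCS}, and likewise for restriction and relabelling, via \Thm{bisimulation}, \Lem{injective} and \Lem{path transfer}---is exactly the elaboration it leaves to the reader. Your care on the two non-obvious points (the synchronisation case contributing one transition to each component, and non-uniqueness of projections under $[f]$ being harmless since only \emph{some} decomposition is claimed) is well placed.
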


\begin{proposition}\label{prop:enabled just}
Let $\pi$ be a path in the unmarked Petri net of CCS$^!$.
If $Y$ includes all actions $\alpha\in\Act$ for which $\pi$ is $\alpha$-enabled, and $Y\subseteq\HC$,
then $\widehat\pi$ is $Y\!$-just.
\end{proposition}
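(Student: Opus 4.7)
The plan is to argue coinductively. For each $Y\subseteq\HC$ define the class $J_Y$ of LTS paths $\widehat\pi$ such that the Petri net path $\pi$ from which $\widehat\pi$ arises (via \Lem{path transfer}) is $\alpha$-enabled only for $\alpha\in Y$. The proposition is exactly the inclusion $J_Y\subseteq\{Y\!\text{-just paths}\}$. Since $Y$-justness is the \emph{largest} family satisfying the five clauses of \Def{just path}, it suffices to verify that $\{J_Y\}_{Y\subseteq\HC}$ itself satisfies those clauses; the coinduction principle then delivers the inclusion for every $Y$.

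The finite-path clause and the suffix clause are essentially immediate. If $\widehat\pi\in J_Y$ is finite with final state $P_n$ and $P_n\ar{\alpha}P'$ in the LTS, then by \Thm{bisimulation} there is a transition $u$ enabled at $dec(P_n)$ with $\ell(u)=\alpha$. Because $\pi$ has no transitions past position $n$, the disjointness condition $\precond{v}\cap\precond{u_i}=\emptyset$ (for $i>n$) holds vacuously, so $\pi$ is $\alpha$-enabled and $\alpha\in Y$. For the suffix clause, enabling of a suffix of $\pi$ is a special case of enabling of $\pi$ itself, so $\widehat\pi\in J_Y$ forces every suffix of $\widehat\pi$ to lie in $J_Y$ as well.

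The three structural clauses each reduce to one of the transfer lemmas. When $\widehat\pi$ starts in $P|Q$, decompose $\pi\Rrightarrow\pi_1|\pi_2$ via \Obs{decompose parallel}; by \Obs{decompose projection} this projects to the required decomposition of $\widehat\pi$. Let $X$ (resp.\ $Z$) be the set of actions for which $\pi_1$ (resp.\ $\pi_2$) is enabled. \Lem{enabled path} shows $X\cup Z\subseteq Y\subseteq\HC$, so in particular $X,Z\subseteq\HC$; and \Lem{enabled path 2} rules out any $c\in X$ with $\bar c\in Z$, since otherwise $\pi$ would be $\tau$-enabled and $\tau\notin Y\subseteq\HC$. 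Thus $X\cap\bar Z=\emptyset$, $\widehat\pi_1\in J_X$ and $\widehat\pi_2\in J_Z$, as wanted. Restriction and relabelling are analogous, using \Obs{decompose restriction} with \Lem{enabled path 3}, and \Obs{decompose relabelling} with \Lem{enabled path 4}; for restriction, note that any non-handshake $\alpha$ automatically satisfies $c\neq\alpha\neq\bar c$, so \Lem{enabled path 3} applies and forces the witness set for $\pi'$ into $Y\cup\{c,\bar c\}$.

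Nothing here is genuinely hard; the one thing that must be checked carefully is that the hypothesis $Y\subseteq\HC$ is preserved under the clauses, since \Def{just path} only restricts its witness sets to $\HC$ implicitly. Concretely, the new witnesses $X$, $Z$, $Y\cup\{c,\bar c\}$ and $f^{-1}(Y)$ are each subsets of $\HC$, so the coinduction closes. The most delicate ingredient is the parallel case, where \Lem{enabled path} and \Lem{enabled path 2} must be combined in the right order to simultaneously secure $X\cup Z\subseteq Y$ and $X\cap\bar Z=\emptyset$; once this is in place, the rest is bookkeeping.
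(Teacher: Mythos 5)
Your proposal is correct and follows essentially the same route as the paper: it introduces the same auxiliary family of predicates (the paper calls it $Y$-justness${}_{en}$, you call it $J_Y$), verifies the five clauses of \Def{just path} using exactly the same transfer lemmas and observations, and concludes by the coinductive maximality of $Y$-justness. The one point you flag as delicate --- combining \Lem{enabled path} and \Lem{enabled path 2} in the parallel case --- is handled the same way in the paper's proof.
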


\newcommand{\ejustn}{just${}_{en}$}
\newcommand{\ejust}{just${}_{en}$\ }
\newcommand{\Obs}[1]{Observation~\ref{obs:#1}}
\begin{proof}
  Define a path $\eta$ in the LTS of CCS$^!$ to be $Y\!$-\emph{\ejustn}, for $Y\subseteq\Act$,
  if $\eta$ has the form $\widehat\pi$ for $\pi$ a path in the unmarked Petri net of CCS$^!$, and
  $Y$ includes all actions $\alpha\mathbin\in\Act$ for which $\pi$ is $\alpha$-enabled.
  Note that if $\eta$ is $Y\!$-\ejustn, it is also $Y'\!$-\ejust for any $Y\subseteq Y'\subseteq\Act$.
  We show that the family of predicates $Y\!$-justness${}_{en}$, for $Y\mathbin\subseteq \HC\!$,
  satisfies the five requirements of \Def{just path}.
  
  \begin{itemize}
  \item Let $\widehat \pi$ be a finite $Y\!$-\ejust path. Suppose the last state $Q$ of $\widehat\pi$ admits an action $\alpha\notin Y$,
  Then, using \Thm{bisimulation}, the last marking $dec(Q)$ of $\pi$ enables a
  transition labelled $\alpha$. Thus $\pi$ is $\alpha$-enabled, contradicting the $Y\!$-justness${}_{en}$ of $\pi$.

  \item Suppose $\widehat\pi$ is a $Y\!$-\ejust path of a process $P|Q$ {with $Y\mathbin\subseteq \HC\!$}.
  Then $Y$ includes all actions $\alpha\mathbin\in\Act$ for which $\pi$ is $\alpha$-enabled.
  Let $\pi_1$ and $\pi_2$ be the paths such that $\pi\Rrightarrow \pi_1|\pi_2$.
  By \Obs{decompose projection} $\widehat \pi$ can be decomposed into the paths $\widehat\pi_1$ of $P$ and $\widehat\pi_2$ of $Q$.
  Let $X \subseteq Act$ be the set of actions $\alpha$ for which $\pi_1$ is $\alpha$-enabled,
  and let $Z \subseteq Act$ be the set of actions $\alpha$ for which $\pi_2$ is $\alpha$-enabled.
  By definition, $\widehat\pi_1$ is $X$-\ejust and $\widehat\pi_2$ is $Z$-\ejustn. 
  
  If $\pi_1$ is $\alpha$-enabled then $\pi$ is $\alpha$-enabled by \Lem{enabled path}.
  This implies that $X\subseteq Y$.
  In the same way it follows that $Z\subseteq Y$.
  Now suppose $X\cap \bar{Z} \neq\emptyset$.
  Then $\pi_1$ is $c$-enabled and $\pi_2$ is $\bar c$-enabled, for some $c\in\HC$.
  So, by \Lem{enabled path 2}, $\pi$ is $\tau$-enabled, in contradiction with $\tau\not\in Y\subseteq \HC$.
  We therefore conclude that $X\cap \bar{Z} = \emptyset$.

  \item Suppose $\widehat\pi$ is a $Y\!$-\ejust path of a process $P\backslash c$.
  Then $Y$ includes all actions $\alpha\mathbin\in\Act$ for which $\pi$ is $\alpha$-enabled.
  Let $\pi'$ be the path such that $\pi\Rrightarrow \pi'\backslash c$.
  By \Obs{decompose projection} $\widehat \pi$ is a decomposition of the path $\widehat\pi'$ of $P$.
  Let $X \subseteq Act$ be the set of actions $\alpha$ for which $\pi'$ is $\alpha$-enabled.
  If $\pi'$ is $\alpha$-enabled with $c\neq\alpha\neq \bar c$ then $\pi$ is $\alpha$-enabled by \Lem{enabled path 3}.
  This implies that $X\setminus\{c,\bar c\} \subseteq Y$ and hence $X\subseteq \HC$.
  It follows that $\widehat\pi'$ is $X$-\ejustn, and hence $Y\mathord\cup\{c,\bar c\}$-\ejustn.

  \item Suppose $\widehat\pi$ is a $Y\!$-\ejust path of a process $P[f]$.
  Then $Y$ includes all actions $\alpha\mathbin\in\Act$ for which $\pi$ is $\alpha$-enabled.
  Let $\pi'$ be a path such that $\pi\in \pi'[f]$.
  By \Obs{decompose projection} $\widehat \pi$ is a decomposition of the path $\widehat\pi'$ of $P$.
  Let $X \subseteq Act$ be the set of actions $\alpha$ for which $\pi'$ is $\alpha$-enabled.
  If $\pi'$ is $\alpha$-enabled then $\pi$ is $f(\alpha)$-enabled by \Lem{enabled path 4}.
  This implies that $f(X) \subseteq Y$.
  It follows that $\widehat\pi'$ is $X$-\ejustn, and hence $f^{-1}(Y)$-\ejustn.

  \item Suppose $\pi'$ is a suffix of an $Y\!$-\ejust path $\pi$.
  Then $Y$ includes all actions $\alpha\mathbin\in\Act$ for which $\pi$ is $\alpha$-enabled, and
  thus all $\alpha$ for which $\pi'$ is $\alpha$-enabled. Hence $\pi'$ is $Y\!$-\ejustn.
  \end{itemize}
  Since $Y\!$-justness is the largest family of
  predicates that satisfies those requirements, $Y\!$-justness${}_{en}$ implies $Y\!$-justness.
\qed\end{proof}

\begin{corollary}\label{cor:enabling transfer}
Let $\pi$ be a path starting from $dec(P)$ in the unmarked Petri net of CCS$^!$.
If $\pi$ is complete, then so is $\widehat\pi$.
Moreover, if $\widehat\pi$ is $a$-enabled, for $a\mathbin\in\HC$, then so is $\pi$.
\qed
\end{corollary}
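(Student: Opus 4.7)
The plan is to derive both assertions as short consequences of \Prop{enabled just}, applied with a canonical choice of the witness set $Y$.

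First I would introduce $Y_0 := \{\alpha \in \Act \mid \pi \text{ is } \alpha\text{-enabled}\}$, the set of all actions continuously enabled on $\pi$ from some position onwards. Since $\pi$ is complete, $\pi$ is by definition not $o$-enabled for any non-blocking action $o \in \OA \cup \{\tau\}$; hence $Y_0 \subseteq \HC$. This $Y_0$ satisfies the hypotheses of \Prop{enabled just}, which immediately yields that $\widehat\pi$ is $Y_0$-just, and in particular just. By the characterisation noted at the end of \Sect{justness}---that in the LTS of CCS$^!$ a path is just iff it is complete---$\widehat\pi$ is complete, establishing the first assertion.

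For the second assertion I would work under the same hypothesis (the natural setting in which the ``moreover'' applies, since the corollary is used in \Sect{proof conclusion} only for complete paths). With the same $Y_0 \subseteq \HC$, \Prop{enabled just} again gives that $\widehat\pi$ is $Y_0$-just. The assumption that $\widehat\pi$ is $a$-enabled for $a \in \HC$ means, by \Def{just path}, that $a$ belongs to every $Y \subseteq \HC$ for which $\widehat\pi$ is $Y$-just. Instantiating this at $Y := Y_0$ forces $a \in Y_0$, i.e.\ $\pi$ is $a$-enabled in the Petri-net sense.

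There is no serious obstacle; all the substantial work is already encapsulated in \Prop{enabled just}, which is precisely what lifts the Petri-net notion of a transition being continuously enabled from some position onwards to the LTS-based justness machinery of CCS$^!$. The corollary itself is a short unpacking of the definitions of complete path and of $a$-enabledness in the two models.
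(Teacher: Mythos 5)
Your proposal is correct and matches the paper's intended argument: the corollary is stated with an immediate \qed precisely because it follows from \Prop{enabled just} by instantiating $Y$ with the set $Y_0$ of all actions for which $\pi$ is $\alpha$-enabled (which lies in $\HC$ by completeness of $\pi$), then using that just $=$ complete for LTS paths and that $a$-enabledness of $\widehat\pi$ forces $a\in Y_0$. Your explicit note that the ``moreover'' clause is read under the completeness hypothesis is consistent with the only use of the corollary in \Sect{proof conclusion}.
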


\noindent
\textbf{Proof of \Thm{no fair scheduler}}~
Suppose there \emph{does} exist a CCS$^!$ expression $F$ as considered in \Thm{no fair scheduler}.
Then it suffices to show that $\denote{F}$ is a net $N$ as considered in \Thm{no fair PN scheduler}.
Thus, we show that $\denote{F}$ satisfies the four properties of \Thm{no fair PN scheduler}.
\begin{enumerate}
\item Let $\pi$ be a complete path of $\denote{F}$ that has finitely many occurrences of $r_i$.
  By \Lem{path transfer} $\widehat\pi$ is a path of $F$ that has finitely many occurrences of $r_i$.
  By \Cor{enabling transfer} it is complete. By Requirement 1 of \Thm{no fair scheduler}, $\widehat\pi$
  is $r_i$-enabled. So by \Cor{enabling transfer}, $\pi$ is $r_i$-enabled.
\item Let $\pi$ be a complete path of $\denote{F}$. Then $\widehat\pi$ is a complete path of $\denote{F}$.
  By Requirement 2 of \Thm{no fair scheduler}, on $\widehat\pi$ each $r_i$ is followed by a $t_i$.
  Using \Lem{path transfer}, the same holds for $\pi$.
\item Let $\pi$ be a finite path of $\denote{F}$.
  Then $\widehat\pi$ is a path of $F$.
  By Requirement 3 of \Thm{no fair scheduler}, on $\widehat\pi$, and thus on $\pi$, are no
  more occurrences of $t_i$ than of $r_i$.
\item Let $\pi$ be a path of $\denote{F}$, featuring two occurrences of $t_i$ and $t_j$ ($i,j\in\{1,2\}$).
  These occurrences also occur on $\widehat\pi$.
  By Requirement 4 of \Thm{no fair scheduler}, an action $\e$ occurs between them.
\qed
\end{enumerate}

\section{Concluding Remarks}\label{sec:conclusion} 
%%%%%%%%%%%%%%%%%%%%%%%%%%%%%%%%%%%%%%%%%%%%%%%%%%%%%%%%%%%%%%
\newcommand{\prio}{\mathbin{\rhd}}
This paper presented a simple fair scheduler---one that in suitable variations occurs in many
distributed systems---of which no implementation can be expressed in CCS\@.
In particular, Dekker's and Peterson's mutual exclusion protocols cannot be rendered correctly in CCS.
These conclusions remain true if CCS is extended with progress and certain fairness assumptions, namely justness as presented in this paper.
However, as shown in~\cite{CorradiniEtAl09}, it is possible to correctly render Dekker's
protocol---and thereby a fair scheduler---in CCS enriched with a stronger fairness assumption. 
We argue, however, that such fairness assumptions should not be made lightly, as in certain contexts they
allow the derivation of false results.

It does not appear hard to extend CCS with an operator that enables expressing this fair
scheduler without relying on a fairness assumption.
In \cite{GH14} for instance we give a simple specification of a fair scheduler in an extension of
CCS with broadcast communication. In \cite{CDV09} it is shown that it suffices (for the correct specification
of Dekker's algorithm) to extend a CCS-like process algebra with non-blocking reading actions.
A priority mechanism \cite{CLN01} would also be sufficient.

Let $\prio$ for instance be a +-like operator that schedules an action from its left argument if
possible, and otherwise runs its right argument. Then $\widehat F_1$, with
\[
F_1\stackrel{\it def}{=} (r_1.t_1.e.F_2) \prio (\tau.F_2)\quad\mbox{and}\quad
F_2\stackrel{\it def}{=} (r_2.t_2.e.F_1) \prio (\tau.F_1)
\]
appears to be a fair scheduler. Here
$\widehat\cdot$ is the CCS-context specified in \Sect{characterisation}.

$F$ is basically a round-robin scheduler which checks whether $r_1$ is enabled; if so, it performs
the sequence $r_1.t_1.e$; if not, it does an internal action and tries to perform $r_2$.

An interesting question is what kind of extension of CCS is needed to enable specifying all
processes of this kind. It appears that the formalism CCS+LTL that we employed in \Sect{fair spec} to
specify our fair scheduler can be used to specify a wide range of similar systems.
Such a specification combines a CCS specification with a fairness component, consisting
of a set of LTL formulas that narrows down the complete paths of the specified process.
An intriguing challenge is to find an extension of CCS, say by means of extra operators,
that makes the fairness component redundant, i.e.\ an extension such that any CCS+LTL process can be expressed in
the extended CCS without employing a fairness component.

For certain properties of the form $(\bigvee_{i}\mathbf{G}\mathbf{F}a_{i}) \Rightarrow (\bigvee_{j}\mathbf{G}\mathbf{F}b_{j})$
where the $a_i$ and $b_i$ are action occurrences%
---hence for specific strong fairness properties---one can
define a \emph{fairness operator} that
transforms a given LTS into a LTS that satisfies the property~\cite{PuhakkaValmari01}.
This is done by eliminating all the paths that do not satisfy the property via a carefully designed parallel composition. 
The fairness operator can be expressed in a variant of the process algebra CSP\@.
The question above asks whether something similar can be done, in a more expressive process algebra,
for arbitrary LTL properties, or perhaps for a larger class of fairness properties.

 \begin{acknowledgements}
We gratefully thank the anonymous referees. Their reports showed deep insights in the
material, and helped a lot to improve the quality of the paper. In particular, the link between our fair
scheduler and Peterson's mutual exclusion protocol was made by one of the referees.
 \end{acknowledgements}

\newcommand{\Ruediger}{-R.}
\def\SSort#1{}\def\NSort#1{}

\end{document}